\newcommand{\xqedhere}[2]{%
  \rlap{\hbox to#1{\hfil\llap{\ensuremath{#2}}}}}
\begin{document}

\title{Virus-immune dynamics determined by prey-predator interaction network and epistasis in viral fitness landscape}


\author{Cameron J. Browne      \and Fadoua Yahia
}


\institute{C.J. Browne \at Mathematics Department, University of Louisiana at Lafayette, Lafayette, LA \\
            \email{cambrowne@louisiana.edu}    \and   F. Yahia \at Mathematics Department, University of Louisiana at Lafayette, Lafayette, LA    
  }

\maketitle
\begin{abstract}

Population dynamics and evolutionary genetics underly the structure of ecosystems, changing on the same timescale for interacting species with rapid turnover, such as virus (e.g. HIV) and immune response.   Thus, an important problem in mathematical modeling is to connect ecology, evolution and genetics, which often have been treated separately.  Here, extending analysis of multiple virus and immune response populations in a resource - prey (consumer) - predator model from Browne and Smith \cite{browne2018dynamics}, we show that long term dynamics of viral mutants evolving resistance at distinct epitopes (viral proteins targeted by immune responses) are governed by epistasis in the virus fitness landscape. 
In particular, the stability of persistent equilibrium virus-immune (prey-predator) network structures, such as nested and one-to-one, and bifurcations are determined by a collection of \emph{circuits} defined by combinations of viral fitnesses that are minimally additive within a hypercube of binary sequences representing all possible viral epitope sequences ordered according to \emph{immunodominance} hierarchy.   Numerical solutions of our ordinary differential equation system, along with an extended stochastic version including random mutation, demonstrate how pairwise or multiplicative epistatic interactions shape viral evolution against concurrent immune responses and convergence to the multi-variant steady state predicted by theoretical results.  Furthermore, simulations illustrate how periodic infusions of subdominant immune responses can induce a bifurcation in the persistent viral strains, offering superior host outcome over an alternative strategy of immunotherapy with strongest immune response.

\keywords{virus-immune response model \and predator-prey network \and fitness landscape \and  HIV quasispecies \and epistasis \and eco-evolutionary dynamics }
\end{abstract}

\section{Introduction}
The evolution of ecological networks depends on the underlying population dynamics, genetics, and structure of the composite species.   Interactions between populations, for example prey-predator or competitive forces, constrain and shape the network, in concert with evolution also diversifying and adapting species variants.  The complexity of these eco-evolutionary dynamics have challenged researchers to classify patterns in rapidly evolving communities.  In the single species context, theoretical models of fitness landscapes have simplified the study of adaptation by reducing individuals to either genotypes or phenotypes, whose reproductive success is determined by a single trait, namely fitness.  Although a multitude of evolutionary pathways exist, evolution predictability can be driven by genetic variant constraints.  A more analytically challenging scenario is the evolution or coevolution of prey-predator systems whereby the predator range and selection of prey resistance balanced by constraints on reproduction together form a dynamic fitness landscape.  Examples include phage-microbe and virus-immune response networks, with the latter, specifically HIV, being a primary motivation for this work.

During HIV infection, a diverse collection of viral strains, often called a quasispecies, compete for a target cell population (mainly CD4$^{+}$ T-cells) while the host immune response population (e.g. CD8$^{+}$ T-cells) predates and proliferates upon pathogen recognition.  HIV can also rapidly evolve resistance to immune response attack at different epitopes (proteins in virus genome displayed on infected cells), inducing a dynamic network of interacting viral and immune variants.  Deciphering patterns in the trajectories of virus and immune response populations, along with their interactions, can advance biological theory and have applications for vaccine or immunotherapy development \cite{Walker,chakraborty2017rational}.  Analogous questions in other biological systems, such as phage-microbe communities, have mostly led to models of species compositions in the face of ecological interactions \emph{independent} of explicit genetic mutations.  The properties of these ecosystem models have classically  been studied using dynamical systems, where concepts such as stability, equilibria and population persistence are used to characterize feasible species assemblages.  Recently, generalized Lotka-Volterra (L-V), chemostat and ecosystem models have been utilized to understand how different motifs, such as nested or one-to-one networks, are built through invading species and convergence to stable equilibria \cite{jover2013mechanisms,korytowski2015nested,browne2016global}.  Additionally, several works have developed polymorphic evolution sequences, where an individual based stochastic model converges to solutions of L-V equations in the limit of small mutation rates and large populations \cite{champagnat2011polymorphic,costa2016stochastic}. However, how population dynamics, genetics and evolution together determine network structures for rapidly evolving ecosystems is not generally established.

From an evolutionary genetics perspective, a high mutation rate allows HIV populations to change and explore sequence space on short timescales,  lending themselves to being studied as model biological systems, along with the significant clinical interest.   Disease progression, escape pathways, and treatment fate depend on viral fitness.    To estimate \emph{in vivo} fitness landscapes, several evolutionary models have linked fitness to viral genotype frequencies, for example the quasispecies model \cite{seifert2015framework} and multi-strain versions of a standard within-host virus model \cite{Vitaly1}.   These models can be mathematically tractable, allowing for analysis of equilibria and stability in terms of mutation rates and fitness quantities.  In particular, the common setting of finite binary sequences, the assumed form of viral genotypes in this current paper, enables geometric or algebraic properties of the binary hypercube space to be exploited for characterizing equilibrium distributions \cite{bratus2019rigorous}.   Inclusion of viral mutation from multiple dynamic immune response populations complicates matters, as neither the virus strain fitness or immune response strength simply determine epitope escape \cite{Vitaly3,leviyang2015broad}.  However, correlation analysis \cite{liu2013vertical} and a statistical physics model of viral sequences with \emph{epistasis} (discussed further below)  \cite{Barton}  applied to HIV patient datasets have found determinants viral evolution based on viral fitness landscapes and immnodominance hierarchies (relative expansion levels of the responding immune populations).  

Epistasis refers to nonlinearity in the fitness landscape or dependence of fitness change from a mutation on the genetic background.  Epistatic interactions play a critical role in fitness landscape features, and ultimately evolutionary trajectories,  thus measuring epistasis has received much attention when studying evolution.   
However, the large amount of interactions within a genome challenge both theoretical and experimental quantification of epistasis.  Several methods for computing epistasis have been proposed \cite{mani2008defining,ferretti2016measuring}.  Here we focus on the concept of circuits introduced by Beerenwinkel et al. \cite{beerenwinkel2007epistasis} as fundamental measures of epistatic interactions and underlying geometry of the fitness landscape.   Circuits have been utilized to characterize single species fitness landscapes in both theoretical and data-driven studies \cite{hallgrimsdottir2008complete,crona2017inferring,gould2018microbiome}.  

In this paper, we investigate how epistasis impacts evolution of prey-predator interacting species, specifically how  virus fitness landscapes affect the overall virus (prey) and host immune response (predator) ecosystem dynamics.
We show that connecting population genetics and dynamics offers a way to extract biological meaningful relationships from the equilibria stability conditions of a complex network differential equation for interacting species' variants.  We build off of our previous analysis of a multi-variant virus-immune model \cite{browne2018dynamics}, which established different regimes of attractors, each with a distinct set of viral strains persisting by extending Lyapunov function methods first applied to generalized L-V equations \cite{Goh,hofbauer1998evolutionary}.  In particular, the stability of certain equilibria structures and associated bifurcations are sharply determined by relevant circuits, which recast strain invasion rates as algebraic combinations of binary sequences shaping viral fitness landscape epistasis.  Furthermore, we simulate eco-evolutionary dynamics showing that our theoretical calculations can carry over to an extended stochastic model with mutations, and also illustrate how distinct immunotherapies can be incorporated in our system to shed light on potential strategies.  We conclude with a discussion on how our study supports the utility of evolutionary genetics concepts, in particular the construction of circuits for measuring epistasis of fitness landscapes, applied to characterizing bifurcations in virus-immune response population dynamics, which represents a specific example of a prey-predator ecosystem model.

\section{General model and binary sequence case}\label{BMcase}
We begin by considering the following rescaled model introduced to describe a network of viral and immune response variants during host infection \cite{browne2018dynamics}:
\begin{align}
\dot x &= 1-x- x\sum_{i=1}^m \mathcal R_i y_i, \notag \\
\dot y_i &= \gamma_i y_i\left(\mathcal R_i x -1 -\sum_{j=1}^n a_{ij}Z_j \right), \quad i=1,\dots,m \label{ode3}  \\
  \dot Z_j &= \frac{\sigma_j}{\rho_j} Z_j\left(\sum_{i=1}^m a_{ij} y_i - \rho_j \right), \quad j=1,\dots,n. \notag
\end{align}
 Here $x$ denotes the population of target cells, along with $m$ competing virus strains ($y_i$ denotes strain $i$ infected cells), and $n$ variants of immune response ($Z_j$).  The parameter $\mathcal R_i$ represents the basic reproduction number of virus strain $i$.  The $m\times n$ nonnegative matrix $A=\left(a_{ij}\right)$ describes the virus-immune \emph{interaction network}, which determines each immune response population's avidity to the distinct viral strains.  Then $\rho_j$ represents the reciprocal of the immune response fitness excluding the (rescaled) avidity to each strain $j$.  Additionally, $\gamma_i$ and $\sigma_j$ represent scaling factors for corresponding viral and immune variant growth rates.      

Each virus strain $i$ (cells infected with strain $i$), $y_i$, has a set of immune responses, $Z_j$, that recognize and attack $y_i$.  We call this set the \emph{epitope set of} $y_i$, denoted by $\Lambda_i$ where $\Lambda_i:=\left\{ j\in [1,n] : a_{ij}>0 \right\}$.
Here $j\in \Lambda_i$ if $y_i$ is \emph{not} completely resistant to immune response $Z_j$.  We remark that the system generally models a tri-trophic ecosystem with a single resource consumed by $m$ prey (or consumer) populations subject to potential attack by $n$ distinct predators (prey $i$ subject to attack by any predator $j$ in $\Lambda_i$).  For example, this model can describe bacteria-phage communities in a chemostat (or single resource environment), where the set $\Lambda_i$ classifies the infection network (whom infects who).   

In this article, we specialize system \eqref{ode3} to the case where each virus strain is represented by a binary sequence of length $n$, exactly coding the loci (epitopes) for which $n$ specific immune responses can recognize and attack.  Note that consideration of binary sequences is perhaps the most common way to represent distinct variants which can differ at some loci of their genome (e.g. quasispecies, haploid models).  A major goal of this work is to connect concepts in evolution and genetics with population dynamics, so this special case is an appropriate setting.  Here the $n$ viral epitopes have two possible alleles: the wild type (0) and the mutated type (1) which has escaped recognition from the cognate immune response.   For each virus strain $y_i$, we associate a binary sequence of length $n$,  $y_i\sim\mathbf i=\left(i_1,i_2,\dots,i_n\right)\in\left\{0,1\right\}^n$, coding the allele type at each epitope.   We assume that each immune response ($Z_j$) targets its specific epitope at the specific rate $a_j$ for virus strains containing the wild-type (allele 0) epitope $j$, whereas $Z_j$ completely loses ability to recognize strains with the mutant (allele 1) epitope $j$, i.e.
\begin{align}
\forall i\in[1,m],  \ y_i\sim\mathbf i \ \Rightarrow \  a_{ij}=(1-i_j)a_j \ \   (\text{or} \ \ a_{ij}=a_j \ \  \text{if} \ \ j\in\Lambda_i,  \ a_{ij}=0 \ \text{otherwise}), \label{assumeaj}
\end{align} 
with $\Lambda_i$ the \emph{epitope set} defined earlier for model \eqref{ode3}  (see Fig. \ref{fig2}).  For example, the wild-type (founder) virus strain, denoted here by $y_w$, is represented by the sequence of all zeroes, denoted $\mathbf 0$, and epitope set $\Lambda_w=\left\{1,\dots,n\right\}$ since it is susceptible to attack by all immune responses. With assumption \eqref{assumeaj}, we can define an \emph{immune reproduction number} corresponding to each $Z_j$: 
 \begin{align}
 \mathcal I_j:=\frac{a_j}{\rho_j} .\label{IR0}
 \end{align} 
Then there are $m=2^n$ possible viral mutant strains, each distinguished by binary sequence $\mathbf i=\left(i_1,i_2,\dots,i_n\right)$ and denoted $y_{\mathbf i}$, governed by the following system:
\begin{align}
\dot x &= 1-x- x\sum_{\mathbf i\in\left\{0,1\right\}^n} \mathcal R_{\mathbf i} y_{\mathbf i}, \notag \\
 \dot y_{\mathbf i} &= \gamma_{\mathbf i} y_{\mathbf i}\left(\mathcal R_{\mathbf i} x -1 -\sum_{j=1}^n (1-i_j) z_j \right), \quad \mathbf i\in\left\{0,1\right\}^n,  \label{odeS} \\
  \dot z_j &= \frac{\sigma_j}{s_j} z_j\left(\sum_{\mathbf i\in\left\{0,1\right\}^n} (1-i_j)y_{\mathbf i} -s_j \right), \quad j=1,\dots,n \notag
\end{align}
where $z_j=a_j Z_j$ and $s_j=1/\mathcal I_j$.


 \begin{figure}[t]
\centering\subfigure[][]{\label{fig2a}\includegraphics[width=12cm,height=5cm]{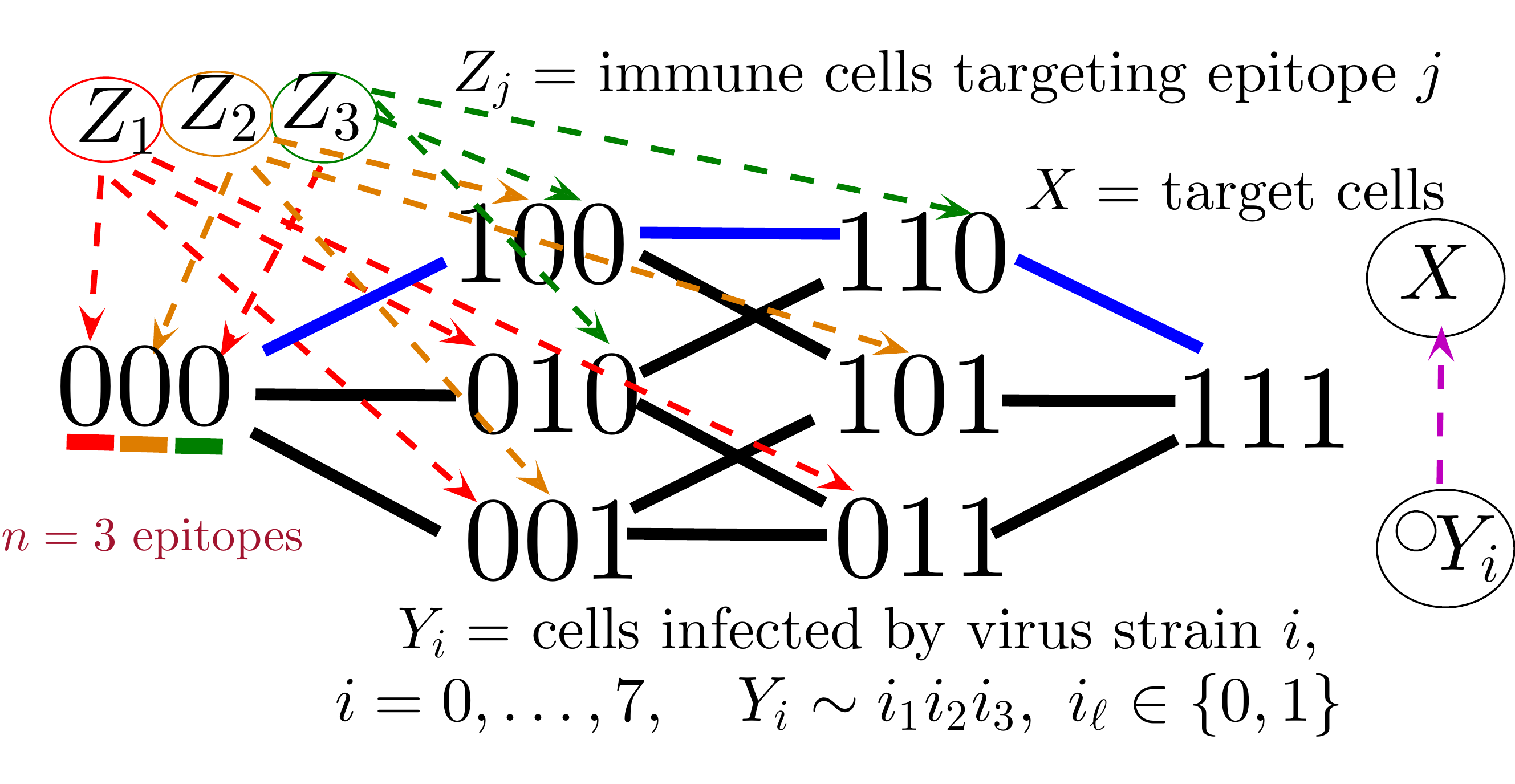}} \\
\subfigure[][]{\label{fig2b}\includegraphics[width=5.5cm,height=3cm]{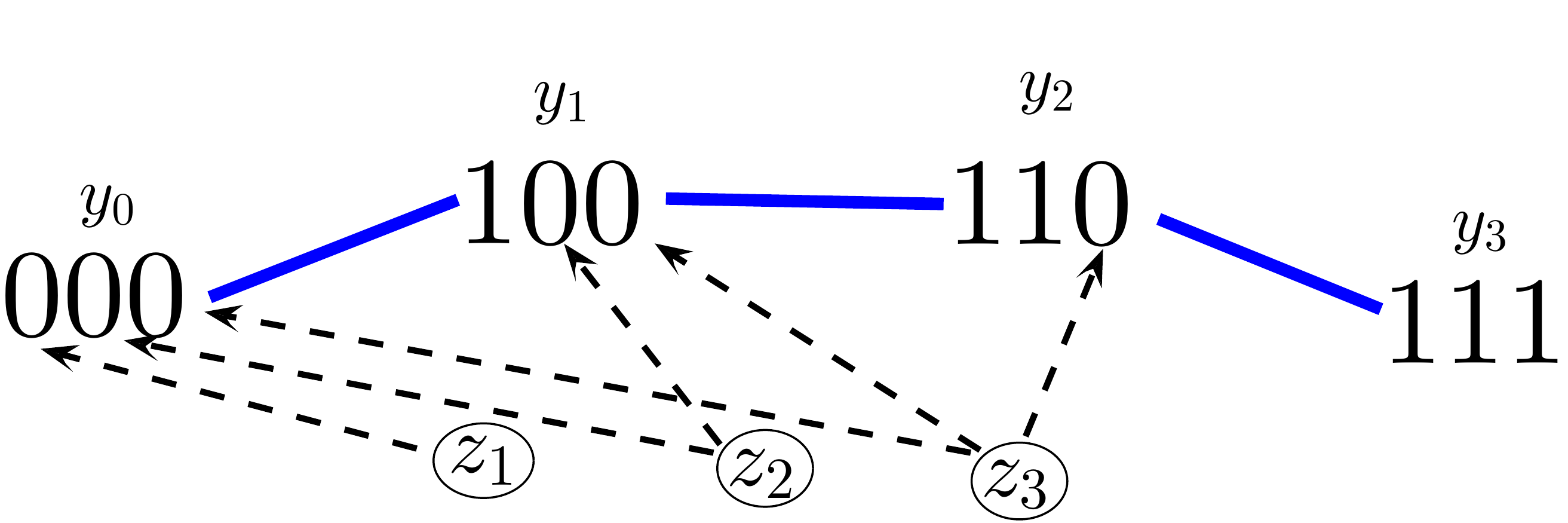}}
\subfigure[][]{\label{fig2c}\includegraphics[width=4.35cm,height=3cm]{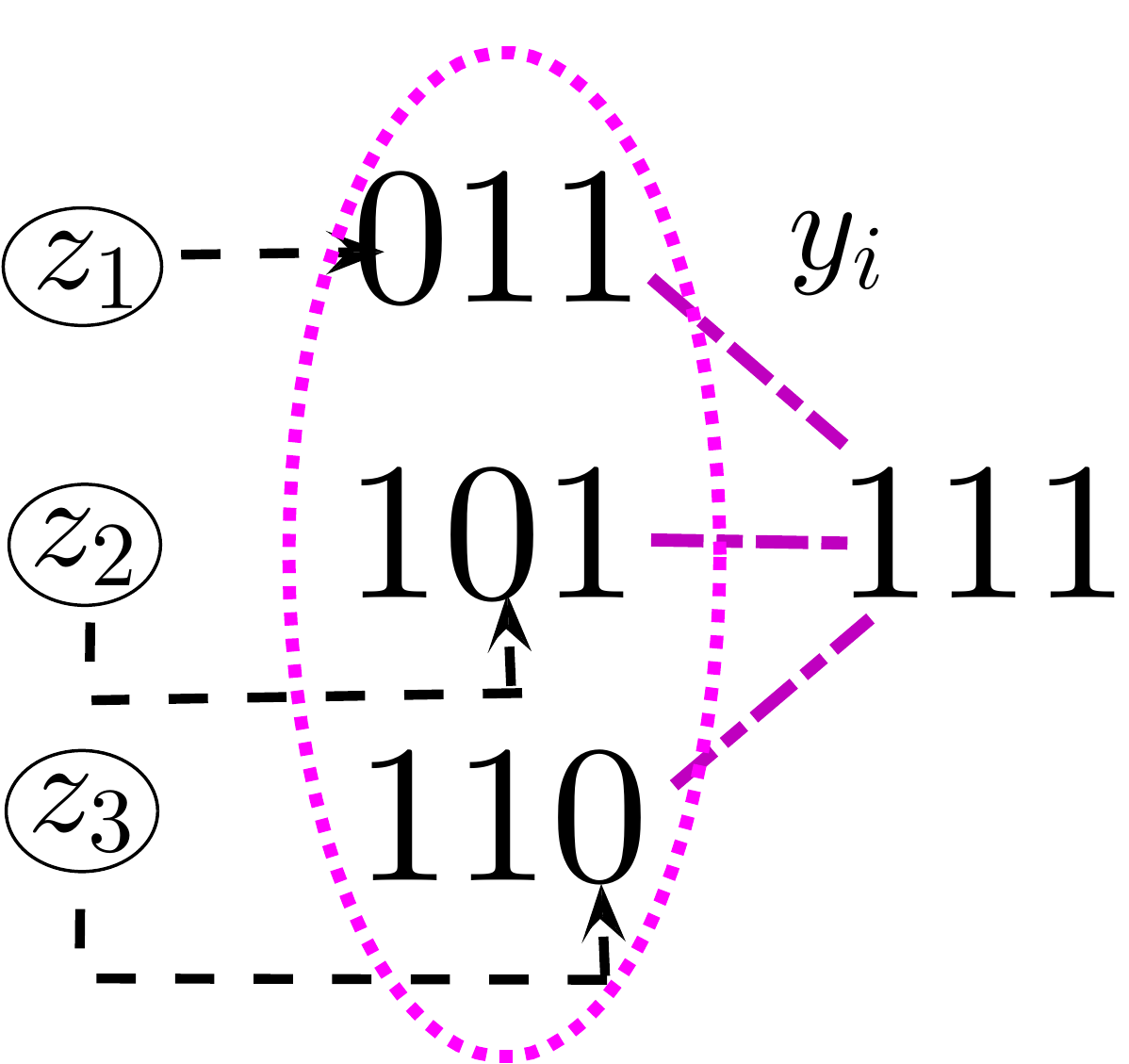}}
\subfigure[][]{\label{fig2d}\includegraphics[width=4.35cm,height=3cm]{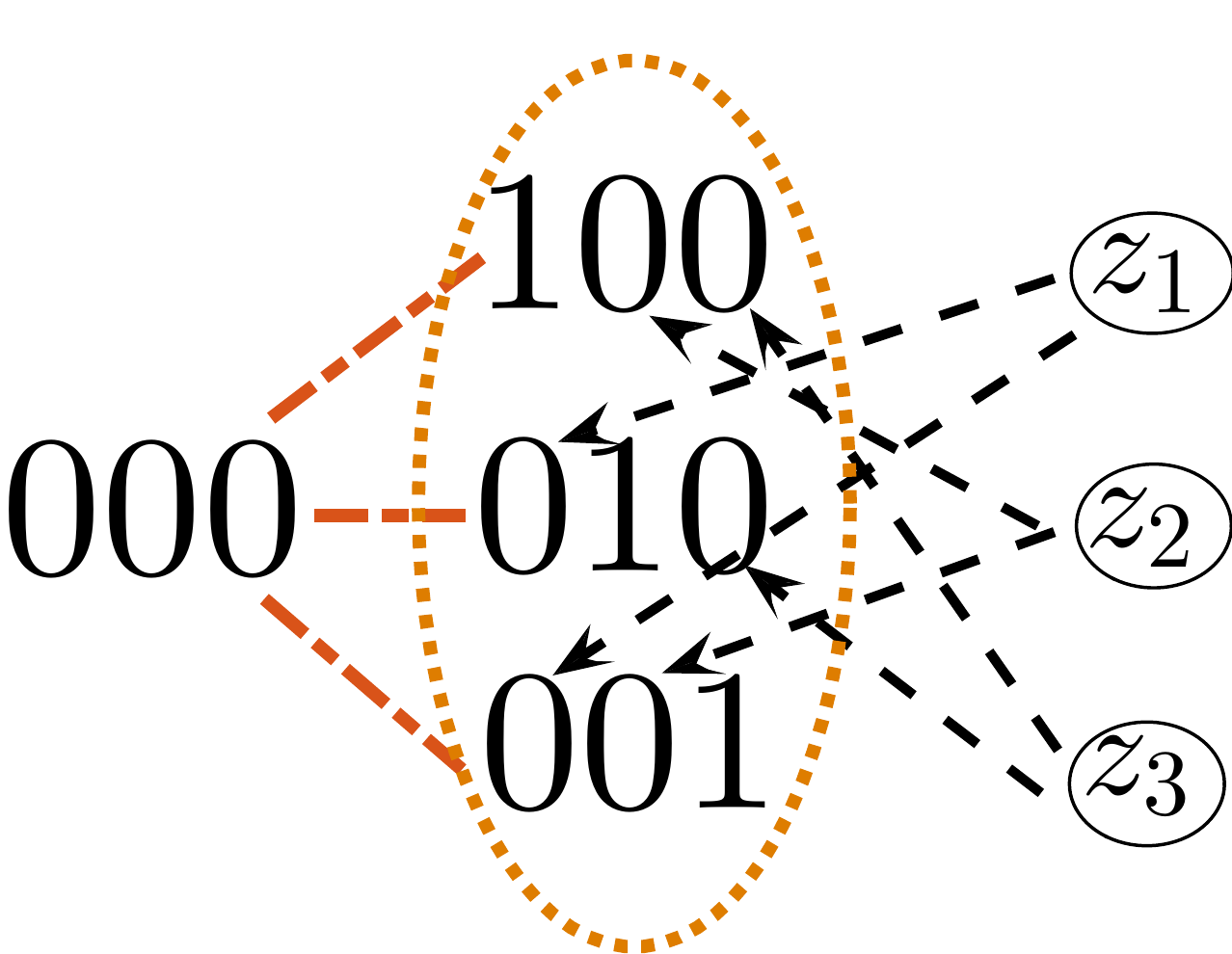}}
\caption{\textbf{(a)}  The virus-immune network on $n=3$ epitopes for model \eqref{odeS} overlying the hypercube. Here each viral strain $y_i$, $i=0,\dots,7$, is associated with a unique binary string $\mathbf i\in\left\{0,1\right\}^3$ coding their allele type, susceptible (0) or resistant (1), at each epitope.  Immune response $z_j$ attacks $y_i\sim \mathbf i$ if $i_j=0$, or equivalently if $j$ is in epitope set of $y_i$ ($j\in\Lambda_i$). The wild-type virus, $y_0\sim 000$, can evolve resistance to each epitope-specific immune response $z_j$ by successive single epitope mutations forming a path in the hypercube graph to the completely resistant viral strain ($111$).  The number of epitope mutations which viral strain $y_i$ has accumulated is $d(y_i,y_0)$ (Hamming distance between $\mathbf i$ and $000$).   Note that system \eqref{odeS} does not explicitly include mutation between viral strains.  \textbf{(b)}  The perfectly nested network, as a subgraph of the hypercube.  In this case, sequential mutations of epitopes appear in immunodominance order with specialist to generalist virus (prey) resistance and immune (predator) attack. \textbf{(c)}  The one-to-one network, with strain-specific immune responses, is representative of a completely modular ecosystem.   \textbf{(d)}  The $\leq 1$ mutation network signifyies constrained evolution.   All three subgraphs appear as feasible equilibrium structures of the system and are analyzed in Section \ref{mrsec}. }
  \label{fig2}
  \end{figure}

The $2^n$ potential virus strains can be viewed in a fitness landscape; each strain $\mathbf i$ is a vertex in an $n$-dimensional \emph{hypercube graph} with fitness $\mathcal R_{\mathbf i}$, as shown in Figs. \ref{fig2a} and \ref{fig2d} in the case of $n=3$ and $n=2$ epitopes.  Viral strains $y_{\mathbf i}$ and $y_{\mathbf k}$ are connected by an edge,  if the sequences $\mathbf i$ and $\mathbf k$ differ in exactly one bit, i.e. their \emph{Hamming distance} -- denoted by $d(y_{\mathbf i},y_{\mathbf k})$ -- is one.   Each mutation of an epitope comes with a fitness cost, so we assume that 
\begin{align}\label{fitnesscost}
\text{If} \ d(y_{\mathbf i}, y_{\mathbf k})=1 \ \text{and} \ d(y_{\mathbf i},y_w)<d(y_{\mathbf k},y_w), \ \text{then} \  \mathcal R_{\mathbf i} > \mathcal R_{\mathbf k} .
\end{align}
The occurrence of fitness costs (in reproduction rate) for gaining resistance to predation is a general concept in eco-evolutionary systems, for example in bacteria-phage networks.
Finally we say that an immune response $z_j$ is \emph{immunodominant} over another immune response $z_k$ if $\mathcal I_j>\mathcal I_k$ and assume, without loss of generality, the ordered \emph{immunodominance hierarchy}; \begin{align}\label{immunodom}
 \mathcal I_1\geq \mathcal I_2 \geq \dots \geq \mathcal I_n, \ \ \text{i.e.}  \ \  s_1\leq s_2 \leq \dots \leq s_n.
\end{align}

System \eqref{odeS} generalizes many previous model structures in the sense that they can be seen as subgraphs of our ``hypercube network''.  For instance, the ``strain-specific'' (virus-immune response) network \cite{nowak1996population} (also called ``one-to-one network'' in phage-bacteria models \cite{jover2013mechanisms,korytowski2015nested}) is equivalent to restricting \eqref{odeS}  to the $m=n$ viral strains which have mutated $n-1$ epitopes (Figure \ref{fig2b}).  The ``perfectly nested network'' restricts \eqref{odeS} to the $m=n+1$ viral strains which have sequential epitope mutations in the order of the immunodominance hierarchy (Figure \ref{fig2b}).  Nested networks were considered in HIV models \cite{browne2016global}, along with phage-bacteria models \cite{korytowski2015nested}, and may be a common persistent structure in ecological communities \cite{gurney2017network}.  The ``full hypercube network'' has been considered for modeling CTL escape patterns in HIV infected individuals \cite{Althaus,vanDeutekom}.    

\section{Necessary population dynamics/genetics definitions and results}
\subsection{Stability and persistence}
First we review some relevant definitions and results on the equilibria and asymptotic dynamics valid in the general model \eqref{ode3} that are further detailed in prior work \cite{browne2018dynamics}.  For a non-negative equilibrium point, $\mathcal E^*=\left(x^*,y^*, Z^* \right)\in\mathbb R_+^{1+m+n}$, define the ``\emph{persistent variant sets}'' associated with $\mathcal E^*$ as:
\begin{align}
\Omega_y= \left\{i \in [1,m]: y^*_i>0 \right\} \quad \text{and} \quad \Omega_z= \left\{j \in [1,n]: Z^*_j>0 \right\}.
\end{align}
  In addition, define the following subsets of $\mathbb R_+^{1+m+n}$:
 \begin{align}
 \Omega &= \left\{ \left(x,y, Z\right)\in \mathbb R_+^{1+m+n} \ | \  y_i,z_j>0,\ i \in\Omega_y, j\in\Omega_z\right\}, \  \Gamma_{\Omega} = \Omega \cap \left\{    y_i,z_j=0, \ i\notin\Omega_y,j\notin\Omega_z\right\}.
 \end{align}
 Here $\Gamma_{\Omega}$, consisting of only those state vectors having the same set of positive and zero components as equilibrium $\mathcal E^*$, is called the \emph{positivity class} of $\mathcal E^*$.  Notice that the dimension of the subset $\Gamma_{\Omega}$ is $1+|\Omega_y|+|\Omega_z|$, where the notation $|\Omega_y|$ ($|\Omega_z|$) denotes the \emph{cardinality} of the set $\Omega_y$ ($\Omega_z$).  The equilibrium $\mathcal E^*$ must satisfy the following equations:
\begin{align}
\sum_{i\in\Omega_y} a_{ij} y_i^* &= \rho_j, \quad j\in \Omega_z  \label{LVequil1}  \\
\sum_{j\in\Omega_z} a_{ij} Z_j^* &= \mathcal R_i x^* -1,  \quad i\in \Omega_y  \label{LVequil2} \\ 
1+ \sum_{i\in\Omega_y} \mathcal R_i y_i^* &= \frac{1}{x^*}  \label{genEqcon} 
\end{align}
 We note that $\mathcal R_i>1,\ i\in \Omega_y$ must hold, even in the absence of immune response.

The following proposition provides the condition for uniqueness of an equilibrium within a positivity class, and shows that in such equilibria the number of virus strains either is equal to or exactly one more than the number of immune responses.
\begin{proposition}\cite{browne2018dynamics}\label{prop33}
Suppose the equilibrium $\mathcal E^*=\left(x^*, y^*, Z^* \right)$ exists in positivity class $\Gamma_{\Omega}$, where $\left(y^*,Z^*\right)$ satisfy the linear system of equations \eqref{LVequil1}-\eqref{LVequil2} and the cardinality of $\Omega_y$ and $\Omega_z$ are $|\Omega_y|=m'$ and $|\Omega_z|=n'$.  Then $\mathcal E^*$ is the unique equilibrium in $\Gamma_{\Omega}$, i.e. $\vec v=\left(y^*,Z^*\right)^T$ is the unique solution to \eqref{LVequil1}-\eqref{LVequil2}, if and only if ${\rm Ker}(A')^T\cap \vec{\mathcal R}'^{\perp}=\left\{0\right\}$ and ${\rm Ker}(A')=\left\{0\right\}$.
Moreover, if $\mathcal E^*$ is  not unique in its positivity class $\Gamma_{\Omega}$, then $\Gamma_{\Omega}$ contains an infinite number (a continuum) of equilibria.  Conversely, if $\mathcal E^*$ is unique in a positivity class $\Gamma_{\Omega}$ (with $m'$ and $n'$ persistent virus and immune responses), 
 then one of the following holds:
\begin{itemize}
\item[(i)] $m'=n'$, and $x^*=1/\left(1+(\vec{\rho}\,')^T(A')^{-1}\vec{\mathcal R}'\right)$.
\item[(ii)] $m'=n'+1$, and $x^*=\vec 1^{\,T} C^{-1}_{(n'+1)}$, where $C^{-1}_{(n'+1)}$ is the last column in the $(n'+1)\times (n'+1)$ matrix inverse of $C=\begin{pmatrix} A' & \vec{\mathcal R'}\end{pmatrix}^T$.
\end{itemize}
\end{proposition}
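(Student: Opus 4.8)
The plan is to describe completely the solution set of the equilibrium conditions \eqref{LVequil1}--\eqref{genEqcon} inside the positivity class $\Gamma_{\Omega}$, reducing everything to linear algebra on the persistent indices. Writing $A'$ for the $m'\times n'$ submatrix of $A$ and $\vec{\mathcal R}'$, $\vec{\rho}\,'$, $\vec 1$ for the corresponding restricted vectors, the conditions read ${A'}^{T}\vec y^{*}=\vec{\rho}\,'$, $A'\vec Z^{*}=x^{*}\vec{\mathcal R}'-\vec 1$, and $1+(\vec{\mathcal R}')^{T}\vec y^{*}=1/x^{*}$. Since $\mathcal E^{*}$ is assumed to exist, I would parametrize every equilibrium of \eqref{ode3} lying in $\Gamma_{\Omega}$ relative to it.

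First I would set $\vec y=\vec y^{*}+\vec w$; equation \eqref{LVequil1} forces $\vec w\in{\rm Ker}({A'}^{T})$, equation \eqref{genEqcon} then determines $x=1/(1+(\vec{\mathcal R}')^{T}\vec y)$ (well defined and in $(0,1)$ since $\vec y>0$ and $\vec{\mathcal R}'>0$), and equation \eqref{LVequil2} is solvable for $\vec Z$ exactly when $x\vec{\mathcal R}'-\vec 1\in{\rm Range}(A')=({\rm Ker}({A'}^{T}))^{\perp}$, with $\vec Z$ then unique modulo ${\rm Ker}(A')$. Pairing this range condition against an arbitrary $\vec w'\in{\rm Ker}({A'}^{T})$ and subtracting the identity that already holds at $\mathcal E^{*}$, the condition collapses to $(x-x^{*})\,(\vec w')^{T}\vec{\mathcal R}'=0$ for all such $\vec w'$. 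If $x\neq x^{*}$ this would force ${\rm Ker}({A'}^{T})\subseteq\vec{\mathcal R}'^{\perp}$, which makes $(\vec{\mathcal R}')^{T}\vec y$ constant on the solution set of \eqref{LVequil1} and hence $x=x^{*}$ by \eqref{genEqcon}, a contradiction; so every equilibrium in $\Gamma_{\Omega}$ has $x=x^{*}$, and the full equilibrium set is $\{(x^{*},\,\vec y^{*}+\vec w,\,\vec Z^{*}+\vec n):\ \vec w\in{\rm Ker}({A'}^{T})\cap\vec{\mathcal R}'^{\perp},\ \vec n\in{\rm Ker}(A'),\ \vec y^{*}+\vec w>0,\ \vec Z^{*}+\vec n>0\}$.

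From this description the equivalence is immediate: $\mathcal E^{*}$ is the unique equilibrium in $\Gamma_{\Omega}$ if and only if ${\rm Ker}({A'}^{T})\cap\vec{\mathcal R}'^{\perp}=\{0\}$ and ${\rm Ker}(A')=\{0\}$, and if either space is nontrivial then — $\vec y^{*}$, $\vec Z^{*}$ being strictly positive, so that small perturbations stay in $\Gamma_{\Omega}$ — the set above is a positive-dimensional family, i.e. a continuum. For the converse, assume uniqueness. Then ${\rm Ker}(A')=\{0\}$ gives ${\rm rank}(A')=n'$ and $n'\le m'$, hence $\dim{\rm Ker}({A'}^{T})=m'-n'$; since a subspace of dimension $m'-n'$ can meet the hyperplane $\vec{\mathcal R}'^{\perp}$ only in $\{0\}$ when $m'-n'\le 1$, we get $m'=n'$ or $m'=n'+1$.

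In the case $m'=n'$, $A'$ is invertible, so $\vec y^{*}=({A'}^{T})^{-1}\vec{\rho}\,'$ and \eqref{genEqcon} gives $x^{*}=1/(1+(\vec{\mathcal R}')^{T}({A'}^{T})^{-1}\vec{\rho}\,')$, which is (i) once the scalar $(\vec{\mathcal R}')^{T}({A'}^{T})^{-1}\vec{\rho}\,'$ is rewritten as $(\vec{\rho}\,')^{T}(A')^{-1}\vec{\mathcal R}'$. In the case $m'=n'+1$, the matrix $C=(A'\ \vec{\mathcal R}')^{T}$ is square with ${\rm Ker}(C)={\rm Ker}({A'}^{T})\cap\vec{\mathcal R}'^{\perp}=\{0\}$, hence invertible; taking $\vec p$ to span the one-dimensional ${\rm Ker}({A'}^{T})$, the fact that \eqref{LVequil2} holds at $\mathcal E^{*}$ gives $\vec p^{T}(x^{*}\vec{\mathcal R}'-\vec 1)=0$, while the last column $\vec c:=C^{-1}_{(n'+1)}$ satisfies ${A'}^{T}\vec c=\vec 0$ and $(\vec{\mathcal R}')^{T}\vec c=1$, forcing $\vec c=\vec p/(\vec p^{T}\vec{\mathcal R}')$ and therefore $\vec 1^{\,T}\vec c=\vec p^{T}\vec 1/(\vec p^{T}\vec{\mathcal R}')=x^{*}$, which is (ii). The step I expect to be the main obstacle is the parametrization itself: because \eqref{genEqcon} is nonlinear in $x$, equilibria do not form an affine set and one cannot argue by taking affine combinations of two of them; the crucial observation is that this nonlinear relation, combined with the range condition needed to solve \eqref{LVequil2}, pins $x$ down to $x^{*}$, after which everything becomes linear algebra on ${\rm Ker}({A'}^{T})$ and ${\rm Ker}(A')$. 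I would also take care to confirm that the perturbed points genuinely lie in $\Gamma_{\Omega}$ and that ${\rm Ker}(A')\neq\{0\}$ alone already produces a continuum.
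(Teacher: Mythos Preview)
The paper does not actually prove this proposition: it is quoted from the earlier work \cite{browne2018dynamics} and no argument is given here, so there is nothing in the present paper to compare against directly.

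That said, your argument is correct and self-contained. The key step---showing that every equilibrium in $\Gamma_{\Omega}$ must share the same target-cell component $x^{*}$---is handled cleanly: the solvability of \eqref{LVequil2} forces $(x-x^{*})\vec{\mathcal R}'\in{\rm Range}(A')=({\rm Ker}({A'}^{T}))^{\perp}$, and your dichotomy (either ${\rm Ker}({A'}^{T})\subseteq\vec{\mathcal R}'^{\perp}$, whence \eqref{genEqcon} already gives $x=x^{*}$, or not, whence the orthogonality relation forces $x=x^{*}$) is sound. Once $x$ is pinned, the remaining equilibrium conditions are linear and your parametrization of the full equilibrium set by $\bigl({\rm Ker}({A'}^{T})\cap\vec{\mathcal R}'^{\perp}\bigr)\times{\rm Ker}(A')$ is exact. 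The dimension count $m'-n'\le 1$ follows because any subspace of dimension at least $2$ meets a hyperplane nontrivially, and the explicit formulas in cases (i) and (ii) are derived correctly; in particular, identifying the last column of $C^{-1}$ with the normalized spanning vector of the one-dimensional ${\rm Ker}({A'}^{T})$ and reading off $x^{*}$ from the orthogonality relation $\vec p^{\,T}(x^{*}\vec{\mathcal R}'-\vec 1)=0$ is a nice way to recover (ii). Your caveat about checking that perturbed points remain in $\Gamma_{\Omega}$ is well placed, and the strict positivity of $\vec y^{*},\vec Z^{*}$ handles it.
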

This proposition, along with prior results on competitive exclusion, demonstrate that virus (prey) or ecosystem diversity in our model is entirely mediated by the immune response (predator) populations.  Thus the model provides a good system for exploring how prey-predator ecosystems can diversify and patterns in their underlying structure.  

Next we are concerned with the stability of equilibria, and which populations persist in the long run.  First, based on the idea of being ``weakly stable'' against missing species \cite{hofbauer1998evolutionary}, we call an equilibrium $\mathcal E^*=(x^*,y^*,Z^*)$ of \eqref{ode3} \emph{saturated} if the following holds:
\begin{align}
\frac{\dot y_i}{\gamma_i y_i}_{\big\rvert_{\mathcal E^*}}=\mathcal R_i x^*-1-\sum_{j\in\Omega_z} a_{ij} Z_j^*\leq 0, \  \forall i\notin\Omega_y, \quad  \frac{s_j\dot z_j}{\sigma_jz_j}_{\big\rvert_{\mathcal E^*}}= \sum_{i\in\Omega_y} a_{ij} y_i^*-\rho_j \leq 0,  \ \forall j\notin\Omega_z  \label{inequ}
\end{align}
Here each term in \eqref{inequ} gives the sign of the ``invasion rate'' of a missing species.  For a notion of persistent populations, define $\Omega_{yz}$ \emph{persistence} as
 \begin{align*}
 & \exists \ \epsilon>0 \ \text{and} \ T(\vec w_0) \ \text{such that} \  y_i(t),Z_j(t) >\epsilon, \ i \in\Omega_y, j\in\Omega_z, \ \forall t>T(\vec w_0), \ \text{and} \\
  & \lim_{t\rightarrow\infty}y_i(t),Z_j(t)= 0, \ i \notin\Omega_y, j\notin\Omega_z, \ \ \text{for every solution with initial condition} \ \vec w_0 \in \Omega.
   \end{align*}  
   We describe the individual populations $i \in\Omega_y, j\in\Omega_z$ in the above definition of $\Omega_{yz}$ \emph{persistence} as being \emph{uniformly persistent}.  Now we state a main theorem of \cite{browne2018dynamics} concerning the stability of equilibria and persistence of viral and immune variants of model \eqref{ode3}.   
\begin{theorem}[\cite{browne2018dynamics}]\label{genThm}
Suppose that $\mathcal E^*=\left(x^*,y^*, Z^* \right)$ is a non-negative equilibrium of system (\ref{ode3}) with positivity class $\Gamma_{\Omega}$.  Suppose further that $\mathcal E^*$ is saturated, i.e. the inequalities (\ref{inequ}) hold.
Then $\mathcal E^*$ is locally stable and $x(t)\rightarrow x^*$ as $t\rightarrow\infty$.

Furthermore, if $\mathcal E^*$ is the unique equilibrium in its positivity class $\Gamma_{\Omega}$ and the inequalities (\ref{inequ}) are strict, then  $y_i,Z_j\rightarrow 0$ for all $i\notin\Omega_y, j\notin\Omega_z$.  If $i\in \Omega_y$ and $a_{ij}=0 \ \forall j\in \Omega_z$, i.e. $\Lambda_i\cap \Omega_z=\emptyset$, then $y_i\rightarrow y_i^*$ and $x^*=1/\mathcal R_i$.  In addition, assuming positive initial conditions, for each $i\in\Omega_y, j\in\Omega_z$,  $y_i$ and $Z_j$ persist (the system is $\Omega_{yz}$ permanent) with asymptotic averages converging to equilibria values, i.e.
\begin{align*}
\lim_{t\rightarrow\infty} \frac{1}{t} \int\limits_0^t y_i(s) \, ds = y_i^*, \quad \lim_{t\rightarrow\infty} \frac{1}{t} \int\limits_0^t Z_j(s) \, ds = Z_j^*,
\end{align*}
In the case that there are less than or equal to two persistent viral strains with non-empty epitope sets (restricted to $\Omega_z$), i.e. $|\left\{ i\in\Omega_y: \Lambda_i\cap \Omega_z\neq \emptyset \right\}|\leq 2$, then $\mathcal E^*$ is globally asymptotically stable.
\end{theorem}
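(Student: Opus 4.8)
The main tool will be a Goh--Volterra (diagonal) Lyapunov function tailored to the positivity class $\Gamma_{\Omega}$ of $\mathcal E^*$. I would take
\begin{align*}
V(x,y,Z) &= \Big(x - x^* - x^*\ln\tfrac{x}{x^*}\Big) + \sum_{i\in\Omega_y}\tfrac{1}{\gamma_i}\Big(y_i - y_i^* - y_i^*\ln\tfrac{y_i}{y_i^*}\Big) + \sum_{i\notin\Omega_y}\tfrac{y_i}{\gamma_i} \\
&\quad + \sum_{j\in\Omega_z}\tfrac{\rho_j}{\sigma_j}\Big(Z_j - Z_j^* - Z_j^*\ln\tfrac{Z_j}{Z_j^*}\Big) + \sum_{j\notin\Omega_z}\tfrac{\rho_j}{\sigma_j}\,Z_j ,
\end{align*}
which is nonnegative on $\{x>0,\ y_i>0\ (i\in\Omega_y),\ Z_j>0\ (j\in\Omega_z)\}$, vanishes only at $\mathcal E^*$, and blows up as any persistent coordinate tends to $0$, so that its sublevel sets are bounded away from those coordinate faces. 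As a preliminary I would record dissipativity of \eqref{ode3} ($\limsup_{t\to\infty}x(t)\le 1$ from $\dot x\le 1-x$, together with comparison bounds on $\sum_i\mathcal R_iy_i$ and the $Z_j$), so that every forward orbit enters a fixed compact absorbing set, $\omega$-limit sets are nonempty, compact, connected and invariant, and persistent coordinates stay in a compact subset of the positive orthant along each orbit --- this is exactly the permanence/uniform-persistence assertion, and it is what makes LaSalle's principle applicable.

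Differentiating $V$ along \eqref{ode3}, the weights $1/\gamma_i$ and $\rho_j/\sigma_j$ cancel the growth-rate prefactors; the predator--prey coupling terms $\pm\sum_{i,j}a_{ij}(y_i-y_i^*)(Z_j-Z_j^*)$ coming from the $\dot y_i$ and $\dot Z_j$ contributions cancel identically (the conservative cancellation characteristic of Lotka--Volterra predator--prey structure); and the $\mathcal R_i(x-x^*)$ couplings cancel against the $\dot x$ term. Invoking the equilibrium identities \eqref{LVequil1}--\eqref{genEqcon}, in particular $1/x^* = 1 + \sum_{i\in\Omega_y}\mathcal R_iy_i^*$, everything collapses to
\begin{align*}
\dot V \;=\; -\,\frac{(x-x^*)^2}{x\,x^*} \;+\; \sum_{i\notin\Omega_y} y_i\Big(\mathcal R_i x^* - 1 - \sum_{j\in\Omega_z}a_{ij}Z_j^*\Big) \;+\; \sum_{j\notin\Omega_z} Z_j\Big(\sum_{i\in\Omega_y}a_{ij}y_i^* - \rho_j\Big) .
\end{align*}
By the saturation inequalities \eqref{inequ} the two sums are nonpositive, so $\dot V\le 0$ throughout, with equality forcing $x=x^*$. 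Lyapunov's theorem then gives local stability of $\mathcal E^*$, and since the largest invariant subset of $\{\dot V=0\}$ lies in $\{x=x^*\}$, LaSalle's principle yields $x(t)\to x^*$.

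Under the extra hypotheses --- uniqueness of $\mathcal E^*$ in $\Gamma_{\Omega}$ and strict inequalities in \eqref{inequ} --- the relation $\dot V=0$ additionally forces $y_i=0$ for $i\notin\Omega_y$ and $Z_j=0$ for $j\notin\Omega_z$, so every $\omega$-limit point has these coordinates zero and hence $y_i(t),Z_j(t)\to 0$ for the non-persistent variants. The averaging statements follow by integrating the displayed identity: $V(t)$ is nonincreasing and bounded below, so $\int_0^{\infty}(-\dot V)\,dt<\infty$, whence $\int_0^{\infty}(x-x^*)^2\,dt<\infty$ (using that $x$ stays bounded away from $0$) and $\int_0^{\infty}y_i\,dt,\int_0^{\infty}Z_j\,dt<\infty$ off the support; then integrating the per-capita equations $\tfrac{d}{dt}\ln y_i = \gamma_i(\mathcal R_ix - 1 - \sum_j a_{ij}Z_j)$ and $\tfrac{d}{dt}\ln Z_j = \tfrac{\sigma_j}{\rho_j}(\sum_i a_{ij}y_i - \rho_j)$ for $i\in\Omega_y,\ j\in\Omega_z$ and dividing by $t$, the log-terms vanish in the limit by permanence, so every subsequential Ces\`aro limit of $(y,Z)$ solves the linear system \eqref{LVequil1}--\eqref{LVequil2} at $x=x^*$; Proposition \ref{prop33} forces that limit to be $(y^*,Z^*)$, so the full time-averages converge. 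For $i\in\Omega_y$ with $\Lambda_i\cap\Omega_z=\emptyset$, relation \eqref{LVequil2} reduces to $\mathcal R_ix^* - 1 = 0$, i.e.\ $x^*=1/\mathcal R_i$; since such a $y_i$ is decoupled from every persistent predator, on the $\omega$-limit set its equation reads $\dot y_i\equiv 0$, and this together with $x(t)\to x^*$, the extinction of the off-support $Z_j$, and the convergence of the running average of $y_i$ to $y_i^*$ pins $y_i(t)\to y_i^*$.

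The remaining and most delicate claim --- global asymptotic stability when $|\{i\in\Omega_y:\Lambda_i\cap\Omega_z\neq\emptyset\}|\le 2$ --- is where I expect the main obstacle. Since the cross-terms of $\dot V$ vanish identically, $\dot V$ cannot by itself distinguish $\mathcal E^*$ from a nontrivial invariant set inside $\{x=x^*\}\cap\Gamma_{\Omega}$, so one must analyze that set directly. On the $\omega$-limit set $L$ one has $x\equiv x^*$, all non-persistent coordinates zero, and each empty-epitope strain pinned at $y_i^*$; the flow on $L$ thus reduces to a generalized Lotka--Volterra predator--prey flow in the at most two non-trivial prey and the predators $Z_j\ (j\in\Omega_z)$, subject to the extra invariant constraint $\tfrac{d}{dt}\big(\sum_{i\in\Omega_y}\mathcal R_iy_i\big)\equiv 0$ and all of its higher time-derivatives (consequences of $\dot x\equiv 0$). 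The plan is to show that with at most two free prey these constraints over-determine the reduced flow: with one free prey the linear constraint pins it, which in turn pins each $Z_j$ through its own equation so that $L=\{\mathcal E^*\}$; with two free prey the constraint forces their per-capita growth rates to be pointwise proportional with opposite sign, and combining this with the conserved Volterra quantity (the restriction of $V$ to $L$) and the uniqueness of $\mathcal E^*$ from Proposition \ref{prop33} rules out any non-constant orbit, again collapsing $L$ to the equilibrium. Everything else reduces to the lengthy but routine Lyapunov bookkeeping sketched above.
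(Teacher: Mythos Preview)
This theorem is quoted from \cite{browne2018dynamics} and the present paper gives \emph{no} proof of it; it is invoked as background. So there is nothing in the paper to compare your attempt against directly. That said, your approach is exactly the one used in \cite{browne2018dynamics}: a Goh--Volterra diagonal Lyapunov function with weights $1/\gamma_i$ and $\rho_j/\sigma_j$, the same cancellation of the antisymmetric prey--predator cross terms, and the resulting identity
\[
\dot V \;=\; -\frac{(x-x^*)^2}{x\,x^*}\;+\;\sum_{i\notin\Omega_y}y_i\Big(\mathcal R_ix^*-1-\sum_{j\in\Omega_z}a_{ij}Z_j^*\Big)\;+\;\sum_{j\notin\Omega_z}Z_j\Big(\sum_{i\in\Omega_y}a_{ij}y_i^*-\rho_j\Big),
\]
from which local stability, $x(t)\to x^*$, extinction of the off-support variants under strict saturation, and the Ces\`aro convergence via integration of the per-capita rates all follow as you indicate. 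Your derivation is correct.

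The one place your sketch is genuinely incomplete is the global asymptotic stability clause for $|\{i\in\Omega_y:\Lambda_i\cap\Omega_z\neq\emptyset\}|\le 2$. Your plan---restrict to the $\omega$-limit set where $x\equiv x^*$ and the off-support variables vanish, then use the hierarchy of constraints $\tfrac{d^k}{dt^k}\big(\sum_{i\in\Omega_y}\mathcal R_i y_i\big)\equiv 0$---handles the one-free-prey case cleanly, but in the two-free-prey case your appeal to ``the conserved Volterra quantity together with uniqueness from Proposition~\ref{prop33}'' does not yet rule out periodic orbits: the restriction $V|_L$ is constant on $L$ by construction, so it gives no extra information, and the reduced system on $L$ is a conservative predator--prey system in $2+|\Omega_z|$ variables for which nontrivial closed orbits are \emph{a priori} possible. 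In \cite{browne2018dynamics} this step is closed by exploiting the algebraic constraint $\sum_{i\in\Omega_y}\mathcal R_i y_i\equiv\text{const}$ to eliminate one prey and reduce to an effectively one-prey conservative system whose only compact invariant set at the given $V$-level is the equilibrium; you should make that elimination explicit rather than invoke $V|_L$. Apart from this gap, your proposal matches the intended argument.
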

Note that the global convergence of the persistent variants to equilibria values is still an open question when there are more than two persistent immune responses.

\subsection{Fitness and epistasis}
In the rest of this article we consider the ``binary sequence'' case of model \eqref{ode3}, which leads to the simplified system \eqref{odeS} through assumption \eqref{assumeaj}. For our virus-immune ecosystem, we are considering the situation where $n$ immune response populations $z_j$ each targeting the corresponding epitope $j$ in the virus strains at a rate solely dependent on the allele type of epitope $j$; (0) wild-type or (1) mutated form conferring full resistance to $z_j$.  The avidity of immune response $z_j$ and (wild-type) epitope $j$ is described by the immune reproduction number $\mathcal I_j$ given by \eqref{IR0}, and are according to the immunodominance hierarchy \eqref{immunodom}.
 As opposed to this simple immune fitness ordering, the collection of virus reproduction numbers (fitnesses) in our model can have much more complex relationships among each other depending on the fitness landscape, formally defined below.

Consider the space of binary sequences of length $n$, $\left\{0,1\right\}^n$, which contain all possible $2^n$ virus strains. For a given strain $i$ with sequence $\mathbf i\in\left\{0,1\right\}^n$, we also denote its reproduction number in terms of binary sequence; $\mathcal R_{\mathbf i}$.  The reproduction numbers can be described in terms of the fitness cost (relative to wild-type fitness $\mathcal R_{\mathbf 0}$) associated with the corresponding combinations of epitope mutations.  The \emph{fitness landscape} is defined as the precise map between the virus sequences and their reproduction numbers: 
$$w: \left\{0,1\right\}^n \rightarrow \mathbb R, \quad w(\mathbf i) =\mathcal R_{\mathbf i}.  $$  The set of all reproduction numbers is the image of the fitness landscape, $$\mathcal F:=w\left(\left\{0,1\right\}^n\right)=\left\{\mathcal R_{\mathbf i}\right\}_{\mathbf i\in\left\{0,1\right\}^n}=\left\{\mathcal R_{i}\right\}_{i=0}^{2^n-1}, $$
where we can utilize either the sequence or integer indices for viral strains.  An important special case of a fitness landscape is when $w$ is \emph{additive}.  In an additive fitness landscape, 
\begin{align}
\mathcal R_{\mathbf i}=\mathcal R_{\mathbf 0}-\mathbf c \cdot \mathbf i, \label{additive_def} 
\end{align} where $\mathbf c=\left< c_1,c_2,\dots,c_n\right>$ is the vector of individual fitness costs for mutating each epitope, with the assumption that $\mathbf c \cdot \mathbf 1<\mathcal R_{\mathbf 0}$ so that all viral reproduction numbers remain positive.  

Whereas an additive fitness landscape is solely
determined linearly by the wild-type and single-mutant fitness values, the concept of epistasis allows for combinations of mutations to have more general nonlinear fitness landscapes.  Informally, a system has epistasis if the effect of a mutation depends on genetic background.  Here we generally define epistasis as a deviation from additivity.  A common way to incorporate epistasis is via pairwise interactions between loci, as in the quadratic Ising or Pott's model \cite{stadler2002fitness} which has been used in applications to HIV-immune data \cite{Barton}.  Let $B$ be a strictly upper triangular matrix encoding (possibly random) pairwise interactions and define 
\begin{align}
\mathcal R_{\mathbf i}=\mathcal R_{\mathbf 0}-\mathbf c \cdot \mathbf i +\sum_{j=1}^n i_j \sum_{k>j} i_k B_{jk}, \quad \mathbf i\in\left\{0,1\right\}^n, \label{pwfit}
\end{align}
where $B$, $\mathbf c$ are suitable to fit our requirements for the viral fitness (cost) landscape \eqref{fitnesscost}.

To consider epistasis in general, first consider a subset of the sequence space $\mathcal S\subset \left\{0,1\right\}^n$ and the associated fitness landscape occupied by the sequences, $w(\mathcal S)$.  Define a \emph{vanishing linear form} on $\mathcal S$ as a linear form $g=\sum_{\mathbf k\in\mathcal S} a_{\mathbf k} \mathcal R_{\mathbf k}$ with integer coefficients $a_{\mathbf k}$, which is zero for any fitness landscape $w$ that is additive, and satisfies $\sum_{\mathbf k\in\mathcal S} a_{\mathbf k} =0$ with some $a_{\mathbf k} \neq 0$ \cite{beerenwinkel2007epistasis,crona2017inferring}.  Note that an equivalent definition, can be formed from the following observation upon consideration of additive fitness \eqref{additive_def}:  
\begin{align*}
g&=\sum_{\mathbf k\in\mathcal S} a_{\mathbf k} \mathcal R_{\mathbf k} =\sum_{\mathbf k\in\mathcal S}a_{\mathbf k}\left(\mathcal R_{\mathbf 0}-\mathbf c \cdot \mathbf k\right) =-\sum_{\mathbf k\in\mathcal S}  a_{\mathbf k}\mathbf c \cdot \mathbf k=-\mathbf c \cdot \sum_{\mathbf k\in\mathcal S} a_{\mathbf k} \mathbf k  \\
\Rightarrow g&=0 \ \forall \mathbf c\in\mathbb R^n  \Leftrightarrow  \sum_{\mathbf k\in\mathcal S} a_{\mathbf k} \mathbf k=\mathbf 0.
\end{align*}
So a vanishing linear form on $\mathcal S\subset \left\{0,1\right\}^n$ equivalently satisfies $\sum_{\mathbf k\in\mathcal S} a_{\mathbf k} \mathbf k=\mathbf 0$ and $\sum_{\mathbf k\in\mathcal S} a_{\mathbf k} =0$ with not all $a_{\mathbf k} = 0$. 
The two conditions can be combined by adding to every binary sequence in $\left\{0,1\right\}^n$ a $1$ at the end of the sequence.  Considering each extended binary sequence $\mathbf i 1= (i_1\dots i_n 1)$ as a vector in $\mathbb R^{n+1}$, existence of a vanishing linear form on $\mathcal S\subset \mathbb R^{n+1}$ simply signifies $\mathcal S$ to be a linearly dependent set of vectors.  

A \emph{circuit} $\mathcal C\subset \left\{0,1\right\}^n$ is a minimal set which has a vanishing linear form.  In other words, there exists a vanishing linear form on a circuit $\mathcal C$ and no proper subset of $\mathcal C$ has a vanishing linear form.    Considering the extended binary sequences in $\mathbb R^{n+1}$, a circuit is a minimally linearly dependent collection of vectors, i.e. a linearly dependent collection of vectors in which any proper subset is linearly independent \cite{crona2020rank}.  Circuits allow for detection of sign epistasis \cite{beerenwinkel2007epistasis,crona2017inferring}, which can be defined as follows.  Suppose $\mathcal C$ is a circuit with vanishing linear form $g=\sum_{\mathbf k\in\mathcal C} a_{\mathbf k} \mathcal R_{\mathbf k}$.  The circuit $\mathcal C$ has \emph{positive epistasis} for fitness landscape $w$ if $\sum_{\mathbf k\in\mathcal C} a_{\mathbf k} \mathcal R_{\mathbf k}>0$.  We analogously define \emph{negative epistasis} on the circuit $\mathcal C$ if $\sum_{\mathbf k\in\mathcal C} a_{\mathbf k} \mathcal R_{\mathbf k}<0$.  In a strictly additive fitness landscape the vanishing linear forms on each circuit would all be zeros, i.e. vanish.  The signs of the coefficients in a circuit are not unique since there are two possible assignments of positive versus negative coefficients, but for the type of circuits we will concern with this in this paper, we will define a unique way of assigning signs based on stability results.

The simplest class of circuits measure the conditional or marginal epistasis of groups of loci.  In particular, against a background where a subset of loci are fixed, consider two distinct pairs of \emph{(ones') complement} sequences, defined to be sequences $\mathbf{\tilde k}$ and $\overline{\mathbf{\tilde k}}$ where $\mathbf{\tilde k}+\overline{\mathbf{\tilde k}}=\mathbf{\tilde 1}$ for a \emph{subset} of loci $\tilde J\subset[1,n]$.  For example, in the case $n=3$, $\left\{100, 010\right\}$ and $\left\{000, 110\right\}$ are two distinct pairs  which together form a circuit giving the marginal epistasis of the first two loci against the third fixed locus, with the following linear form:
\begin{align*}
\mathcal A_{12}(w)=\mathcal R_{000}-\mathcal R_{100}-\mathcal R_{010}+\mathcal R_{110}.
\end{align*} 
Here the circuit epistasis sign is positive for a fitness landscape whenever the pairwise interaction between epitopes 1 and 2 are synergistic, so that the double mutant has larger reproduction number than it would have under additivity.  We can identify how the ``pairwise epistatic fitness landscape'' \eqref{pwfit} directly relates to conditional epistasis circuits in any dimension $n$. Consider loci $1\leq j< k\leq n$.  A circuit measuring the conditional epistasis of $j,k$ against any background will resolve as follows; $\mathcal A_{jk}:=-\mathcal R_{\cdot 0 \cdot 0 \cdot}+\mathcal R_{\cdot 1 \cdot 0 \cdot}+\mathcal R_{\cdot 0 \cdot 1 \cdot}-\mathcal R_{\cdot 1 \cdot 1 \cdot}=B_{jk}$, where the changing alleles occur in the $j,k$ positions.  An example of a marginal epistasis circuit can be given by linear form $\mathcal A(w)=-\mathcal R_{000}+\mathcal R_{001}+\mathcal R_{110}-\mathcal R_{111}$, which relates marginal epistasis of the first 2 loci (as a block) and the third locus.  
Note that because the sum of coefficients and weighted sum of sequences must vanish, along with a circuit being minimal, the number of binary sequences in a ``\emph{(ones') complement}'' circuit must be four.  In general, the number of circuits rapidly grows with $n$ (there are 20 circuits for $n=3$, 1348 circuits for $n=4$ \cite{eble2020higher}) and can be interpreted geometrically in terms of shapes formed by vertices of the $n$-cube \cite{beerenwinkel2007epistasis}.  

\section{Main Results} \label{mrsec}

In this section, we present our main theorems and their ramifications.  Proofs to new results appear in the Appendix. Our major goal is to rigorously connect the concept of circuits with bifurcations and stable equilibria in  model \eqref{odeS}.  First, in order to demonstrate a general link between circuits and the dynamical system, we establish that persistent viral strains comprise a circuit only in a critical case. In particular, we show that a circuit has positive components in a feasible equilibrium only when this circuit is additive with respect to fitness landscape, in which case a degenerate infinite dimensional subspace of equilibria appears.     Indeed, the following proposition generalizes a previous result in \cite{browne2018dynamics} on degeneracy of equilibria forming a cycle in virus sequence hypercube.
\begin{proposition}\label{prop1new}
 Consider the binary sequence model \eqref{odeS} with $2^n$ viral strains identified in $\left\{0,1\right\}^n$. Suppose that $\mathcal C\subset \left\{0,1\right\}^n$ is a \emph{circuit}, has vanishing linear form $g=\sum_{\mathbf k\in\mathcal C} a_{\mathbf k} w(\mathbf k)$ for any additive fitness landscape $w$, and consider the fixed fitness landscape with image (reproduction numbers) denoted by $\mathcal R_{\mathbf k}$ for $\mathbf k\in\left\{0,1\right\}^n$.      If $\sum_{\mathbf k\in\mathcal C} a_{\mathbf k} \mathcal R_{\mathbf k} \neq 0$, then there does not exist an equilibrium $y^*$ with $y^*_{\mathbf k}>0$ for all $\mathbf k\in\mathcal C$.  On the other hand if $\sum_{\mathbf k\in\mathcal C} a_{\mathbf k} R_{\mathbf k}=0$ and there exists an equilibrium with $y^*_{\mathbf k}>0$ for all $\mathbf k\in\mathcal C$, then there are infinitely many equilibria, $\bar{\mathbf y}$, in the positivity class of $\mathbf y^*$, with components parametrized by $y_{\mathbf k}=y^*_{\mathbf k}+\alpha a_{\mathbf k}$ for some $\alpha\in\mathbb R$.  
\end{proposition}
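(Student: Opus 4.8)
\emph{Setup.} The plan is to pair the two defining identities of a vanishing linear form on the circuit, $\sum_{\mathbf k\in\mathcal C} a_{\mathbf k}=0$ and $\sum_{\mathbf k\in\mathcal C} a_{\mathbf k}\mathbf k=\mathbf 0$ (equivalently, as derived above, the extended vectors $\mathbf k 1\in\mathbb R^{n+1}$ are minimally linearly dependent with coefficients $a_{\mathbf k}$), against the equilibrium equations of \eqref{odeS} restricted to the positivity class of $\mathbf y^*$. Since a circuit is minimal, every $a_{\mathbf k}\neq 0$, a fact I will invoke at the end.

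\emph{Nonexistence half.} Suppose an equilibrium $\mathcal E^*=(x^*,\mathbf y^*,\mathbf z^*)$ of \eqref{odeS} has $y^*_{\mathbf k}>0$ for every $\mathbf k\in\mathcal C$. Each positive $y^*_{\mathbf k}$ forces the corresponding bracket to vanish, so $\mathcal R_{\mathbf k}x^*-1-\sum_{j=1}^n(1-k_j)z_j^*=0$ for all $\mathbf k\in\mathcal C$; in particular $\mathcal R_{\mathbf k}x^*=1+\sum_j(1-k_j)z_j^*\geq 1$, whence $x^*>0$. I then take the integer-weighted combination $\sum_{\mathbf k\in\mathcal C}a_{\mathbf k}(\,\cdot\,)$ of these equalities: the constant terms cancel since $\sum_{\mathbf k}a_{\mathbf k}=0$; for each $j$ the immune contribution cancels since $\sum_{\mathbf k}a_{\mathbf k}(1-k_j)=\sum_{\mathbf k}a_{\mathbf k}-\sum_{\mathbf k}a_{\mathbf k}k_j=0$; what remains is $x^*\sum_{\mathbf k\in\mathcal C}a_{\mathbf k}\mathcal R_{\mathbf k}=0$, and dividing by $x^*>0$ gives $\sum_{\mathbf k\in\mathcal C}a_{\mathbf k}\mathcal R_{\mathbf k}=0$. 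Contrapositively, $\sum_{\mathbf k\in\mathcal C}a_{\mathbf k}\mathcal R_{\mathbf k}\neq 0$ rules out any equilibrium with $y^*_{\mathbf k}>0$ for all $\mathbf k\in\mathcal C$.

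\emph{Degeneracy half.} Now assume $\sum_{\mathbf k\in\mathcal C}a_{\mathbf k}\mathcal R_{\mathbf k}=0$ and that some equilibrium $\mathcal E^*=(x^*,\mathbf y^*,\mathbf z^*)$ has $y^*_{\mathbf k}>0$ for all $\mathbf k\in\mathcal C$. For $\alpha\in\mathbb R$ I define $\bar{\mathbf y}$ by $\bar y_{\mathbf k}=y^*_{\mathbf k}+\alpha a_{\mathbf k}$ on $\mathcal C$ and $\bar y_{\mathbf i}=y^*_{\mathbf i}$ otherwise, keeping $\bar x=x^*$ and $\bar z_j=z_j^*$, and verify $(\bar x,\bar{\mathbf y},\bar{\mathbf z})$ solves the full equilibrium system: the $\dot x=0$ equation is preserved because the change in $\sum_{\mathbf i}\mathcal R_{\mathbf i}\bar y_{\mathbf i}$ is $\alpha\sum_{\mathbf k}a_{\mathbf k}\mathcal R_{\mathbf k}=0$; each $\dot z_j=0$ equation is preserved because the change in $\sum_{\mathbf i}(1-i_j)\bar y_{\mathbf i}$ is $\alpha\sum_{\mathbf k}a_{\mathbf k}(1-k_j)=0$; and each $\dot y_{\mathbf i}=0$ equation still holds because $\bar x,\bar{\mathbf z}$ are unchanged and the relevant bracket either already vanished (for $\mathbf i\in\mathcal C$) or simply multiplies $\bar y_{\mathbf i}=y^*_{\mathbf i}$ (for $\mathbf i\notin\mathcal C$). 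Since all $a_{\mathbf k}\neq 0$, distinct $\alpha$ give distinct states; and since $y^*_{\mathbf k}>0$, for $|\alpha|$ small enough every $\bar y_{\mathbf k}$ stays positive and no zero component turns positive, so the whole one-parameter family lies in the positivity class of $\mathbf y^*$, yielding the claimed continuum — consistent with Proposition \ref{prop33}.

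\emph{Main obstacle.} This is a short linear computation, so I do not expect a genuine obstacle; the only points needing care are (i) extracting $x^*>0$ so it can be divided out in the nonexistence half, and (ii) restricting $\alpha$ to a neighbourhood of $0$ so the perturbed equilibria remain in the same positivity class — both immediate from $y^*_{\mathbf k}>0$. One may alternatively package the degeneracy half by noting that the perturbation direction $(a_{\mathbf k})_{\mathbf k\in\mathcal C}$ (extended by zeros) lies in the kernel of the coefficient matrix of the linear equilibrium system \eqref{LVequil1}--\eqref{LVequil2} specialized to the binary model, which ties the statement directly to the uniqueness criterion of Proposition \ref{prop33}.
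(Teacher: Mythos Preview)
Your proof is correct. The nonexistence half is essentially identical to the paper's argument: both take the $a_{\mathbf k}$-weighted combination of the $\dot y_{\mathbf k}=0$ equations and use $\sum a_{\mathbf k}=0$, $\sum a_{\mathbf k}\mathbf k=\mathbf 0$ to collapse everything to $x^*\sum a_{\mathbf k}\mathcal R_{\mathbf k}=0$.

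For the degeneracy half the two presentations diverge slightly. The paper argues indirectly through Proposition~\ref{prop33}: it shows the circuit coefficient vector $\mathbf a$ lies in the kernel of the augmented matrix $C=\begin{pmatrix}A' & \vec{\mathcal R}'\end{pmatrix}^T$, so the uniqueness criterion ${\rm Ker}(A')^T\cap\vec{\mathcal R}'^{\perp}=\{0\}$ fails, and then reads off the parametrization $\bar{\mathbf y}=\mathbf y^*+\alpha\mathbf a$ from that kernel. You instead construct the family $\bar y_{\mathbf k}=y^*_{\mathbf k}+\alpha a_{\mathbf k}$ by hand and verify each of the $\dot x$, $\dot y_{\mathbf i}$, $\dot z_j$ equations directly. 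Your route is more self-contained and makes explicit why $x^*$ and $\mathbf z^*$ stay fixed along the line; the paper's route ties the result to its structural dichotomy on positivity classes. You already note this alternative packaging in your closing remark, so the two are really the same computation viewed from opposite ends.
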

The proposition implies that any equilibrium with persistent strains forming a circuit must be unstable, in particular as part of a continuum of equilibria.  The dimension of the infinite dimensional subspace of equilibria is the number of linearly independent vanishing forms corresponding to the circuit, where the dimension can be greater than one if the circuit contains distinct (sub-) circuits as subsets.  Although unstable, the lines of equilibria will be seen in the ensuing sections as bifurcations where certain types of stable equilibria are invaded with strain replacement and stability being sharply determined by signed epistasis of the corresponding circuits.  


\subsection{Nested network determined by epistasis} \label{nestsect}

Next, we focus on (perfectly) nested equilibria, which describe sequential mutations of epitopes in the order of the immunodominance hierarchy and persistence of all strains along this pathway.   The successive rise of more broadly resistant prey (coming with a fitness cost) and weaker but more generalist predators, in a nested fashion, has been proposed in bacteria-phage communities \cite{jover2013mechanisms,korytowski2015nested,weitz2013phage}, and there is some evidence that nestedness is a feature of HIV and immune response dynamics \cite{kessinger2015inferring,liu2013vertical,vanDeutekom}.   
  Furthermore, this specialist-generalist structure is a well studied pattern in a variety of ecosystems, in particular nested networks are of interest in explaining the biodiversity and structure of 
 mutualistic (e.g. plant-pollinator) communities \cite{bascompte2003nested}.  

First, we describe equilibria of model \eqref{odeS}, where the persistent network is constrained to be nested, which were described in \cite{jover2013mechanisms,korytowski2015nested,browne2016global}.  We introduce a ``nested priority'' indexing for the viral strains, which allows convenient definition of threshold quantities for nested networks.  The $n+1$ binary sequences contained in nested equilibria are of the form $1^k0^{n-k}$ (in power notation for the length $n$ binary string), where $0\leq k \leq n$.  Let $y_k$ denote the viral strain with binary sequence $1^k0^{n-k}$, $0\leq k \leq n$.  For $k\geq 1$ define:
 \begin{align}
 \mathcal Q_{k}= \mathcal Q_{k-1}+ (s_{k}-s_{k -1})\mathcal R_{k-1}, \quad\text{where} \quad  \mathcal Q_0=1, s_0=0, s_k=1/\mathcal I_k. \label{nestedQ}
  \end{align}
Then, for each $k\in[1,n]$, define the following \emph{nested} equilibria:
 \begin{align}
 \widetilde{\mathcal E}_{k}   = (\widetilde x,\widetilde y,\widetilde z),  \qquad  & \widetilde x=\frac{1}{\mathcal R_{k}}, \ \widetilde y_j=s_j-s_{j-1}\  \ \text{for} \ \ 0\leq j<k, \ \widetilde y_{k}=1-\frac{ \mathcal Q_{k}}{\mathcal R_{k}},\label{equilib1} \\  & \widetilde z_j=\frac{\mathcal R_{j-1}-\mathcal R_{j}}{\mathcal R_{k}} \ \  \ \text{for} \ \ 1\leq j< k,  \ \widetilde z_{k}=0, \quad \widetilde y_j=\widetilde z_j=0 \  \ \ \text{for} \ \ k<j\leq n \notag \\
 \bar{\mathcal E}_k = (\bar x,\bar y,\bar z),  \qquad &  \bar x=\frac{1}{\mathcal Q_k}, \ \bar y_j=s_j-s_{j-1}\  \ \text{for} \ \ 0\leq j<k,  \label{equilib2}  \\ & \hspace{-.5cm} \bar z_j=\frac{\mathcal R_{j-1}-\mathcal R_{j}}{\mathcal Q_k} \  \ \text{for} \ \ 1\leq j<k,  \ \bar z_k=\frac{\mathcal R_{k-1}}{\mathcal Q_k}-1, \quad  \bar y_j=\bar z_j=0 \  \ \ \text{for} \ \ k<j\leq n  \notag
  \end{align}
Equilibrium $\widetilde{\mathcal E}_{k}$ represents the appearance of escape mutant $y_{k}$ from the equilibrium  $\bar{\mathcal E}_k$ containing $k$ viral strains $y_0,\dots,y_{k-1}$ and immune responses $z_1,\dots z_k$.  The stability of these equilibria \emph{restricted within the nested network} (non-nested strains $y_{n+1},\dots y_{2^n-1}$ are set to zero) was proved to be determined which of equilbria \eqref{equilib1} and \eqref{equilib2} are positive \cite{browne2016global}.


Along with the specialist to generalist ordering in nested equilibria, another interesting observation is that nested networks are evolutionary pathways in the full fitness landscape hypercube.  As opposed to some other feasible equilibria, such as the one-to-one network, the persistent strains in the nested equilibria form a path from the wild-type to the most resistant strain as single mutations accumulate in stepwise fashion.  In a single (quasi-)species system, the underlying viral fitness landscape, which is generally shaped by epistatic interactions, determines evolutionary trajectories.  When another trophic level is added, as immune response (predators) here, the overall viral fitnesses are expected to be dynamic since they depend upon the immune response populations.  However, here we show that the nested trajectory in our system is solely dependent on the relevant epistasis in the viral fitness landscape.  

Define a certain pathway on the hypercube of binary sequences to have \emph{positive (negative) epistasis if every circuit with all but one node contained on the path has positive (negative) epistasis}.  The interpretation of this definition is that each of these circuits represent potential alternate pathways, which correspond to strain invasion in the model.  Our main result, Theorem \ref{newthm1} below, proves that the nested network is stable and persistent if and only if it has positive epistasis as a pathway in the viral fitness landscape.  In particular, we decode the general saturated equilibria inequalities \eqref{inequ} conferring stability and persistence by Theorem \ref{genThm} into biological meaningful conditions on sign epistasis of associated ``invasion circuits''.  Although our model does not explicitly include mutation, the persistent variants of stable equilibria can still represent evolutionary outcomes, as later simulations show.  Thus the following theorem suggests a necessary and sufficient condition based on epistasis in the viral fitness landscape for a nested trajectory in a generalized eco-evolutionary version of model \eqref{odeS}. 
\begin{theorem}\label{newthm1}
  Consider the binary sequence model \eqref{odeS} with $2^n$ viral strains and $n$ immune responses.  Assume that $\mathcal R_0>\mathcal Q_1$ (so that at least one virus strain and immune response persists). Let $k$ be the largest integer in $[1,n]$ such that $\mathcal R_{k-1} > \mathcal Q_k$.   Then $\widetilde{\mathcal E}_{k}$ (or $\overline{\mathcal E}_{k}$ if $\mathcal R_k>\mathcal Q_k$) is stable with uniformly persistent strains $y_0,y_1,\dots, y_k$ (and $y_{k+1}$ if $\mathcal R_k>\mathcal Q_k$) if and only if (saturated) inequalities \eqref{inequ} hold or equivalently each of the $2^n-n-1$ invasion circuits corresponding to a non-nested strain ($y_i, \ i=n+1,\dots,2^n-1$) union a subset of nested strains ($\mathcal S\subset\left\{y_0,y_1,\dots,y_n\right\}$) has positive epistasis.   In other words, the nested network is stable and persistent if and only if it has positive epistasis as a pathway in the viral fitness landscape.
\end{theorem}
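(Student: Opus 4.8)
\emph{Overall approach and Stage 1 (reduction to saturation).} The plan is to convert the dynamical criterion of Theorem~\ref{genThm} into the combinatorics of circuits, in three steps: reduce the claimed equivalence to the saturation inequalities \eqref{inequ}, isolate the non-nested missing strains as the only new content, and then show that the invasion rate of each such strain is, up to a positive factor, exactly the vanishing linear form of an associated circuit. First I would record that the candidate equilibrium --- $\widetilde{\mathcal E}_k$ when $\mathcal R_k\le\mathcal Q_k$ and $\overline{\mathcal E}_k$ when $\mathcal R_k>\mathcal Q_k$ --- is feasible (all listed components positive) and unique in its positivity class: feasibility comes from $\mathcal R_0>\mathcal Q_1$ together with the maximality of $k$ and the recursion \eqref{nestedQ}, while uniqueness follows from Proposition~\ref{prop33}, since for a nested configuration the relevant submatrix $A'$ is anti-triangular with nonzero diagonal, so $\mathrm{Ker}(A')=\{0\}$ and the condition on $(A')^{T}$ is automatic; this is essentially the analysis of \cite{browne2016global,jover2013mechanisms,korytowski2015nested}. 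Granting this, the direction ``saturation $\Rightarrow$ stability and persistence'' is exactly Theorem~\ref{genThm}: strict inequalities \eqref{inequ} give local stability, $x(t)\to x^{*}$, uniform persistence of precisely $y_0,\dots,y_k$ (and $y_{k+1}$ in the $\overline{\mathcal E}_k$ case), and convergent time averages (global stability when $k\le 2$). Conversely, if the equilibrium is Lyapunov stable and those strains persist, then the transversal linearization of $\dot y_{\mathbf i}$ (resp.\ $\dot z_j$) at $\mathcal E^{*}$ is $\gamma_{\mathbf i}y_{\mathbf i}$ times the invasion rate $\mathcal R_{\mathbf i}x^{*}-1-\sum_{j\in\Omega_z}(1-i_j)z_j^{*}$ (resp.\ the analogous immune expression), so a positive invasion rate would make a missing component grow, contradicting stability and the required decay to zero. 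Equalities are excluded, since by Proposition~\ref{prop1new} a vanishing circuit form supported on persistent strains forces a continuum of equilibria and destroys uniform persistence; hence ``saturated'' may be taken with strict inequalities.

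\emph{Stage 2 (localize the new content).} The missing species split into the missing nested strains $y_{k+1},\dots,y_n$ and missing immune responses $z_{k},\dots,z_n$ on the one hand, and the $2^n-n-1$ non-nested strains on the other. For the first group the saturation inequalities coincide with those already resolved within the nested network in \cite{browne2016global,korytowski2015nested}: choosing $k$ to be the \emph{largest} index with $\mathcal R_{k-1}>\mathcal Q_k$, equivalently $\mathcal R_\ell\le\mathcal Q_{\ell+1}$ for all $\ell\ge k$, is exactly what makes $\widetilde{\mathcal E}_k$ (or $\overline{\mathcal E}_k$) saturated against them. So the theorem reduces to the claim that, for each non-nested strain $y_{\mathbf i}$, its saturation inequality is equivalent to positive epistasis of its invasion circuit.

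\emph{Stage 3 (circuit identification and the telescoping identity).} Append a $1$ to every sequence and work in $\mathbb R^{n+1}$: the vectors $(1^{\ell}0^{n-\ell},1)$, $\ell=0,\dots,n$, are linearly independent (triangular), so adjoining $(\mathbf i,1)$ yields a unique linear relation supported on $\mathbf i$ together with the nested vertices at the block boundaries of $\mathbf i$; with $i_0:=1$ and $i_{n+1}:=0$, the coefficient of $1^{\ell}0^{n-\ell}$ is $i_\ell-i_{\ell+1}$ and that of $\mathbf i$ is $-1$. This minimal dependent set is the invasion circuit $\mathcal C_{\mathbf i}$, and the $2^n-n-1$ non-nested strains give precisely the $2^n-n-1$ circuits of the statement; I fix the sign convention (as announced in the circuit discussion) so that $\mathcal C_{\mathbf i}$ has \emph{positive epistasis} iff $g_{\mathbf i}:=\mathcal R_{\mathbf i}-\sum_{\ell}(i_\ell-i_{\ell+1})\mathcal R_{1^{\ell}0^{n-\ell}}\le 0$. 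Using the equilibrium formulas \eqref{equilib1}--\eqref{equilib2}, namely $x^{*}=1/\mathcal R_k$ (or $1/\mathcal Q_k$) and $z_j^{*}=(\mathcal R_{j-1}-\mathcal R_j)x^{*}$ (up to the boundary term), together with the recursion \eqref{nestedQ} for the constant term, an Abel summation in $j$ collapses $1+\sum_{j\in\Omega_z}(1-i_j)z_j^{*}$ to $x^{*}\sum_{\ell}(i_\ell-i_{\ell+1})\mathcal R_{1^{\ell}0^{n-\ell}}$; hence the invasion rate of $y_{\mathbf i}$ equals $x^{*}g_{\mathbf i}$ and is $\le 0$ exactly when $\mathcal C_{\mathbf i}$ has positive epistasis --- in the $n=2$ case one checks this recovers $\mathcal A_{12}\ge 0$. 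When $\mathbf i$ carries a mutation at an index $>k$ I would compare $\mathbf i$ with its truncation $\mathbf i'$ (high-index bits zeroed): $\mathbf i'\le\mathbf i$, $\mathbf i'\ne\mathbf i$, so \eqref{fitnesscost} gives $\mathcal R_{\mathbf i}<\mathcal R_{\mathbf i'}$; since $i_j=i_j'$ on $\Omega_z$, the invasion rate of $\mathbf i$ equals $(\mathcal R_{\mathbf i}-\mathcal R_{\mathbf i'})x^{*}$ plus that of $\mathbf i'$, and the same telescoping relates $g_{\mathbf i}$ to $g_{\mathbf i'}$ plus chain-of-edge terms controlled by \eqref{fitnesscost} and the thresholds $\mathcal R_\ell\le\mathcal Q_{\ell+1}$, $\ell\ge k$, so the equivalence is preserved. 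Combining Stages 1--3 yields: the equilibrium is stable with the stated persistent strains $\iff$ it is saturated $\iff$ every invasion circuit has positive epistasis $\iff$ the nested pathway has positive epistasis.

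\emph{Main obstacle.} The delicate part is Stage 3: verifying that the invasion rate is \emph{literally} $x^{*}$ times the circuit's vanishing form --- getting the Abel summation, the $\mathcal Q_k$ recursion, and a globally consistent sign convention to line up across all $2^n-n-1$ circuits --- and reconciling a circuit whose form mentions $\mathcal R_\ell$ with $\ell>k$ with an invasion inequality that does not, which is precisely where the fitness-cost monotonicity \eqref{fitnesscost} and the maximality of $k$ must be combined with care.
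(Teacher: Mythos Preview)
Your core argument (Stage 3) is precisely the paper's Proof~I: extend each sequence by a final $1$, note that the extended nested sequences $(1^\ell 0^{n-\ell},1)$ form a triangular basis of $\mathbb R^{n+1}$, solve for the integer coordinates $a_0=1-i_1$, $a_\ell=i_\ell-i_{\ell+1}$, $a_n=i_n$ of an invader, and recognize the invasion rate as $-x^*\mathcal A_{\mathbf i}$ via the telescoping sum $\sum_j(1-i_j)(\mathcal R_{j-1}-\mathcal R_j)$ (your Abel summation). Your Stages~1--2 spell out what the paper compresses into the phrase ``without loss of generality let $k=n-1$'': feasibility and uniqueness via Proposition~\ref{prop33}, and the fact that saturation against missing \emph{nested} species is already the content of \cite{browne2016global}. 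The paper additionally supplies a second, independent proof you do not pursue, by induction on the number $s$ of $(01)$-strings in $\mathbf i$: the base case $s=1$ verifies directly that four nested vertices suffice, and the inductive step absorbs the first $(01)$-block into the preceding run to reduce $s$; this combinatorial route yields the explicit description \eqref{nestcirc2} in Corollary~\ref{maincor}(ii) and establishes minimality by showing that no vanishing form can be supported on the nested network alone. Your truncation argument for invaders with bits set beyond index $k$ goes further than the paper, whose ``WLOG'' should be read as restricting to the full-nested case that the closing sentence of the theorem (``the nested network is stable and persistent'') actually addresses; under that reading your main obstacle largely dissolves.
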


We provide two proofs of the above theorem, given in the appendix.  First, we prove the stability condition pattern by adopting a linear algebra approach where each binary sequence is extended by an additional fixed bit.  This leads to a solvable system of equations for the linear forms and circuits determining nested equilibria stability.   Second, we apply a combinatorial technique to find the strains in the nested network forming the circuit and linear form for each possible invading strain not in the nested network.  In particular, we distinguish a ``non-nested sequence'' $\mathbf i$ by existence of a (01) string, and utilize an induction argument on the number of such strings.  Each method yields equivalent, yet distinct, characterizations of the critical circuits $\mathcal C_i$ and linear forms $\mathcal A_i$, summarized  below in a corollary to Theorem \ref{newthm1}. 

\begin{corollary} \label{maincor}
A necessary and sufficient condition for stability and persistence of the nested network is the positivity of $2^n-n-1$ linear forms $\mathcal A_i$ corresponding to circuits $\mathcal C_i$, each containing a single missing strain $y_i, \ i\in[n+1,2^n-1]$, along with strains in the nested network dependent on the $y_i$ sequence $\mathbf i=(i_1\dots i_n)$ in the  following equivalent ways:
\begin{itemize}
\item[i] Define the sequence $\left(a_{j}\right), \ j=0,1,\dots,n$, where $a_{0}=1-i_{1}$, $a_{j}=i_{j}-i_{j+1}$ for $j=2,\dots,n-1$, $a_{n}=i_{n}$.  Let $\mathcal J_i$ be the nonzero terms in sequence $(a_j)$, i.e. $\mathcal J_i:=\left\{ j\in[0,n]: a_j\neq 0\right\}$, where $a_j=\pm 1$ for $a_j\in\mathcal J_i$.  Then 
\begin{align}
\mathcal C_i&=y_i \cup \left\{ y_j \right\}_{j\in\mathcal J_i}, \qquad \mathcal A_i= -\mathcal R_i +\sum_{j\in\mathcal J_i}a_j \mathcal R_{j}. \label{nestcirc1}
\end{align}
\item[ii]  Let $0\leq m_1< p_1< m_2 < \dots <p_s < m_{s+1}\leq n$ denote the positions $p_1,\dots,p_s$ beginning the $s$ (01) strings and positions $m_1,\dots,m_{s+1}$ of the last ``1'' before and after the (01) strings.  In other words, the sequence $\mathbf i$ in ``power notation'' is given by \\ $\mathbf i= 1^{m_1}0^{p_1-m_1}1^{m_2-p_1} \dots 0^{p_s-m_s}1^{n-m_{s+1}}$.  Then
\begin{align}
\mathcal C_i&=y_i \cup \left\{ y_{m_j},y_{p_j} \right\}_{j=1}^s \cup y_{m_{s+1}}=\mathbf i \cup \left\{1^{m_j}0^{n-m_j},1^{p_j}0^{n-p_j}\right\}_{j=1}^s \cup 1^{m_{s+1}}0^{n-m_{s+1}}, \notag \\
\mathcal A_i&=-\mathcal R_i +\sum_{j=1}^{s+1}\mathcal R_{m_j} -  \sum_{j=1}^{s}\mathcal R_{p_j}. \label{nestcirc2}
\end{align}
\end{itemize}
\end{corollary}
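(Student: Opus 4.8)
The plan is to read off descriptions (i) and (ii) from the two proofs of Theorem~\ref{newthm1} and then verify directly that they define the same circuit and the same linear form. By Theorem~\ref{genThm} (applied as in Theorem~\ref{newthm1}), stability together with uniform persistence of the realized nested equilibrium is equivalent to the saturated inequalities \eqref{inequ}; by the choice of $k$ the inequalities for the missing \emph{nested} strains $y_{k+1},\dots,y_n$ hold automatically, so the only content is the sign of the invasion rate of each \emph{non-nested} strain $y_i$, $i\in[n+1,2^n-1]$, at the equilibrium \eqref{equilib1}--\eqref{equilib2}. Substituting the explicit components $\widetilde x=1/\mathcal R_k$ and $\widetilde z_j=(\mathcal R_{j-1}-\mathcal R_j)/\mathcal R_k$ (respectively the $\bar{\mathcal E}_k$ values in the other case) into the relevant line of \eqref{inequ} and clearing the positive denominator turns ``$y_i$ cannot invade'' into the nonnegativity of an integer combination of reproduction numbers; combining this raw inequality with the threshold identities \eqref{nestedQ} that define $k$ then rewrites it into the canonical form $\mathcal A_i\ge 0$.

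For description (i) I would use the linear-algebra route: append a $1$ to every length-$n$ sequence, so each strain $y_{\mathbf k}$ becomes $\mathbf k1\in\mathbb R^{n+1}$ and a vanishing linear form is precisely a linear dependence among extended vectors. The $n+1$ extended nested sequences $\{1^j0^{n-j}1\}_{j=0}^{n}$ form a triangular, hence invertible, basis of $\mathbb R^{n+1}$, so the extended non-nested sequence $\mathbf i1$ has a unique expansion $\mathbf i1=\sum_{j=0}^{n}a_j\,(1^j0^{n-j}1)$. Reading this off coordinatewise gives the bidiagonal relations $i_m=\sum_{j\ge m}a_j$ for $1\le m\le n$ and $\sum_{j}a_j=1$, whose solution is exactly $a_0=1-i_1$, $a_j=i_j-i_{j+1}$ for $1\le j\le n-1$, $a_n=i_n$; the support $\mathcal J_i$ then yields $\mathcal C_i$ and $\mathcal A_i$ as in \eqref{nestcirc1}. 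Minimality of $\mathcal C_i$, so that it is genuinely a circuit, follows from uniqueness of the basis expansion: deleting $y_i$ leaves a subset of the nested basis, while deleting any $y_j$ with $j\in\mathcal J_i$ would give $\mathbf i1$ a second, distinct representation in that basis. The overall sign of $\mathcal A_i$ is pinned, once and for all, to be the one for which $\mathcal A_i\ge 0$ agrees with the ``no invasion'' inequality from the first paragraph, and consistency of this choice over all $i$ is what that paragraph supplies.

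For description (ii) I would use the combinatorial route: write $\mathbf i$ in power notation, mark the positions $p_1<\dots<p_s$ that start its $(01)$-strings and the positions $m_1<\dots<m_{s+1}$ that end its maximal $1$-runs (with $m_1=0$ if $\mathbf i$ starts with $0$ and $m_{s+1}=n$ if it ends with $1$), and establish the identity $\mathbf i=\sum_{j=1}^{s+1}1^{m_j}0^{n-m_j}-\sum_{j=1}^{s}1^{p_j}0^{n-p_j}$ together with $(s+1)-s=1$, which is exactly the vanishing form \eqref{nestcirc2}; an induction on $s$ (as indicated in the appendix) peels off the first $(01)$-string and reduces to the base case $s=1$, a four-element ``(ones') complement'' circuit measuring marginal epistasis of a block of loci against the remaining ones. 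The equivalence of (i) and (ii) is then an elementary scan of $(a_j)_{j=0}^{n}$: its $+1$ entries sit precisely at the run-ends $m_1,\dots,m_{s+1}$, its $-1$ entries precisely at the $(01)$-starts $p_1,\dots,p_s$, and it has one more $+1$ than $-1$ because $0$- and $1$-runs alternate; hence the two supports, sign patterns, and linear forms coincide, and $|\mathcal C_i|=2s+2$.

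The step I expect to be the main obstacle is the bookkeeping in the first paragraph. The raw saturated inequality for $y_i$ at $\widetilde{\mathcal E}_k$ sees only the strains $y_0,\dots,y_k$ and responses $z_1,\dots,z_{k-1}$ that are actually present, whereas the canonical circuit $\mathcal C_i$ may involve nested strains $y_j$ with $j>k$; showing that the threshold relations \eqref{nestedQ} defining $k$ let one telescope the former into the latter --- equivalently, that ``positive epistasis as a pathway'' does not depend on which prefix of the nested pathway is realized at equilibrium --- is the substantive part, after which the identification of $\mathcal A_i$ and the equivalence of (i) and (ii) are routine.
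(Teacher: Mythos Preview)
Your proposal is correct and follows essentially the same two-pronged approach as the paper: the linear-algebra route (extended sequences forming a basis of $\mathbb R^{n+1}$, solving the triangular system for the coefficients $a_j$) is exactly the paper's Proof~I of Theorem~\ref{newthm1}, and the combinatorial route (power notation, induction on the number $s$ of $(01)$-strings) is exactly Proof~II. The corollary itself has no separate proof in the paper; it is stated as a summary of what the two proofs of the theorem produce.

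Two small remarks. First, the paper disposes of the case $k<n$ with a one-line ``without loss of generality, let $k=n-1$'' and then works at the full nested equilibrium, so the telescoping you anticipate as the main obstacle is effectively sidestepped rather than carried out; your instinct that this reduction deserves a sentence of justification is sound, but it is not elaborated in the paper. Second, your explicit argument that descriptions (i) and (ii) agree---scanning the sign pattern of $(a_j)$ and matching $+1$'s to run-ends $m_j$ and $-1$'s to $(01)$-starts $p_j$---is a nice addition; the paper asserts the equivalence but does not spell it out.
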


In order to illustrate Theorem \ref{newthm1} and accompanying Corollary \ref{maincor}, we first discuss the model dynamics in the case $n=2$, which is depicted in Fig. \ref{3a} and was found to have precisely 10 distinct feasible persistent variant sets (global asymptotic stability in 8 of these regimes) in \cite{browne2018dynamics}.  In this case, there is just one ``non-nested'' strain, $y_{01}$, with the single mutation escaping the second (subdominant) immune response $z_2$.  The single circuit consists of this strain together with the nested strains, totaling the whole sequence space, i.e. $\mathcal C= \left\{01,00,10,11\right\}$, along with the corresponding linear form $\mathcal A=\mathcal R_{00}-\mathcal R_{10}-\mathcal R_{01}+\mathcal R_{11}$.   Thus the sign of the single quantity $\mathcal A$ determines the stability and persistence of the nested network.  Here $\mathcal A>0$ implies that the persistent strains and positive components of the stable equilibria lie within the nested network $\mathcal N=\left\{00,10,11\right\}$.  The  precise persistence structure when $\mathcal A>0$ depends upon which of equilibria \eqref{equilib1} and \eqref{equilib2} are positive.  In particular, the diversity increases stepwise from just the wild-type virus $y_0$ to both immune responses $z_1,z_2$ and three nested strains $y_0,\dots, y_k$ based upon the largest $k$ such that $\mathcal R_{k-1}>\mathcal Q_k$ and whether $\mathcal R_k>\mathcal Q_k$, $k=1,2$, where $z_1,z_2$ persist when $\mathcal R_1>\mathcal Q_2$.   On the other hand when $\mathcal A<0$ (which implies $\mathcal R_1>\mathcal Q_2$ and $z_1,z_2$ persist), the nested equilibrium is invaded by $y_3 \ (01)$.  Yet $y_1 \ (10)$ always persists when any immune escape occurs, independent of the sign of $\mathcal A$ and even when $y_3$ would have a larger escape rate in the single epitope case.  Thus, we suggested in \cite{browne2018dynamics} that immunodominance may be the most important factor in multi-epitope escape, which was also inferred from data analysis in a previous study of HIV \cite{liu2013vertical}.



The feasible strain invasions obtained for $n=2$ in previous work \cite{browne2018dynamics} can be seen as the simplest example of a more general pattern for bifurcations from nested equilibria obtained from Theorem \ref{newthm1} and Proposition \ref{prop1new}.   When the (sign) epistasis in one of the circuits defining the nested pathway becomes negative, the nested network becomes unstable and a transcritical bifurcation occurs.  In particular, a missing strain invades the nested network when the corresponding circuit goes from positive to negative epistasis.  In the critical case of zero epistasis, or circuit additivity, there is a line of equilibria, given by Proposition \ref{prop1new}, which connects the nested equilibrium with the invasion equilibrium.  Indeed consider the nested equilibrium $\widetilde{\mathcal E}_n$.  We arrange the (persistent) nested virus components, together with the invading strain, in the vector $\widetilde{v}=( \widetilde y, 0)^T$, where the $\widetilde y$ is from \eqref{equilib1} and the last component is the invading strain, $y_i$, which is zero when at equilibrium $\widetilde{\mathcal E}_n$.  In the critical case, where the linear form $\mathcal A_i$ corresponding to circuit $\mathcal C_i$ is zero, there is a line of equilibria given by $\mathbf v^*=\widetilde{\mathbf v}-\alpha \mathbf a$ where $\mathbf a$ is the (circuit) coefficients of $\mathcal A_i$ and $0\leq \alpha \leq C$ with $C=\min\left\{ a_k \widetilde y_k : a_k>0, k=0,1,\dots,n\right\}$.  Thus, in the bifurcation where $y_i$ invades $\widetilde{\mathcal E}_n$, the invading strain replaces one of the nested strains in the circuit with positive coefficient $(a_k>0)$, in particular the above ``$C$-minimizing'' nested strain, $\arg\min\left(\left\{ a_k \widetilde y_k : a_k>0, k=0,1,\dots,n\right\}\right)$.  By the proof of Theorem \ref{newthm1}, the positive coefficients correspond to a subset of nested strains given in order as:  $y_{m_i}, \ i=1,\dot,s$, where $0\leq m_1 < p_1 < m_2< \dots <p_s < m_{s+1} \leq n+1$ count the maximal position of a $1$ before each of $s$ $01$ strings (each at position $p_1,\dots, p_s$) in the sequence of the missing strain.  
Which of these feasible strains are replaced depends on the model parameters.  Notice that if $a_k=1$ for all $k$ such that $a_k>0$, then in a feasible equilibrium after invasion by $y_i$, the replaced strain would be the ``circuit positive coefficient'' nested strain with smallest value at the nested equilibrium.  Thus, the replaced strain must have the property of being the inferior competitor in the nested hierarchy with a positive coefficient in circuit linear form.  In the following subsection, we will see a similar principle in invasion of another equilibria structure besides the nested structure, namely the one-to-one network.  

  \begin{figure}[t!]
\subfigure[][]{\label{3a} \includegraphics[width=.51\textwidth,height=.25\textwidth]{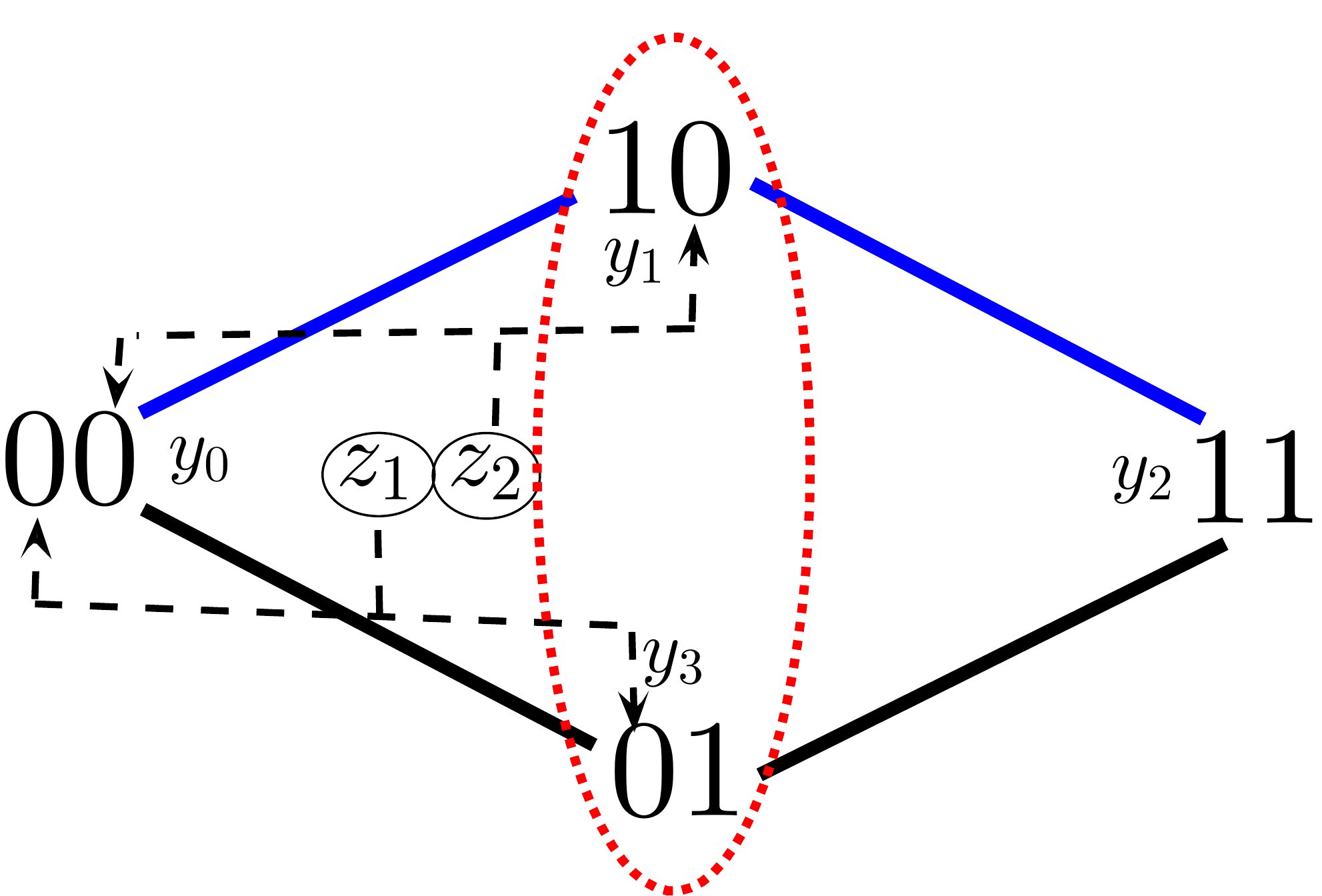}}
\subfigure[][]{ \label{3b}\includegraphics[width=.51\textwidth,height=.25\textwidth]{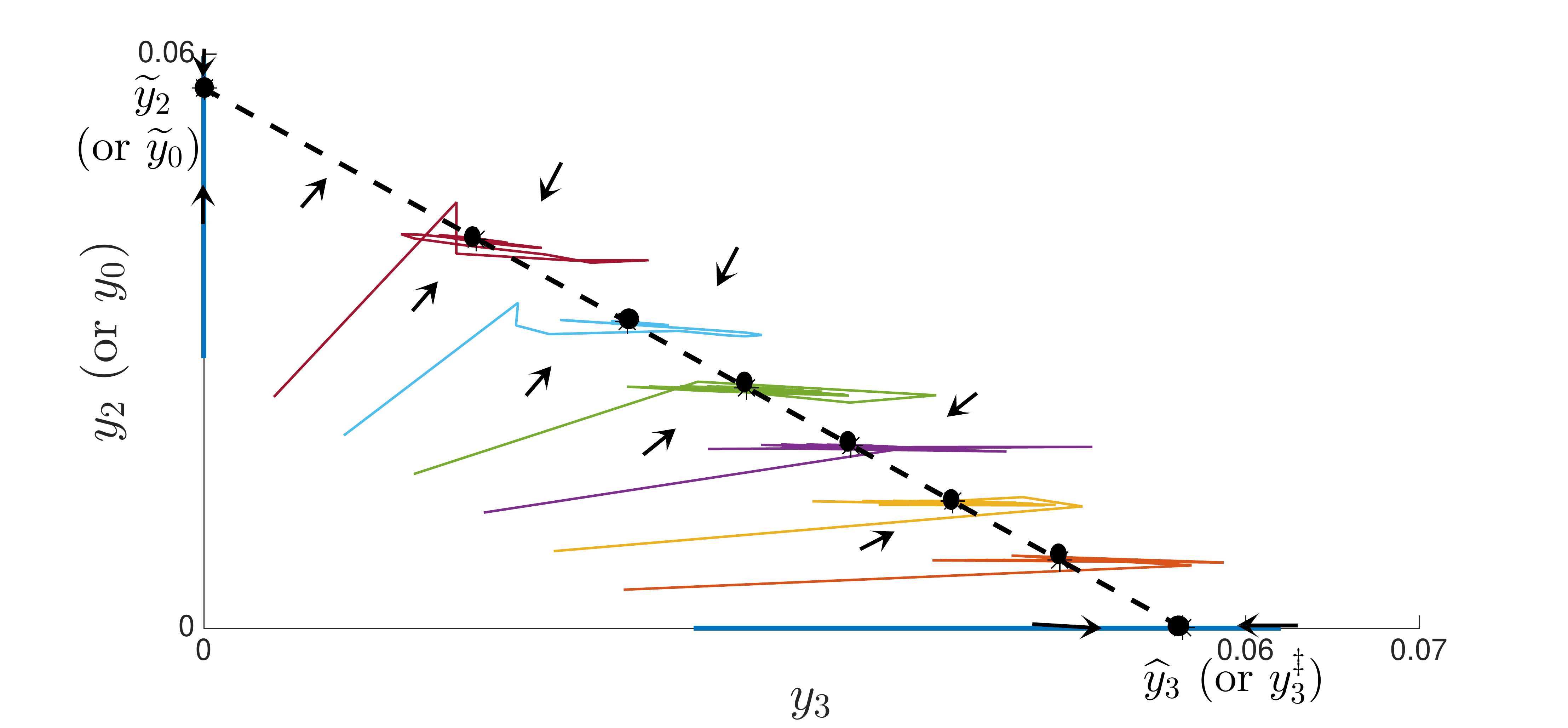}}   
\caption{ \emph{In the case $n=2$ epitopes, a single circuit of all viral strain binary sequences with corresponding linear combination of reproduction numbers determines epistasis and steady state}. There are 10 distinct regimes of persistent variants (global asymptotic stability in 8 regimes). In particular, the viral fitness epistasis measure $\mathcal A:=\mathcal R_{00}-\mathcal R_{10}-\mathcal R_{01}+\mathcal R_{11}$ (\textcolor{blue}{synergistic $\mathcal A>0$} versus \textcolor{red}{antagonistic $\mathcal A<0$}) decides stable nested equilibrium ($\widetilde{\mathcal E}_2$ or $\bar{\mathcal E}_2$ containing $\left\{00,10\right\}$ persistent strain set) versus stable one-to-one or $\leq 1$ mutation equilibrium ($\mathcal E^{\ddagger}_2$, $\mathcal E^{\dagger}_2$ or $\widehat{\mathcal E}_2$ containing $\left\{10,01\right\}$ persistent strain set), as a result of Thm.  \ref{newthm1}, \ref{ssF} and \ref{omprop}. (b) Bifurcation at $\mathcal A=0$ (additive fitness landscape) presents line of equilibria connecting nested $\widetilde{\mathcal E}_2$ and one-to-one $\mathcal E^{\ddagger}_2$ (or $\leq 1$ mutation $\widehat{\mathcal E}_2$), projected on $y_3, y_2 \ (y_0)$ axis.  }
  \label{fig_2ep}
 \end{figure} 
 
A major advantage of investigating the critical case of virus strain $y_i$ invading a known equilibria structure (here the nested network) is that new equilibria can be obtained by application of Proposition \ref{prop1new}, the circuit coefficients, and known equilibria values.  The line of equilibria (virus and immune components denoted by $\mathbf y$ and $\mathbf z$) remain positive in some neighborhood around the bifurcation parameter set where the circuit linear form, $\mathcal A_i$, is zero.  Indeed, the values of $\mathbf z$ remain constant throughout the line of equilibria for $\mathcal A_i=0$, so the positive components in the boundary nested equilibrium carry over to the boundary equilibrium of the new invasion equilibrium.
For the simple case of $n=2$ described above, the loss of stability of $\widetilde{\mathcal E}_2$ when $\mathcal A=0$ results in strain $y_3 \ (01)$ replacing either $y_0 \ (00)$  if $\widetilde{y}_{0}<\widetilde{y}_{2}$, or $y_2 \ (11)$ if $\widetilde{y}_{0}>\widetilde{y}_{2}$ (displayed in Fig. \ref{3b}).  By \eqref{equilib1}, the strain which is replaced depends upon the sign of $s_1-1+\frac{ \mathcal Q_{2}}{\mathcal R_{2}}$.
In the case of $n=3$, there are $2^3-3-1=4$ circuits corresponding to a non-nested invading strain.   
Explicitly the circuits, characterized by the corresponding linear form (with the non-nested strain term appearing first), are as follows: (i) $\mathcal A_4=-\mathcal R_{010}+\mathcal R_{000}-\mathcal R_{100} + \mathcal R_{110}$, (ii) $\mathcal A_5=-\mathcal R_{001}+\mathcal R_{000}-\mathcal R_{110} + \mathcal R_{111}$, (iii) $\mathcal A_6=-\mathcal R_{101}+\mathcal R_{100}-\mathcal R_{110}+\mathcal R_{111}$, (iv) $\mathcal A_7=-\mathcal R_{011}+\mathcal R_{000}-\mathcal R_{100} +\mathcal R_{111}$.  Thus, Theorem \ref{newthm1} implies the nested equilibrium is stable if and only if all of the quantities (i)-(iv) are positive.  Furthermore, in each case that a single inequality fails, the following bifurcation occurs where the missing strain replaces a nested strain $y_j$ where $j$ is determined by $ (i) \ \arg\min_{j=0,2} (\widetilde y_j), (ii) \ \arg\min_{j=0,3} (\widetilde y_j) , (iii) \ \arg\min_{j=1,3} (\widetilde y_j) ,(iv) \ \arg\min_{j=0,3} (\widetilde y_j)$, where $\widetilde y_j$ are defined in terms of viral and immune response fitness quantities in \eqref{equilib1}.  For example, if a bifurcation from nested equilibrium  $\widetilde{\mathcal E}_3$ occurs through inequality (iii) switching sign, then $y_6 \ (101)$ replaces either $y_1$ or $y_3$, depending on whether $\widetilde y_1<\widetilde y_3$, i.e. $s_1<1-\frac{ \mathcal Q_{3}}{\mathcal R_{3}}$.  In the case this inequality holds and $y_1$ is replaced, the new stable equilibrium will consist of persistent strain (sequence) set $\left\{ 101,000,110,111\right\}$.  For $n=4$, there are 11 circuits determining stability of nested network, 10 of which consist of 4 strains (ones' complement circuits) and one that has 6 strains in the circuit, $\mathcal A_{0101}:=-\mathcal R_{0101}+\mathcal R_{0000}-\mathcal R_{1000}+\mathcal R_{1100}-\mathcal R_{1110}+\mathcal R_{1111}$.    Thus in the case of invasion of the nested equilibrium $\widetilde{\mathcal E}_4$ by strain $0101$, there are 3 possible strain replacements and (in terms of integer indexing) $\arg\min_{j=0,2,4} (\widetilde y_j)$ determines which nested strain is replaced.  
 
 We can expand upon our observation of the importance of immunodominance in determining viral evolution.  We notice that in any of the invasion scenarios, a viral strain containing minimal sequential mutations to the most immunodominant responses will remain in the equilibrium, no matter the fitness costs.  For $n=2$, we had observed that $y_1 \ (strain \ 10)$ always persists.  For $n=3$, the only invasion scenario where $y_1$ does not persist can be the case of $101$ invasion with invasion equilibrium consisting of strain sequences $\left\{ 000,110,101,111\right\} $.  For the nested equilibrium with $n+1$ strains, $\widetilde{\mathcal E}_n$, replacement of the immunodominant resistant strain $y_1$ only can occur with invasion by a non-nested strain with resistance at the first epitope (sequence of form $10\dots$ with at least two ``$1$'' alleles), so that all strains will have at least 2 mutations.

\subsection{One-to-one network determined by epistasis}

Now we turn to another possible persistent equilibrium assemblage of virus and immune response variants; the one-to-one (or strain-specific) network.  Consider the viral strains that have gained resistance to $n$ or $n-1$ immune response, forming a subsystem of \eqref{odeS} with the $m=n+1$ strains containing more than $n-1$ mutations ($n-1$ ones in binary sequence).  For convenience, we index the strains according to the position of the susceptible epitope (zero in binary sequence), so that in more general equations \eqref{ode3},  $y_i, \ i=1,\dots, n+1$ has epitope set $\Lambda_i=\left\{i\right\}$ or $\Lambda_{n+1}=\emptyset$ and $A$ is a $n+1\times n$ matrix comprised of the diagonal matrix ${\rm diag}\left(a_1,\dots, a_n\right)$ and a row of zeros. This subsystem of a ``one-to-one'' interaction network, where each immune response population attacks a unique specific viral strain, has been considered in \cite{wolkowicz1989successful,korytowski2015nested,bobko2015singularly}.  Stability and persistence results, analogous to \cite{browne2016global} for the nested subsystem, were proved in \cite{wolkowicz1989successful} for the one-to-one network under the assumption of decreasing reproduction numbers $\mathcal R_i<\mathcal R_{i+1}, \ i=1,\dots,n$.  In this case, for $k\in[0,n]$, the relevant strain-specific equilibria are $\mathcal E^{\ddagger}_{k+1}= (x^{\ddagger},y^{\ddagger},z^{\ddagger}), \mathcal E^{\dagger}_{k}   = (x^{\dagger},y^{\dagger},z^{\dagger})$, where:
 \begin{align}
x^{\ddagger}&=\frac{1}{\mathcal R_{k+1}}, \quad y_i^{\ddagger}=s_i, \quad z_i^{\ddagger}=\frac{\mathcal R_i}{\mathcal R_{k+1}}-1, \quad  i=1,\dots, k, \quad y_{k+1}^{\ddagger}=1-\frac{\mathcal P_k}{\mathcal R_{k+1}}, \label{SSeq1} \\ z_{k+1}^{\ddagger}&= 0, \quad y_i^{\ddagger}= z_i^{\ddagger}= 0, \quad k+1<i\leq n, \quad \text{with}   \  \mathcal P_{k}= \mathcal P_{k-1}+ s_{k}\mathcal R_{k}, \quad  \mathcal P_0=1, s_k=1/\mathcal I_k, \notag \\
x^{\dagger}&=\frac{1}{\mathcal P_k},  \quad y_i^{\dagger}=s_i,  \quad z_i^{\dagger}=\frac{\mathcal R_i}{\mathcal P_k}-1, \quad  i=1,\dots, k,  \quad y_i^{\dagger}= z_i^{\dagger}= 0, \quad k+1\leq i\leq n. \label{SSeq2} 
\end{align}

If the assumption of strictly decreasing reproduction numbers is relaxed, then the strain-specific subsystem can have multiple degenerate saturated equilibria.  However, the full hypercube network for $n$ epitopes containing $2^n$ virus strains (model \eqref{odeS}) allows us to relax this particular assumption on reproduction numbers.  
 Indeed, we previously proved that the only \emph{strain-specific equilibria} (with persistent strains contained in one-to-one network) which can be stable in the full hypercube network \eqref{odeS} are equilibria with persistent strains $y_1,\dots,y_n$ ($\mathcal E^{\dagger}_n$), and  with persistent strains $y_1,\dots,y_{n+1}$ ($\mathcal E^{\ddagger}_{n+1}$) \cite{browne2018dynamics}.  Here we expand upon these results by showing, analogous to the nested network, the stability of the one-to-one network is determined by $2^n-n-1$ circuits corresponding to potential invading strains as proved in the following theorem.
\begin{theorem}\label{ssF}
Consider system \eqref{odeS} on the full network with $n$ epitopes ($m=2^n$ virus strains) and fitness costs \eqref{fitnesscost}.  Suppose the viral strains, $y_i \ i=0,\dots, 2^n-1$, are ordered so that $\Lambda_j=\left\{j\right\}$ for $j=1,\dots,n$ and $\Lambda_{n+1}=\emptyset$ (where $\Lambda_j$ denotes strain $j$ epitope set \eqref{assumeaj}).  If $\mathcal E^{\dagger}_n$ or $\mathcal E^{\ddagger}_{n+1}$  is positive, then 
$\mathcal E^{\dagger}_n$ (if  $\mathcal R_{n+1}\leq \mathcal P_n$) or $\mathcal E^{\ddagger}_{n+1}$ (if  $\mathcal R_{n+1} > \mathcal P_n$)
is stable if and only if $\mathcal A_{\mathbf i}>0$, where $i=0,n+2,\dots 2^n-1$, and linear forms $\mathcal A_i$  correspond to invasion circuits $\mathcal C_{\mathbf i}$, as characterized below:
\begin{align}
\mathcal C_{\mathbf i}&=y_{\mathbf i} \cup \left\{ y_j \right\}_{i_j=1}, \qquad \mathcal A_{\mathbf i}= -\mathcal R_{\mathbf i} -\left( |\Lambda_{i}| - 1 \right) \mathcal R_{n+1} + \sum_{j\notin\Lambda_i} \mathcal R_{j} \qquad (j\in [1,n]). \label{ss_circuit}
\end{align}
Furthermore $y_1,\dots,y_n$ are persistent strains ($y_{n+1}$ also if $\mathcal R_{n+1} > \mathcal P_n$) and this the only scenario where strain-specific equilibria, \eqref{SSeq1}- \eqref{SSeq2}, can be stable in the full model.
\end{theorem}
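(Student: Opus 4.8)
The plan is to deduce Theorem~\ref{ssF} from the general stability criterion Theorem~\ref{genThm}, by decoding the saturation inequalities \eqref{inequ} at the one-to-one equilibria into the positivity of the invasion circuits \eqref{ss_circuit}, exactly parallel to the nested case (Theorem~\ref{newthm1}). First I would note that whenever $\mathcal E^{\dagger}_n$ or $\mathcal E^{\ddagger}_{n+1}$ is positive all $n$ immune responses are present, so the immune part of \eqref{inequ} is vacuous and saturation is equivalent to the statement that no missing viral strain can invade. Second, I would check uniqueness of the candidate equilibrium in its positivity class via Proposition~\ref{prop33}: the restricted interaction matrix is $A'=\mathrm{diag}(a_1,\dots,a_n)$ for $\mathcal E^{\dagger}_n$, and that matrix with an appended zero row for $\mathcal E^{\ddagger}_{n+1}$, which is injective; in the latter ($m'=n'+1$) case $\mathrm{Ker}\,(A')^{T}=\mathrm{span}\{\mathbf e_{n+1}\}$, which is not orthogonal to $\vec{\mathcal R}'$ since $\mathbf e_{n+1}\cdot\vec{\mathcal R}'=\mathcal R_{n+1}\neq0$. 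Thus the hypotheses of Theorem~\ref{genThm} are met, and once saturation is established it yields local stability, $x(t)\to x^{*}$, extinction of every non-one-to-one strain under strict inequalities, persistence of $y_1,\dots,y_n$ (plus $y_{n+1}$ when $\mathcal E^{\ddagger}_{n+1}$ is selected), and convergence of the time-averages of the persistent populations to equilibrium.

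The core step is the explicit invasion-rate computation. With $x^{*}=1/\mathcal R^{*}$, where $\mathcal R^{*}=\mathcal P_n$ at $\mathcal E^{\dagger}_n$ and $\mathcal R^{*}=\mathcal R_{n+1}$ at $\mathcal E^{\ddagger}_{n+1}$, and $z_j^{*}=\mathcal R_j/\mathcal R^{*}-1$ from \eqref{SSeq1}--\eqref{SSeq2}, the invasion rate of a missing strain $y_{\mathbf i}$ with epitope set $\Lambda_{\mathbf i}=\{j:i_j=0\}$ is
\[
\mathcal R_{\mathbf i}x^{*}-1-\sum_{j\in\Lambda_{\mathbf i}}z_j^{*}=\frac{1}{\mathcal R^{*}}\Bigl(\mathcal R_{\mathbf i}-\sum_{j\in\Lambda_{\mathbf i}}\mathcal R_{j}+(|\Lambda_{\mathbf i}|-1)\,\mathcal R^{*}\Bigr).
\]
I would then identify the bracketed quantity, up to sign, with the vanishing linear form of the minimal set $\mathcal C_{\mathbf i}=\{y_{\mathbf i}\}\cup\{y_j\}_{j\in\Lambda_{\mathbf i}}\cup\{y_{n+1}\}$ realizing \eqref{ss_circuit}: after appending a fixed bit $1$ to every sequence, the coefficient vector with $-1$ on $\mathbf i$, $+1$ on each $y_j$ ($j\in\Lambda_{\mathbf i}$), and $-(|\Lambda_{\mathbf i}|-1)$ on $y_{n+1}$ annihilates both $\sum a_{\mathbf k}$ and $\sum a_{\mathbf k}\mathbf k$, because $\sum_{j\in\Lambda_{\mathbf i}}\mathbf e_j=\mathbf 1-\mathbf i$, while a short rank count on the extended vectors shows the set is a circuit (using $|\Lambda_{\mathbf i}|\ge2$, which is automatic for every missing strain except $y_{n+1}$). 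Thus at $\mathcal E^{\ddagger}_{n+1}$ saturation is exactly $\mathcal A_{\mathbf i}\ge0$ for all missing $\mathbf i$, strict iff $\mathcal A_{\mathbf i}>0$; and the choice between the two equilibria is decided by the invasion rate of $y_{n+1}$ at $\mathcal E^{\dagger}_n$, namely $\mathcal R_{n+1}/\mathcal P_n-1$, which is nonpositive precisely when $\mathcal R_{n+1}\le\mathcal P_n$. I would also carry out the extended-bit linear-algebra computation used for Theorem~\ref{newthm1} as an independent derivation of the circuits, and quote from \cite{browne2018dynamics} — together with its competitive-exclusion results — the assertion that no other strain-specific assemblage can be stable in the full hypercube.

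The step I expect to be the main obstacle is the bookkeeping in this identification: for each of the $2^n-n-1$ missing strains one must verify that the susceptibility pattern $\Lambda_{\mathbf i}$ selects precisely the one-to-one strains forming its circuit and that the signed sum of reproduction numbers equals the invasion rate up to the positive factor $1/\mathcal R^{*}$; and, most delicately, one must handle the $\mathcal E^{\dagger}_n$ branch consistently, since there it is $1/x^{\dagger}=\mathcal P_n$ rather than a viral fitness that enters the criterion — so the $\mathcal E^{\dagger}_n$ conditions must be phrased with $\mathcal P_n$ in place of $\mathcal R_{n+1}$ (equivalently, argue that in the regime $\mathcal R_{n+1}\le\mathcal P_n$ the binding invasion inequalities are the circuit forms with $\mathcal R_{n+1}$ replaced by $\mathcal P_n$), and reconciled with the uniform statement in terms of the $\mathcal A_{\mathbf i}$. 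A secondary nuisance is confirming minimality of each $\mathcal C_{\mathbf i}$, so that it is genuinely a circuit and the vanishing form is unique up to scaling.
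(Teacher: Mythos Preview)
Your proposal is correct and follows essentially the same route as the paper: the paper quotes Proposition~6 of \cite{browne2018dynamics} for the saturation inequalities at $\mathcal E^{\dagger}_n$ and $\mathcal E^{\ddagger}_{n+1}$ (which is exactly your direct invasion-rate computation), and then uses the identical extended-sequence basis argument to identify $\mathcal C_{\mathbf i}$ as a circuit with coefficients $a_k=1-i_k$ and $a_{n+1}=-(|\Lambda_i|-1)$. Your flagging of the $\mathcal E^{\dagger}_n$ branch---where the invasion inequality carries $\mathcal P_n$ rather than $\mathcal R_{n+1}$---is well placed; the paper handles this only by writing the condition as $\mathcal A_i/K_n\ge 0$ with $K_n\in\{\mathcal P_n,\mathcal R_{n+1}\}$ and does not spell out the reconciliation you describe.
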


Note the proof of this theorem is in Appendix, and here we make a few remarks to interpret the result.  First, observe that the reproduction number $\mathcal R_i$ of a potential invading strain $y_i$, depends on its epitope set $\Lambda_i$.  Because each mutation comes with a fitness cost \eqref{fitnesscost}, $\mathcal R_i$ roughly correlates with number of susceptible (non-mutated) epitopes, $|\Lambda_{i}|$, and thus both negative terms and the positive summation in \eqref{ss_circuit} increase with $|\Lambda_{i}|$.  Therefore, there is no general rule for determining the sign of invading strain circuits corresponding to the one-to-one network, each depending on the relevant combinations of fitness costs, i.e. epistasis.  We can discuss possible strain replacements for invasion of $\mathcal E^{\ddagger}_{n+1}$ as before.  In this case, we find that the replaced strain is $y_{\min( j\in \Lambda_i)}$, i.e. the strain susceptible to strongest immune response among the susceptible epitopes of strain $\mathbf i$, since this strain has lowest value in equilibrium corresponding to positive coefficient in circuit.  Compared to the $n+1$ strain nested network ($\widetilde{\mathcal E}_n$), the ``invasion circuit'' and strain replacement of the $n+1$ strain one-to-one network ($\mathcal E^{\ddagger}_{n+1}$) is simpler to determine.  Note that invasion of the $n$ strain $\mathcal E^{\dagger}_{n}$ can result in addition of the new strain rather than replacement, and the critical case does not correspond to a line of equilibria as with the $n+1$ strain equilibria.  As an example of circuit linear forms \eqref{ss_circuit} for stability, consider the case $n=3$, where strains with 1-mutation have : $\mathcal A_{100} = -\mathcal R_{100} - \mathcal R_{111} +  \mathcal R_{101}+ \mathcal R_{110}$, $\mathcal A_{010} = -\mathcal R_{010} - \mathcal R_{111} +  \mathcal R_{011}+ \mathcal R_{110}$, $\mathcal A_{001} = -\mathcal R_{001} - \mathcal R_{111} +  \mathcal R_{011}+\mathcal R_{101}$.  Each corresponds to an embedded 2-cube measuring marginal epistasis with their 1 mutation fixed.  Note that $\mathcal A_{100}=-\mathcal A_6$, where $\mathcal A_6$ also is the circuit corresponding to invasion of nested equilibrium by $(101)$.   Now consider potential invasion by the wild-type strain ($000$) given by $\mathcal A_{000}= -\mathcal R_{000}-2\mathcal R_{111}+\mathcal R_{011}+\mathcal R_{101}+\mathcal R_{110}$, which biologically tells us whether the two-mutation associations predict
the three-mutation combination.  Of note, the sign of this circuit does not have a two-locus interpretation,
making them truly of higher-order \cite{gould2018microbiome}.

\subsection{Other equilibrium network structures and open questions}
The full utility of the circuit analysis comes with bifurcations of equilibria with $n+1$ strains, as our above examples illustrate, because the critical state corresponds to persistent strains forming a circuit in Proposition \ref{prop1new}.  How far can we go with this analysis?  Can we generalize to all equilibrium structures?  Observe from the proofs of Theorem \ref{newthm1} and Theorem \ref{ssF} that the two equilibrium networks considered, nested and one-to-one, with $n+1$ strains ($\widetilde{\mathcal E}_{n}$ and $\mathcal E^{\ddagger}_{n+1}$) form a basis of $\mathbb R^{n+1}$ when the strains are considered as binary sequences with a one addended at the the end of sequences, and moreover every binary sequence has integer coordinates with respect to this basis.  This directly leads to the ``invasion circuits'', and this is generalized to any assemblage of $n+1$ strain sequences in the following proposition (proof in Appendix):  
\begin{proposition}\label{genprop}
Suppose $\mathcal S\subset \left\{0,1\right\}^n$ is the set of binary sequences of an equilibrium, $\mathcal E^*$, with $n+1$ strains ($|\Lambda_y|=n+1$).  Assume that $\mathcal S\times \left\{1\right\}$ is a basis of $\mathbb R^{n+1}$ and any addended binary sequence $\mathbf i1\in \left\{0,1\right\}^n\times \left\{1\right\}$ has integer coordinates with respect to this basis.  Then for all $\mathbf i \in \left\{0,1\right\}^n \setminus \mathcal S$, $\mathcal C= \left\{\mathbf i\right\}\cup \mathcal S$ forms a circuit where a linear form $\mathcal A_{\mathbf i}$ is given by the coordinates of $\mathbf i$ with respect to $\mathcal S\times \left\{1\right\}$.  Furthermore, the stability of $\mathcal E^*$ is determined by the sign of $\mathcal A_{\mathbf i}$. 
\end{proposition}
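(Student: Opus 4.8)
The plan is to reduce the statement to the saturated‑equilibrium criterion of Theorem~\ref{genThm} by showing that the invasion rate of each missing strain at $\mathcal E^*$ equals, up to the positive factor $x^*$, a circuit linear form, so that stability is read off directly from the signs of the $\mathcal A_{\mathbf i}$. The main tool is the ``addended bit'' dictionary set up just before Proposition~\ref{prop1new}: additive fitness landscapes on $\{0,1\}^n$ are precisely the restrictions to the addended sequences $\{\mathbf k1\}$ of linear functionals on $\mathbb R^{n+1}$, and a vanishing linear form on a set $\mathcal T$ is exactly an integer linear dependence among the vectors $\{\mathbf k1:\mathbf k\in\mathcal T\}$.

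First I would fix notation. By Proposition~\ref{prop33}(ii), an equilibrium $\mathcal E^*$ whose persistent viral set $\mathcal S\subset\{0,1\}^n$ has cardinality $n+1$ necessarily has all $n$ immune responses persisting and $x^*>0$. Fix $\mathbf i\in\{0,1\}^n\setminus\mathcal S$. Because $\mathcal S\times\{1\}$ is a basis of $\mathbb R^{n+1}$ there are unique scalars $c_{\mathbf k}$ with $\mathbf i1=\sum_{\mathbf k\in\mathcal S}c_{\mathbf k}\,(\mathbf k1)$, and $c_{\mathbf k}\in\mathbb Z$ by hypothesis; define $\mathcal A_{\mathbf i}:=-\mathcal R_{\mathbf i}+\sum_{\mathbf k\in\mathcal S}c_{\mathbf k}\mathcal R_{\mathbf k}$. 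Reading the last coordinate of $\mathbf i1=\sum c_{\mathbf k}(\mathbf k1)$ gives $\sum_{\mathbf k}c_{\mathbf k}=1$, so the coefficient vector of $\mathcal A_{\mathbf i}$ (namely $-1$ on $\mathbf i$ and $c_{\mathbf k}$ on $\mathbf k$) is integral and sums to zero; and since $\mathcal R_{\mathbf j}=\phi(\mathbf j1)$ for a linear $\phi$ in the additive case, $\mathcal A_{\mathbf i}=\phi\!\big(-\mathbf i1+\sum c_{\mathbf k}(\mathbf k1)\big)=0$ there. Hence $\mathcal A_{\mathbf i}$ is a vanishing linear form; its support $\{\mathbf i\}\cup\{\mathbf k\in\mathcal S:c_{\mathbf k}\neq0\}$ (which coincides with $\{\mathbf i\}\cup\mathcal S$ when all $c_{\mathbf k}\neq0$) is a circuit by the standard fundamental‑circuit argument: it is linearly dependent, deleting $\mathbf i1$ leaves a subset of a basis, and deleting any $\mathbf k1$ with $c_{\mathbf k}\neq0$ would force $\mathbf i1$ into the span of the remaining vectors, contradicting uniqueness of the basis representation.

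The crux is then to identify the invasion rate with this form. For $\mathbf j\in\{0,1\}^n$ write $F(\mathbf j):=\mathcal R_{\mathbf j}x^*-1-\sum_{l=1}^n(1-j_l)z_l^*$, which is exactly the invasion rate of strain $\mathbf j$ evaluated at $\mathcal E^*$ appearing in \eqref{inequ} for model \eqref{odeS}; by the equilibrium relations \eqref{LVequil2}, $F(\mathbf k)=0$ for every $\mathbf k\in\mathcal S$. The key observation is the splitting $F(\mathbf j)=x^*\mathcal R_{\mathbf j}-\psi(\mathbf j1)$, where $\psi=\big(-z_1^*,\dots,-z_n^*,\,1+\textstyle\sum_l z_l^*\big)$ is a linear functional on $\mathbb R^{n+1}$ with $\psi(\mathbf j1)=1+\sum_l(1-j_l)z_l^*$: the constant $-1$ together with $-\sum_l(1-j_l)z_l^*$ coalesce into a single linear functional of the addended vector, so $F$ differs from $x^*$ times the (possibly epistatic) fitness only by a linear functional of $\mathbf j1$. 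Applying $F$ to the dependence $\mathbf i1-\sum_{\mathbf k}c_{\mathbf k}(\mathbf k1)=0$, and using $F(\mathbf k)=0$ for $\mathbf k\in\mathcal S$ together with $\sum c_{\mathbf k}=1$,
\[
F(\mathbf i)=F(\mathbf i)-\sum_{\mathbf k}c_{\mathbf k}F(\mathbf k)=x^*\Big(\mathcal R_{\mathbf i}-\sum_{\mathbf k}c_{\mathbf k}\mathcal R_{\mathbf k}\Big)-\psi\Big(\mathbf i1-\sum_{\mathbf k}c_{\mathbf k}(\mathbf k1)\Big)=-x^*\,\mathcal A_{\mathbf i}.
\]
Since $x^*>0$, the missing‑strain inequality of \eqref{inequ} for $\mathbf i$ holds (strictly) if and only if $\mathcal A_{\mathbf i}\ge0$ (resp.\ $>0$). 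Recognizing this affine/linear splitting of the invasion rate is the only non‑bookkeeping point; everything else is linear algebra over the basis $\mathcal S\times\{1\}$ plus invocation of the cited results.

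Finally I would assemble the conclusion. If $\mathcal A_{\mathbf i}>0$ for every $\mathbf i\in\{0,1\}^n\setminus\mathcal S$, the saturation inequalities \eqref{inequ} hold strictly (those for immune responses being vacuous since all $z_j$ persist), so Theorem~\ref{genThm} gives that $\mathcal E^*$ is locally stable with $x(t)\to x^*$; if in addition $\mathcal E^*$ is the unique equilibrium in its positivity class (the conditions of Proposition~\ref{prop33}, for which the basis hypothesis already supplies ${\rm Ker}(A')=\{0\}$), the strains $y_{\mathbf k}$ ($\mathbf k\in\mathcal S$) and all $z_j$ are uniformly persistent with time‑averages converging to $\mathcal E^*$. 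Conversely, if $\mathcal A_{\mathbf i}<0$ for some $\mathbf i$ then $F(\mathbf i)>0$, so the $y_{\mathbf i}$‑component grows near $\mathcal E^*$ and $\mathcal E^*$ is unstable; the borderline value $\mathcal A_{\mathbf i}=0$ is exactly the bifurcation threshold, where Proposition~\ref{prop1new} applied to the circuit inside $\mathcal S\cup\{\mathbf i\}$ produces the attached line of equilibria. Hence the stability of $\mathcal E^*$ is governed precisely by the signs of the $\mathcal A_{\mathbf i}$. The main obstacle is the Step‑3 splitting $F(\mathbf j)=x^*\mathcal R_{\mathbf j}-\psi(\mathbf j1)$; once one sees that the circuit coefficients (built only from the additive part of the landscape) annihilate the affine part of the invasion rate, leaving the epistatic residual $-x^*\mathcal A_{\mathbf i}$, the rest is routine.
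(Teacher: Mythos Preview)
Your proof is correct and follows essentially the same approach as the paper's: both identify the circuit via the fundamental-circuit argument on the basis $\mathcal S\times\{1\}$ and then show the invasion rate equals $-x^*\mathcal A_{\mathbf i}$ by exploiting the same affine splitting of $F(\mathbf j)$ (the paper obtains this by citing the computation in the proof of Proposition~\ref{prop1new}, which is exactly your $\psi$ identity). Your treatment is in fact slightly more careful in restricting the circuit to the support $\{\mathbf i\}\cup\{\mathbf k:c_{\mathbf k}\neq0\}$ rather than all of $\{\mathbf i\}\cup\mathcal S$, which is the genuine minimal dependent set when some coordinates vanish.
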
 
  \begin{figure}[t!]
\subfigure[][]{\label{2a_n} \includegraphics[width=.51\textwidth,height=.25\textwidth]{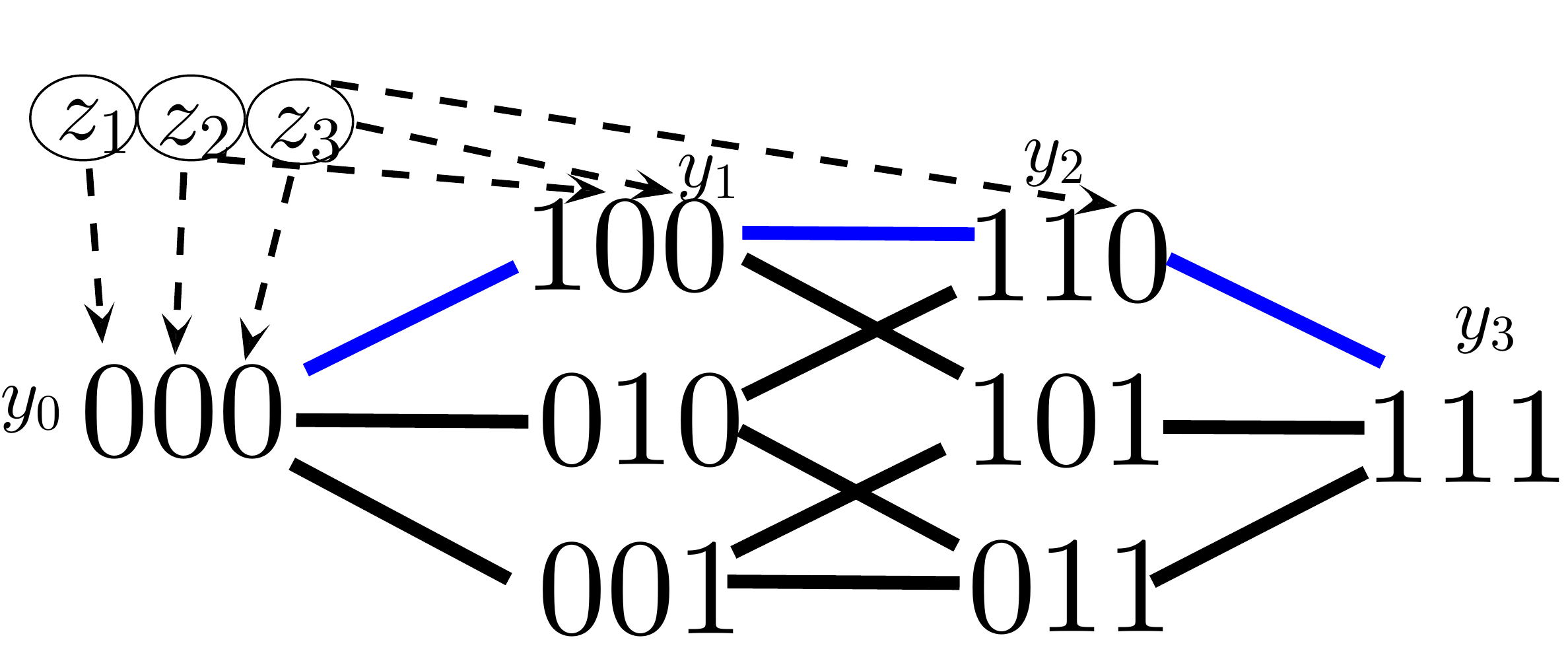}}
\subfigure[][]{ \label{2b_n}\includegraphics[width=.51\textwidth,height=.25\textwidth]{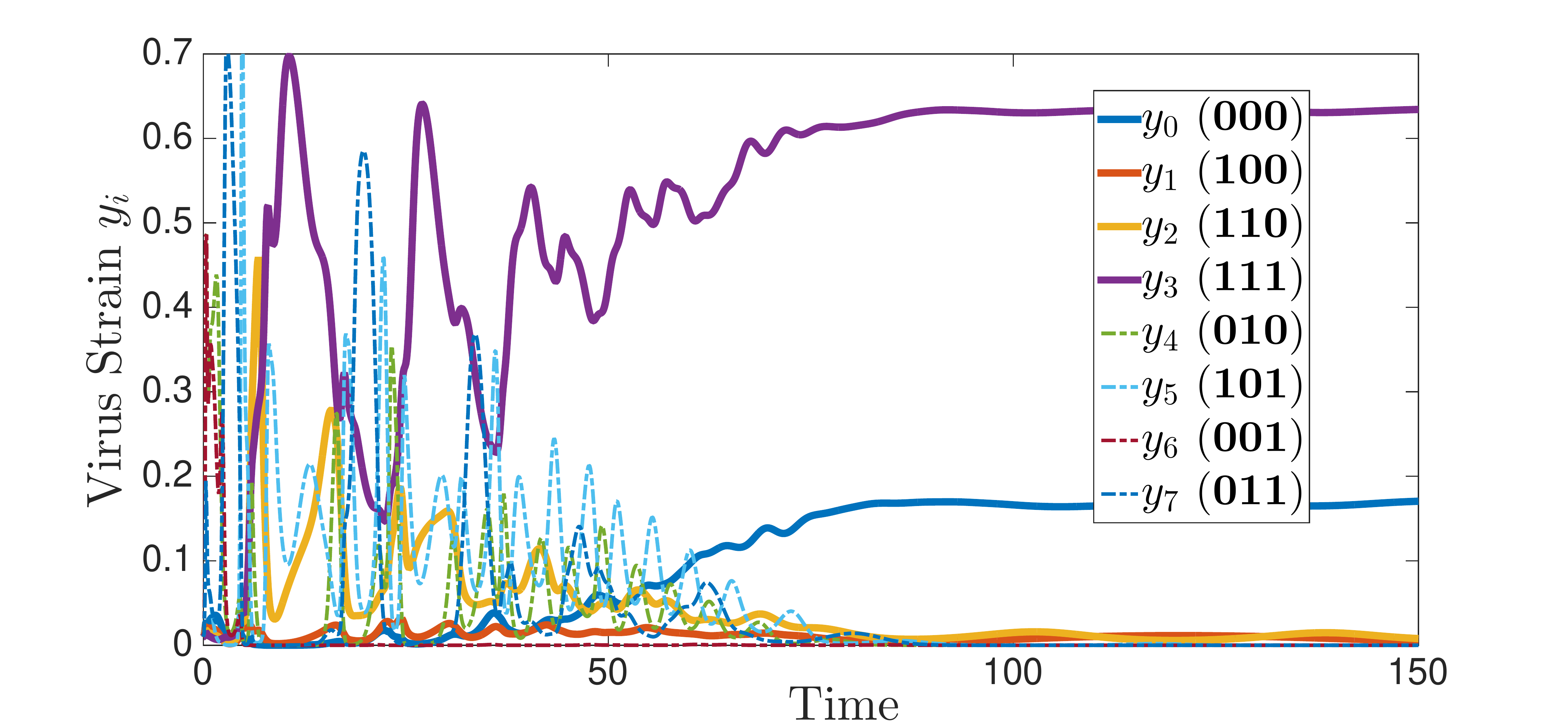}}   
\caption{ \emph{Convergence to nested network assuming multiplicative viral fitness landscape}.  (a) Persistent viral strains is reduced to nested network (in blue) as $t\rightarrow\infty$ in Prop. \ref{propMult}.  (b) Viral strain components $y_0(t),\dots,y_2(t)$ persist as system \eqref{odeS} converges to equilibrium $\widetilde{\mathcal E}_{3}$.  }
  \label{fig_3ep}
 \end{figure} 
Now consider the scenario that strain $\mathbf k\in \mathcal S$ is replaced by $\mathbf i$, then the new equilibrium sequences $\mathcal S'=\left\{\mathbf i\right\}\cup \mathcal S\setminus \left\{\mathbf k\right\}$ forms a basis of $\mathbb R^{n+1}$ since any proper subset of a circuit is linearly independent.  Thus the strain replacement with invader $\mathbf i$ will result in this new equilibrium structure $\mathcal S'$ also forming a circuit \emph{if any sequence has integer coordinates with respect to} $\mathcal S'\times \left\{1\right\}$.   In this fashion, we might observe a sequence of strain invasions determined by circuits.  Notice that strain invasions of the two $n+1$ strain equilibria structures explored here, nested and one-to-one networks, would result in a strain replacement whose new equilibrium has stability determined by linear form on circuit.  Indeed, because the coordinate of any potential invader $\mathbf i$ was shown to be $+1$ corresponding to the strain it can replace, it is not hard to show that the new basis will also yield integer coordinates for any other sequence.  Once we move past this initial invasion though, it would not be clear if the circuit stability pattern continues though.

Another consideration is whether a strain can be added to an $n$ strain equilibrium (where $n$ is number of persistent immune responses) in order to have a positive $n+1$ strain equilibrium which satisfies Proposition \ref{genprop}, i.e. forms a set $\mathcal S$ corresponding to a basis with integer coordinates in the extended $n+1$ dimensional binary sequence space.  In our examples, we add the completely resistant strain (with sequence $\mathbf 1$) to the $n$ strain nested or one-to-one networks (with $n$ persistent immune responses) to get an $n+1$ strain equilibrium satisfying the hypotheses of Proposition \ref{genprop}.  In general, this might not always be the case.  First, we recall that determining the feasibility of a $n+1$ strain positive equilibrium is dependent on calculation of $C=\left( A'  \  \vec{\mathcal R'} \right)^T$ by Proposition \ref{prop33}, with $A'$ as the virus-immune interaction network of the $n+1$ strains where the rows of $A'$ correspond to the complements ($\mathbf 1- \mathbf i$) of the viral sequences in $\mathcal S$.  If there is a feasible $n$ strain equilibrium with network $A$ and reproduction numbers $\vec R$, then the complete resistance strain $\mathbf 1$ can be added if $\mathcal R_{\mathbf 1}>1+ \mathbf \rho A^{-1} \vec R$.    However,  the calculation for adding other strain sequences is more complicated, thus the problem of both determining feasibility and whether an equilibrium satisfies Proposition \ref{genprop} may be difficult.

 As an example, consider another possible equilibrium type, the $n$ strain 1-mutation network: $\mathcal S_1=\left\{y_i^1 \ | \  i=1,\dots,n\right\}$ in which $y_i^{1}$ has only escaped $z_i$ so that its binary sequence is $\mathbf i^1=(\delta_{\ell i})_{\ell=1}^n$ where $\delta_{\ell i}$ is Kronecker delta function.  If we add $y_w \ (\mathbf 0)$ to $\mathcal S_1$, then circuits determine stability, however adding $\mathbf 1$ does not yield circuits determining stability (in particular stability condition for invasion by $\mathbf 0$) is not a circuit.  Indeed, we can derive some conditions for positivity of an equilibrium consisting of viral strains $\tilde{\mathcal S}_1=\left\{\mathbf 0\right\}\cup\mathcal S_1$ (see Appendix \ref{A4}).  Consider the case $n=3$, where the circuit for invasion of $\tilde{\mathcal S}_1$ by strain $\mathbf i=\mathbf 1$ can be calculated according to coordinate basis description in extended sequence space:  $$ \begin{pmatrix} 1 \\ 1\\ 1\\ 1 \end{pmatrix}=\begin{pmatrix} 1 & 0 & 0 &0 \\ 0 & 1&0&0\\ 0&0&1&0\\ 1&1&1&1 \end{pmatrix} \mathbf a \Rightarrow  \quad \mathcal A=-\mathcal R_{\mathbf i} -\sum_{\mathbf k\in\mathcal S}a_{\mathbf k}\mathcal R_{\mathbf k}=-\mathcal R_{111}-2\mathcal R_{000}+\mathcal R_{100}+\mathcal R_{010}+\mathcal R_{001} $$
Similar, to the example circuit given in the one-to-one network, this measures higher-order epistasis, in particular whether the one-mutation associations predict
the three-mutation combination.  Here, the strain replacement would be $111$ replacing $001$ because this sequence would have the smallest equilibrium value of positive coefficient strains in $\tilde{\mathcal S}_1$.  It can be shown the other invasion circuits correspond to conditional epistasis (embedded 2-cubes), where the single non-mutated epitope of the invader remains fixed.     Indeed, using the coordinate basis method above, we have the following proposition for invasion of the ``$\leq1$ mutation'' network:
\begin{proposition}\label{omprop}
Consider the $\leq 1$ mutation network, $\tilde{\mathcal S}_1$, consisting of wild-type and 1-mutation viral strains $y_0,y_1,\dots,y_n$ where the sequence of $y_j$ is $\mathbf j=\left(\delta_{\ell j}\right)_{\ell=1}^{n}$ for $j=1,\dots,n$.  Suppose that there is a positive equilibrium, $\widehat{\mathcal E}_n$, with $\tilde{\mathcal S}_1$ as persistent viral strain set $\Omega_y$.  $\widehat{\mathcal E}_n$ is stable if and only if 
$\mathcal A_i>0$, where $i=n+1,\dots 2^n-1$, and linear forms $\mathcal A_i$  correspond to invasion circuits $\mathcal C_i$, as characterized below:
\begin{align}
\mathcal C_{\mathbf i}&=y_{\mathbf i} \cup y_0 \cup \left\{ y_j \right\}_{i_j=1}, \qquad \mathcal A_{\mathbf i}= -\mathcal R_{\mathbf i} -\left(n- |\Lambda_{i}| - 1 \right) \mathcal R_{0} + \sum_{j\notin\Lambda_i} \mathcal R_{j} \qquad (j\in [1,n]). \label{om_circuit}
\end{align}
\end{proposition}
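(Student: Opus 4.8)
The plan is to deduce Proposition \ref{omprop} directly from Proposition \ref{genprop}, applied to the persistent sequence set $\mathcal S=\tilde{\mathcal S}_1$ consisting of the sequences of $y_0,y_1,\dots,y_n$ — where $y_0$ is the wild type $\mathbf 0$ and, for $1\leq j\leq n$, $y_j$ has its only mutation at epitope $j$ (sequence $(\delta_{\ell j})_{\ell=1}^{n}$) — supplemented by an explicit computation of coordinates in the extended binary-sequence space that reads off the circuits $\mathcal C_{\mathbf i}$ and linear forms $\mathcal A_{\mathbf i}$, and by a short sign-bookkeeping step fixing the direction of the inequality. Observe at the outset that, since $\widehat{\mathcal E}_n$ has $|\Omega_y|=n+1$ while there are only $n$ immune responses, the dichotomy of Proposition \ref{prop33} forces $|\Omega_z|=n$: all immune responses persist, so there are no missing predators and the saturation inequalities \eqref{inequ} at $\widehat{\mathcal E}_n$ reduce to conditions on the $2^n-n-1$ missing viral strains $y_{\mathbf i}$, $\mathbf i\in\{0,1\}^n\setminus\mathcal S$.

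First I would verify the hypotheses of Proposition \ref{genprop}. Extending each sequence by a trailing $1$, the strain $y_0$ gives the vector $(0,\dots,0,1)\in\mathbb R^{n+1}$ and $y_j$ gives the vector with $1$'s exactly in positions $j$ and $n+1$; listing the $n$ vectors coming from $y_1,\dots,y_n$ first and that of $y_0$ last, the $n+1$ extended vectors are the columns of $\left(\begin{smallmatrix} I_n & \mathbf 0\\ \mathbf 1^{T} & 1\end{smallmatrix}\right)$, which has determinant $1$, so $\mathcal S\times\{1\}$ is a basis of $\mathbb R^{n+1}$ (and, since $\widehat{\mathcal E}_n$ is positive, Proposition \ref{prop33} then makes it the unique equilibrium in its positivity class). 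For arbitrary $\mathbf i\in\{0,1\}^n$, solving for the coordinates $\mathbf a=(a_1,\dots,a_n,a_0)$ of the extended vector of $y_{\mathbf i}$ with respect to this basis yields $a_j=i_j$ for $j=1,\dots,n$ from the first $n$ equations and $a_0=1-\sum_{j}i_j=-(n-|\Lambda_i|-1)$ from the last, all integers; this checks the second hypothesis.

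Proposition \ref{genprop} then gives, for each missing strain $y_{\mathbf i}$, that $\mathcal C_{\mathbf i}=\{\mathbf i\}\cup\{\mathbf k\in\mathcal S:a_{\mathbf k}\neq 0\}$ is a circuit whose linear form controls invasion. Since $a_j\neq 0$ exactly when $i_j=1$ (equivalently $j\notin\Lambda_i$), and $a_0\neq 0$ because $\mathbf i\notin\mathcal S$ carries at least two mutations so $n-|\Lambda_i|-1\geq 1$, this produces $\mathcal C_{\mathbf i}=y_{\mathbf i}\cup y_0\cup\{y_j\}_{i_j=1}$ and, normalizing the invader coefficient to $-1$ (the sign convention of Corollary \ref{maincor}), $\mathcal A_{\mathbf i}=-\mathcal R_{\mathbf i}+\sum_{\mathbf k\in\mathcal S}a_{\mathbf k}\mathcal R_{\mathbf k}=-\mathcal R_{\mathbf i}-(n-|\Lambda_i|-1)\mathcal R_0+\sum_{j\notin\Lambda_i}\mathcal R_j$, which is \eqref{om_circuit}.

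It remains to determine which sign of $\mathcal A_{\mathbf i}$ corresponds to stability and to assemble the biconditional. From the equilibrium equations of $\widehat{\mathcal E}_n$, comparing the $y_0$ equation with each $y_\ell$ equation gives $z_\ell^{*}=(\mathcal R_0-\mathcal R_\ell)x^{*}$, and then the $y_0$ equation itself gives $1=\big(\sum_{j=1}^n\mathcal R_j-(n-1)\mathcal R_0\big)x^{*}$; substituting both into the invasion rate $\mathcal R_{\mathbf i}x^{*}-1-\sum_{j\in\Lambda_i}z_j^{*}$ of $y_{\mathbf i}$ collapses it to $-x^{*}\mathcal A_{\mathbf i}$ with $x^{*}>0$. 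Hence $\widehat{\mathcal E}_n$ is saturated iff every $\mathcal A_{\mathbf i}\geq 0$; if all are strictly positive then Theorem \ref{genThm}, together with uniqueness in the positivity class, gives local stability and uniform persistence of $y_0,\dots,y_n$; if some $\mathcal A_{\mathbf i}<0$ then $y_{\mathbf i}$ has positive invasion rate and $\widehat{\mathcal E}_n$ is unstable; and if some $\mathcal A_{\mathbf i}=0$ the vanishing circuit form places $\widehat{\mathcal E}_n$ on a continuum of equilibria by the mechanism of Proposition \ref{prop1new} (as in the bifurcation discussion following Theorem \ref{newthm1}), so it is not asymptotically stable. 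This yields $\widehat{\mathcal E}_n$ stable iff all $\mathcal A_{\mathbf i}>0$. I expect the whole argument to be largely mechanical once Proposition \ref{genprop} is available; the main point requiring care is the sign bookkeeping — carrying out the closed-form evaluation of the invasion rate to recognize it as $-x^{*}\mathcal A_{\mathbf i}$ (so that stability is $\mathcal A_{\mathbf i}>0$, not $<0$), and handling the degenerate case $\mathcal A_{\mathbf i}=0$ through the continuum-of-equilibria phenomenon rather than via a linearization.
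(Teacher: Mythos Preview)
Your proposal is correct and follows essentially the same approach as the paper, which derives Proposition \ref{omprop} directly from the coordinate-basis method formalized in Proposition \ref{genprop} (the paper simply states that ``using the coordinate basis method above'' the proposition follows, without a separate appendix proof). Your treatment is in fact more thorough: you explicitly verify the basis/integer-coordinate hypotheses, compute the coordinates $a_j=i_j$ and $a_0=-(n-|\Lambda_i|-1)$, and carry out the closed-form invasion-rate calculation to fix the sign, all of which the paper leaves implicit.
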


Observe that for the case of $n=3$, we have now highlighted all the circuits determining stability of three equilibria structures: the nested, one-to-one, and one-mutation network.  While there are 4 corresponding linear forms for each network dictating invasion by each missing strain, together this results in 10 distinct circuits since $\mathcal C=\left\{000,100,110,010\right\}$ and $\mathcal C=\left\{100,110,111,101\right\}$ are invasion circuits that the nested network shares with the one-mutation and one-to-one network, respectively.  There are 20 total circuits for $n=3$ \cite{beerenwinkel2007epistasis}, and we leave it to future work as to whether the any of the other 10 circuits correspond to stability of feasible ``transitional equilibria'' between the highlighted networks.   However, the immunodominance hierarchy will impose an effective fitness ordering on the virus genotypes so that for example the ``reverse nested'' network $\left\{000,001,011,111\right\}$  would never be feasible.  Therefore, some circuits should not correspond to any meaningful bifurcation under the assumptions of our model.   


\subsection{Special cases of fitness landscapes}

While fitness landscapes on the $n$-dimensional hypercube generally yield a multitude of circuits determining bifurcations and stability of equilibria, there are some simple landscapes that can be analyzed.  First, consider the pairwise interaction case as described by equation \eqref{pwfit}, where $\mathcal R_{\mathbf i}=\mathcal R_{\mathbf 0}-\mathbf c \cdot \mathbf i +\sum_{j=1}^n i_j \sum_{k>j} i_k B_{jk}$ for a (strictly) upper triangular matrix $B$.  If the matrix $B$ is positive, then the fitness of any sequence with at least 2 mutations will always be larger than the additive case, whereas if $B$ is negative, the resulting fitness from a pair of mutations is less than expected under additivity.  Thus, in the former case of $B$ positive, synergistic interactions should favor double mutants, while in the latter antagonistic interactions might discourage consecutive mutations.  The exact translation of these informal notions to expected results in our model with sign-definite pairwise interactions is not obvious due to there being a dynamic overall fitness landscape when taking into account immune response (predator) populations and other variables/parameters which might influence the viral escape pathway.  Nevertheless, we prove here that the nested network is generally stable when pairwise loci interaction matrix $B$ is positive, whereas a non-nested network, such as one-to-one or $\leq 1$ mutation network, is stable when $B$ is negative.  

\begin{figure}[t!]
\subfigure[][]{\label{4a} \includegraphics[width=.5\textwidth,height=.25\textwidth]{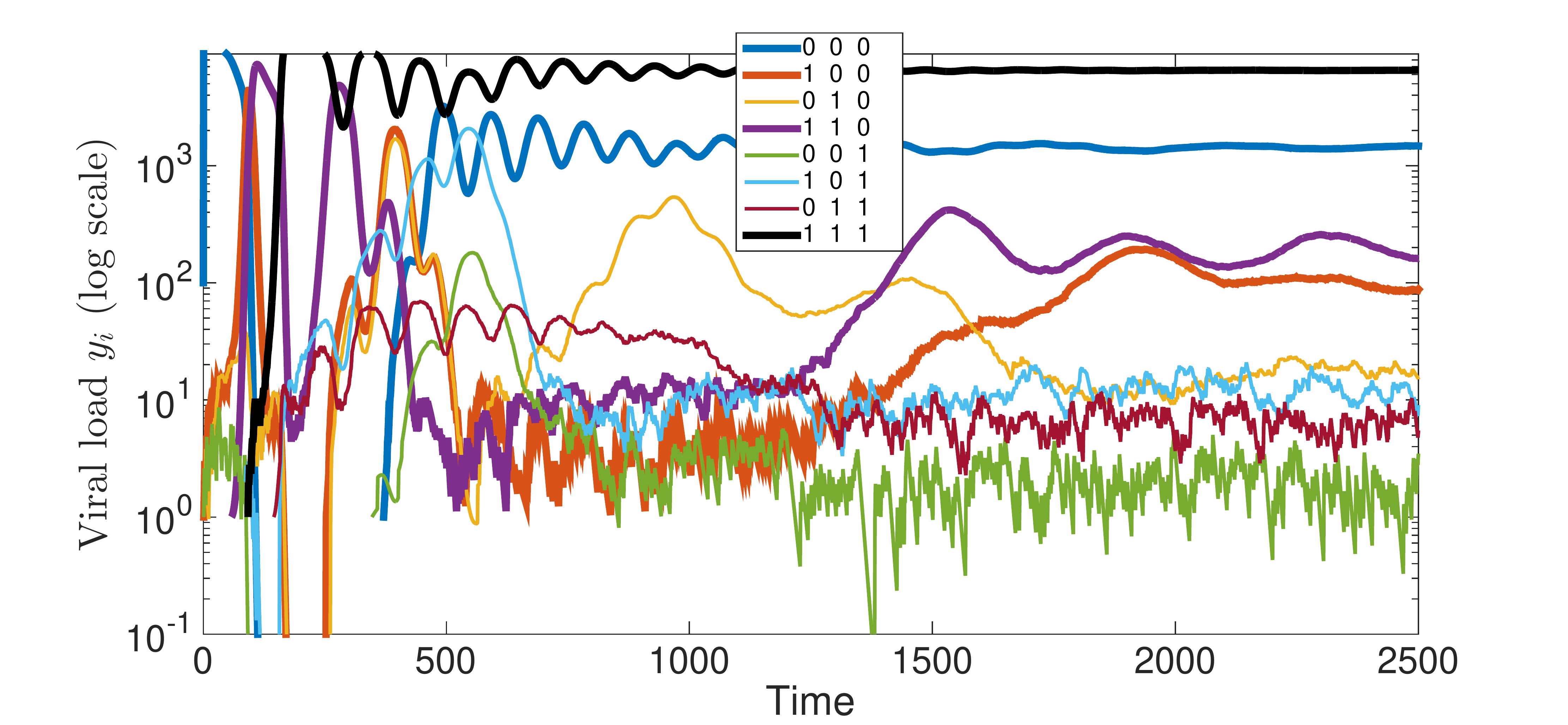}}
\subfigure[][]{ \label{4b}\includegraphics[width=.5\textwidth,height=.25\textwidth]{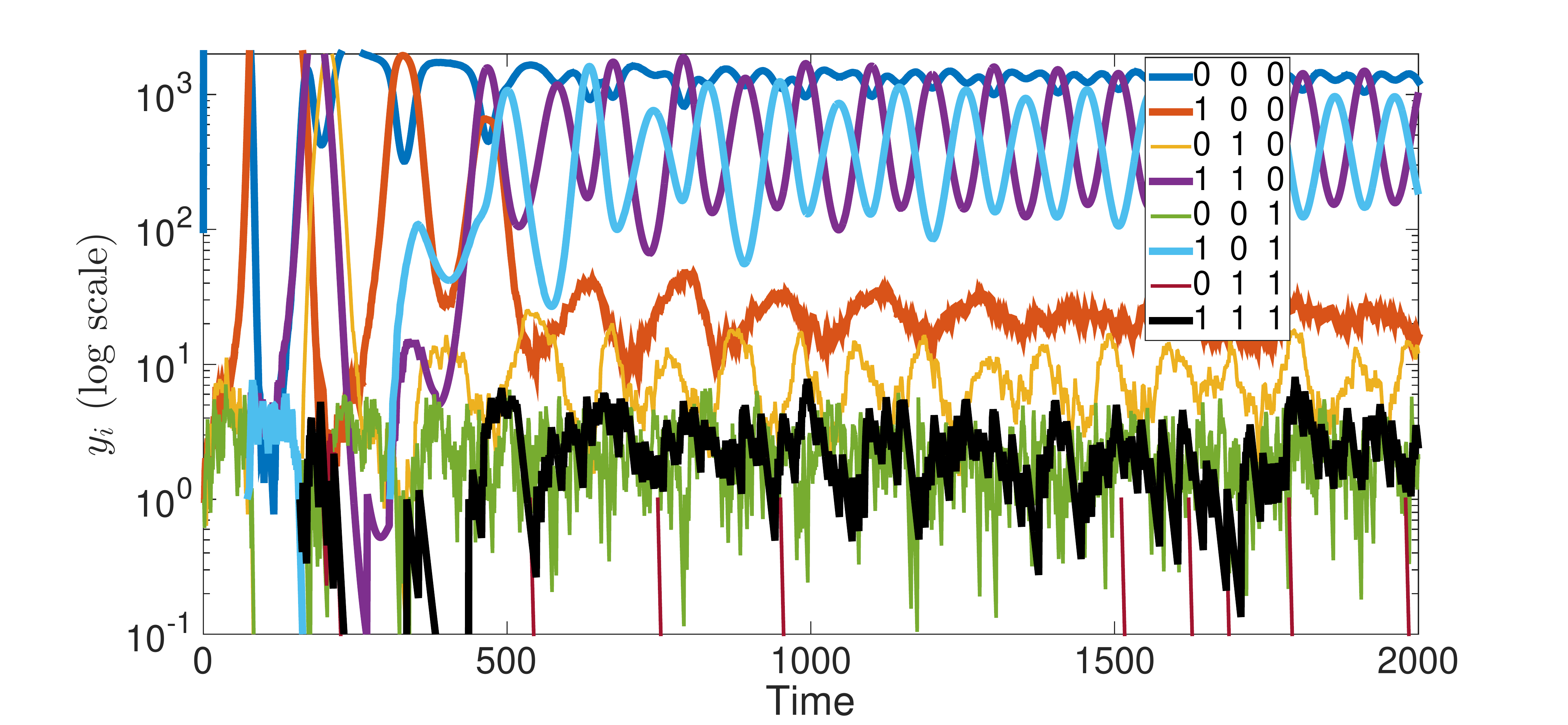}}    \\
\subfigure[][]{\label{4c} \includegraphics[width=.5\textwidth,height=.25\textwidth]{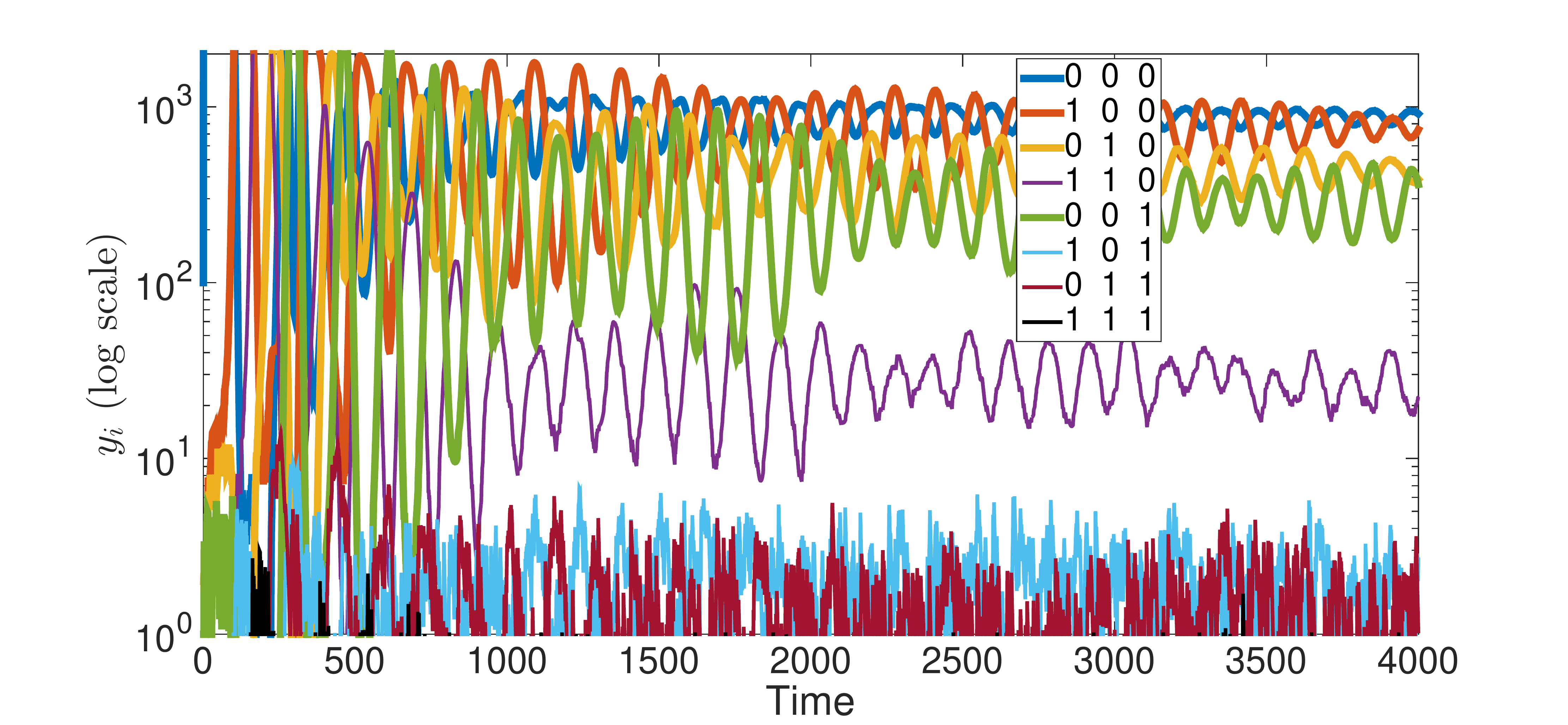}}
\subfigure[][]{ \label{4d}\includegraphics[width=.5\textwidth,height=.25\textwidth]{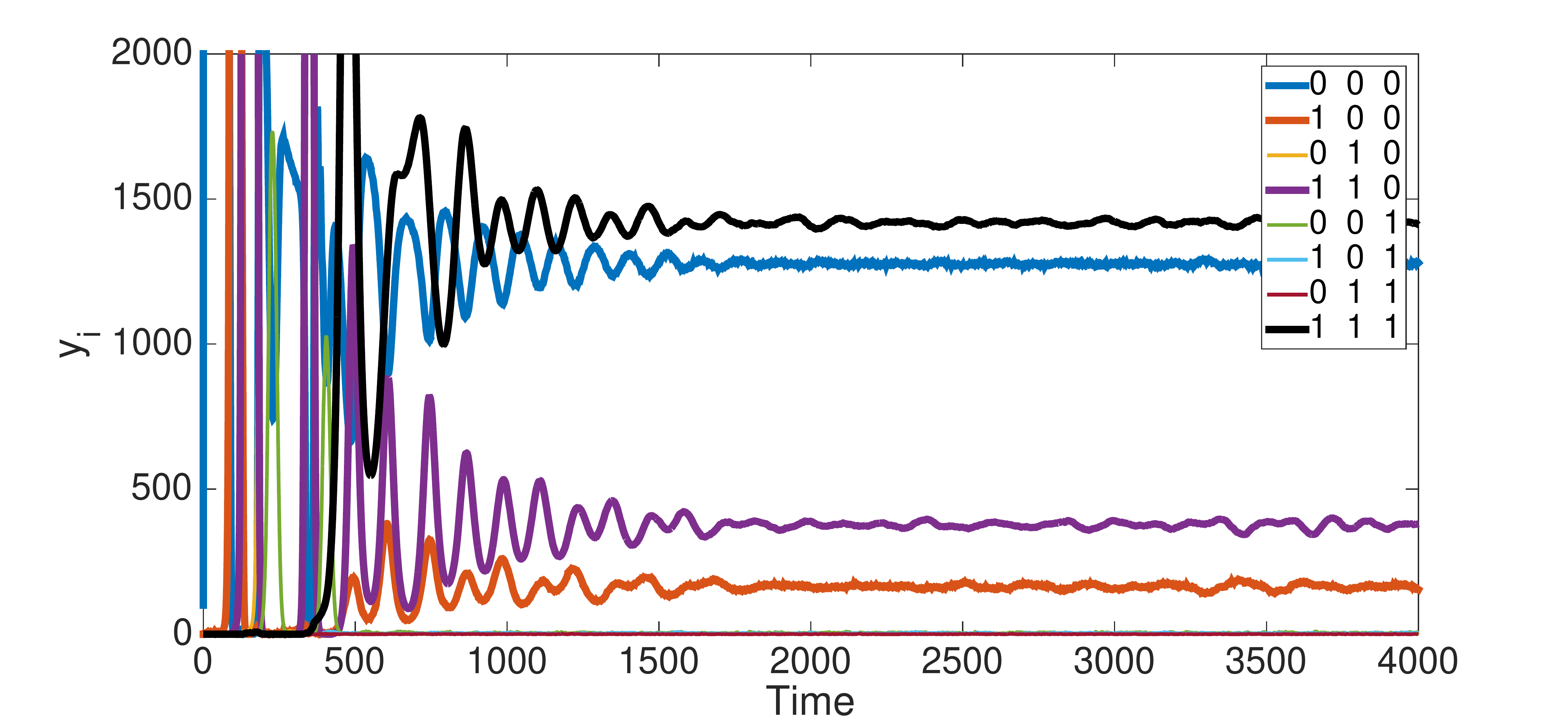}}    \\
\subfigure[][]{\label{4e} \includegraphics[width=.5\textwidth,height=.25\textwidth]{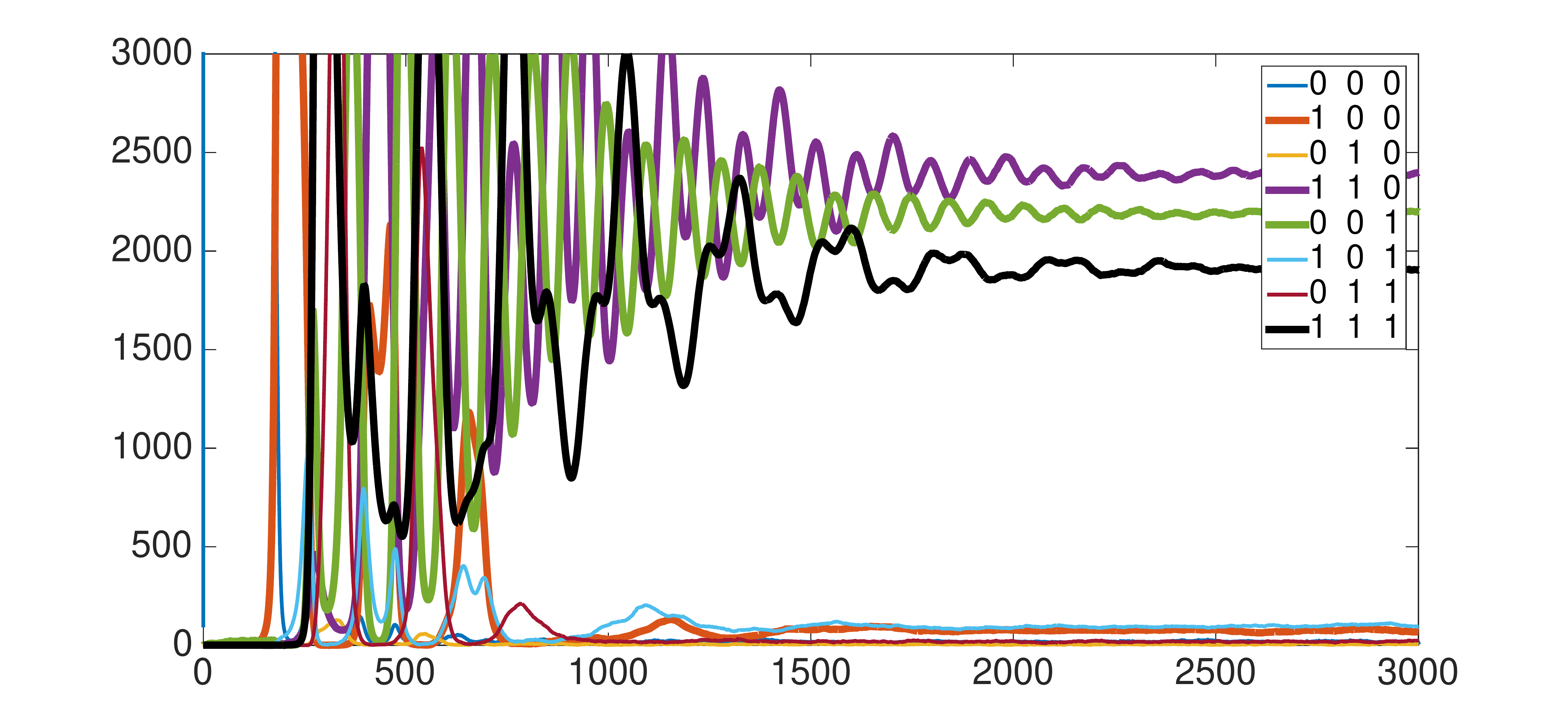}}
\subfigure[][]{ \label{4f}\includegraphics[width=.5\textwidth,height=.25\textwidth]{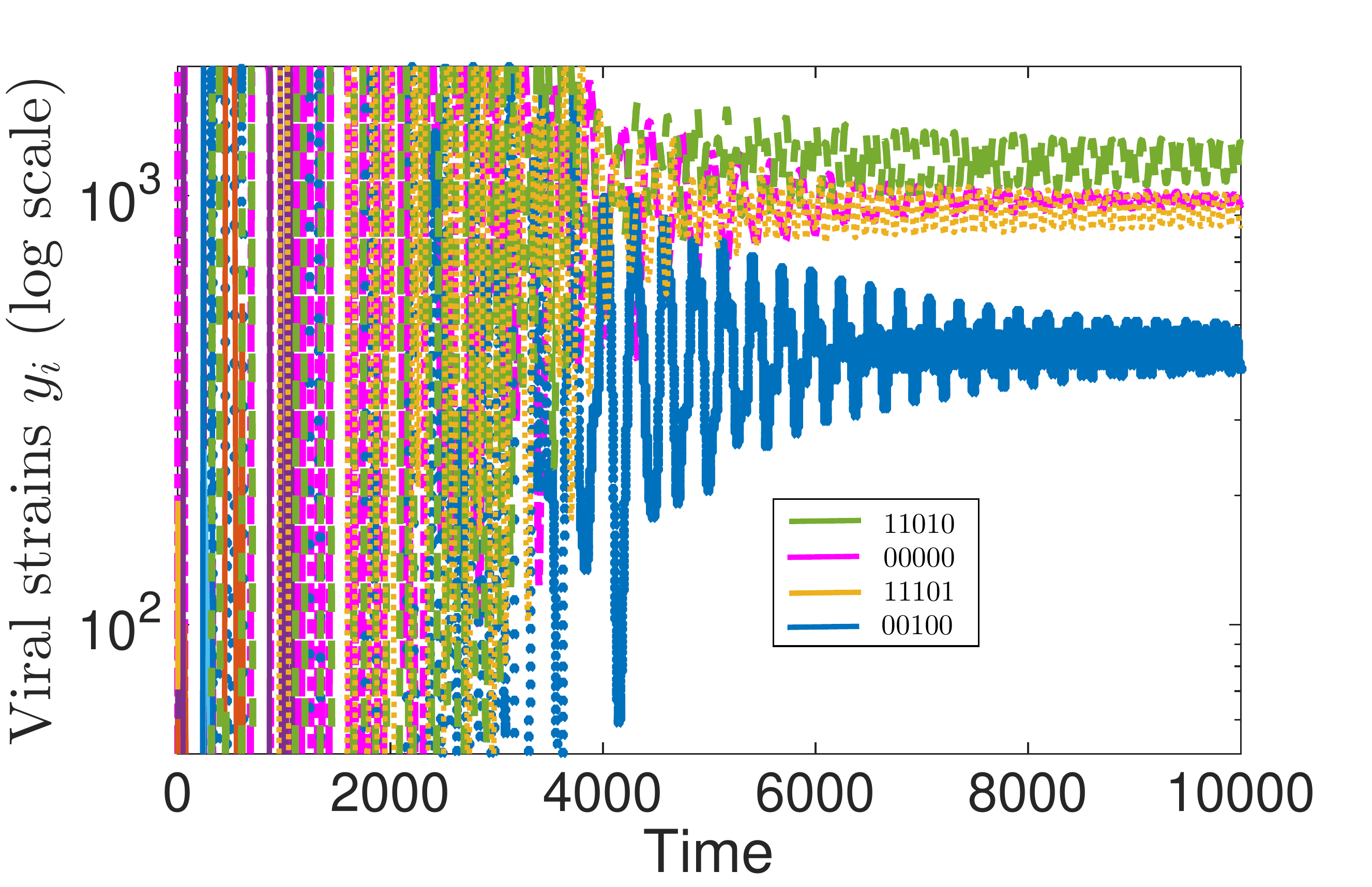}}      \\
\subfigure[][]{\label{4g} \includegraphics[width=.5\textwidth,height=.25\textwidth]{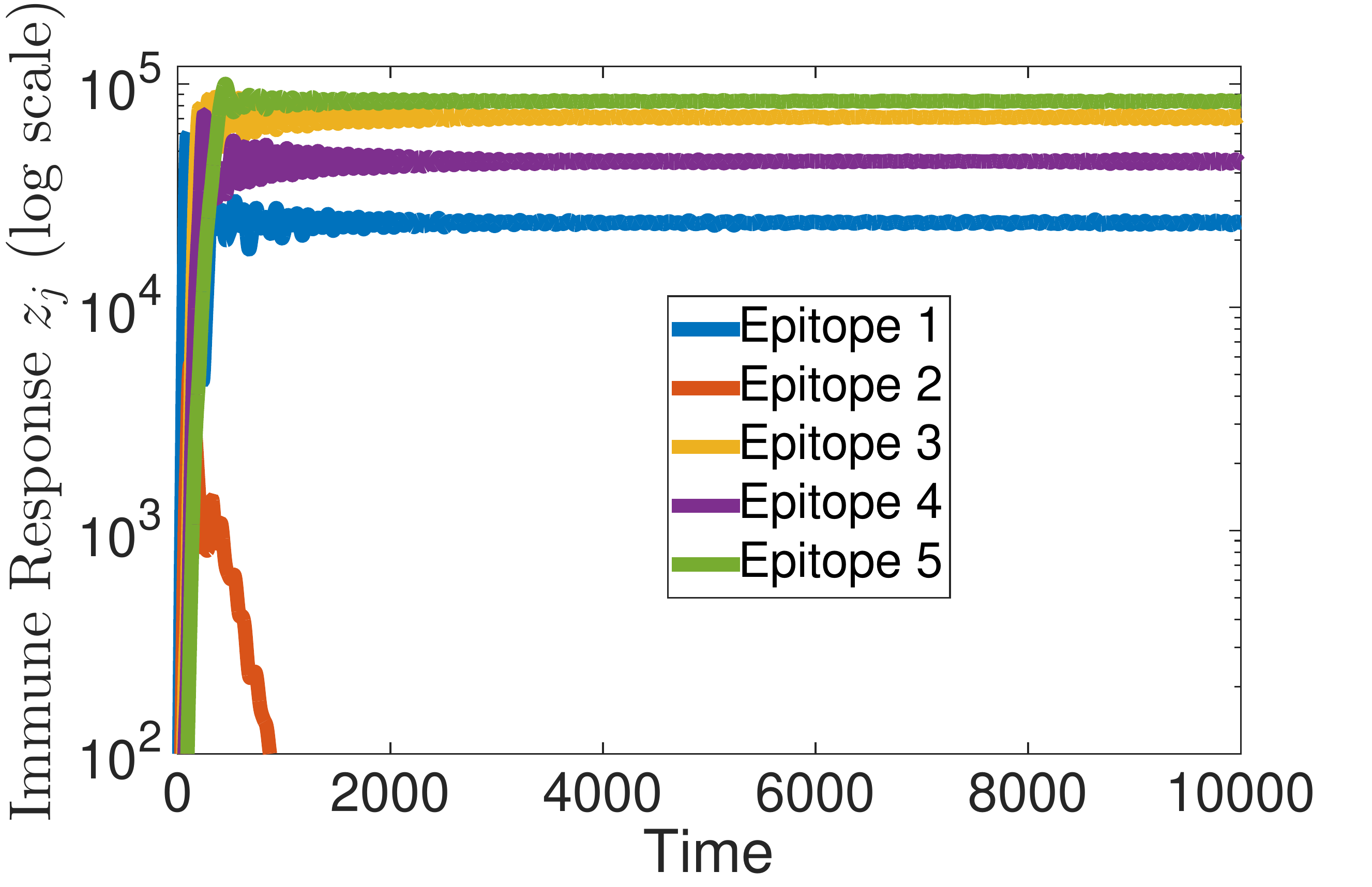}}
\subfigure[][]{ \label{4h}\includegraphics[width=.5\textwidth,height=.25\textwidth]{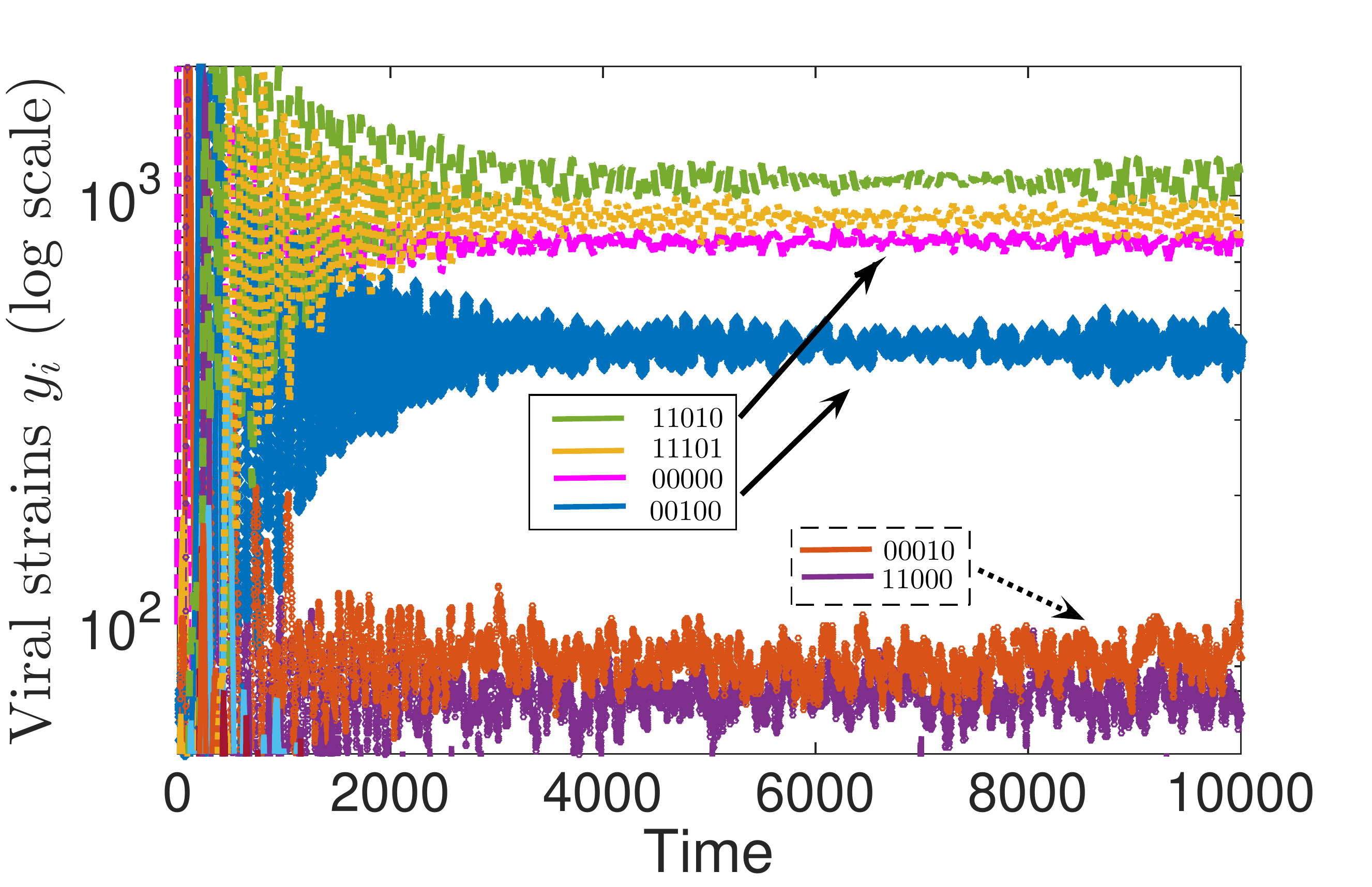}} 
 \caption{{\small \emph{Simulations of extended model with stochastic mutation and pairwise epistatic interactions illustrate eco-evolutionary dynamics consistent with analysis of viral fitness epistasis in deterministic system}.  Trajectories of virus strains in the case of $n=3$ epitopes with uniformly distributed viral and fitness quantities, and (random) \emph{positive} pairwise interactions, $B_{jk}$, which implies positive epistasis with respect to ``nested circuits'' and convergence to nested steady state containing $\left\{000,100,110,111\right\}$.  (b)  Increasing viral strain $y_6 \ (101)$ reproduction number ($\mathcal R_6$) changes the sign of its invasion circuit so that epistasis is no longer positive, resulting in it replacing $y_3 \ (111)$ and non-nested persistent strains. (c) Assuming negative pairwise interactions also leads to non-nested convergence, here to $\leq 1$ mutation network containing $\left\{000,100,010,001\right\}$.  Note that $(110)$ strain persists at low levels due to invasion circuit being close to zero, along with random mutation.  Gaussian distributed pairwise interactions ($B_{jk}$ random sign) result in convergence to nested network in (d) because positive $B_{jk}$ randomly drawn, but generally can converge to other steady states in simulations (e) and (f,g,h) with $n=5$ epitopes.  Observe that the dynamics in original (deterministic) ODE solution displayed in (f) are consistent with stochastic mutation simulations (g,h), except for low level persistence of two strains with small negative invasion rates.  } }
  \label{fig4}
 \end{figure}

   
 \begin{proposition} \label{propPair}
Consider  binary sequence model \eqref{odeS} having pairwise interaction fitness landscape \eqref{pwfit} with upper triangular matrix $B$ that is sign-definite.  Assume that $\mathcal R_0>\mathcal Q_1$ (so that at least one virus strain and immune response persists).  If $B_{jk}>0$ for all $k>j$, then the nested network is stable.  On the other hand, if $B_{jk}<0$ for all $k>j$, then one-to-one network (or $\leq 1$ mutation network) is stable against invasion and persistent if components of associated equilibrium are positive.
\end{proposition}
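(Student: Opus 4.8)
\emph{Proof sketch.} The plan is to substitute the pairwise landscape \eqref{pwfit} into the circuit stability criteria already available — Theorem~\ref{newthm1} with Corollary~\ref{maincor} for the nested network, Theorem~\ref{ssF} for the one-to-one network, and Proposition~\ref{omprop} for the $\le 1$ mutation network — and to show that sign-definiteness of $B$ pins down the sign of every relevant invasion circuit $\mathcal A_{\mathbf i}$. Write $\mathcal R_{\mathbf k}=\mathcal R_{\mathbf 0}-\mathbf c\cdot\mathbf k+\phi(\mathbf k)$ with $\phi(\mathbf k):=\sum_{j<l}k_jk_lB_{jl}=\mathbf k^{T}B\mathbf k$ (an identity for binary $\mathbf k$ since $B$ is strictly upper triangular). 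For any circuit $\mathcal C$ the vanishing-form coefficients satisfy $\sum_{\mathbf k\in\mathcal C}a_{\mathbf k}=0$ and $\sum_{\mathbf k\in\mathcal C}a_{\mathbf k}\mathbf k=\mathbf 0$, so the affine part of $\mathcal R_{\mathbf k}$ contributes nothing and $\mathcal A=\sum_{\mathbf k\in\mathcal C}a_{\mathbf k}\mathcal R_{\mathbf k}=\sum_{\mathbf k\in\mathcal C}a_{\mathbf k}\phi(\mathbf k)$. Thus everything reduces to evaluating this purely quadratic quantity on the three explicit circuit families.

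For the nested network, take the circuit of a non-nested strain $y_{\mathbf i}$ from Corollary~\ref{maincor}(i), with coefficients $a_k$ on the nested strains $y_k\sim 1^k0^{n-k}$ as listed there (note $a_k=i_k-i_{k+1}$ for $k\geq 1$ and $a_n=i_n$). Since $\phi(1^k0^{n-k})=\sum_{1\le a<b\le k}B_{ab}$, exchanging the order of summation and telescoping — using $\sum_{k\ge b}a_k=i_b$ for every $b\ge 2$, and the fact that the $a_0,a_1$ terms multiply $\phi$-values that are zero — gives $\sum_k a_k\phi(1^k0^{n-k})=\sum_{a<b}B_{ab}\,i_b$; subtracting $\phi(\mathbf i)=\sum_{a<b}B_{ab}\,i_ai_b$ leaves the clean formula
\[
\mathcal A_{\mathbf i}=\sum_{1\le a<b\le n}B_{ab}\,i_b(1-i_a)=\sum_{\substack{a<b\\ i_a=0,\ i_b=1}}B_{ab}.
\]
A non-nested $\mathbf i$ contains a $01$ substring, so this sum has at least one term; hence $B_{ab}>0$ for all $a<b$ forces $\mathcal A_{\mathbf i}>0$ for every invasion circuit, and Theorem~\ref{newthm1} (with the assumed $\mathcal R_0>\mathcal Q_1$) yields stability and persistence of the nested network.

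For the one-to-one network with $B$ negative, the circuit of Theorem~\ref{ssF} pairs $\mathbf i$ (coefficient $-1$) with the near-resistant strains $\mathbf 1-\mathbf e_j$, $j\in\Lambda_i$ (coefficient $+1$), and the fully resistant strain $\mathbf 1$ (coefficient $-(|\Lambda_i|-1)$). Using $\phi(\mathbf 1-\mathbf e_j)=\phi(\mathbf 1)-R_j$ with $R_j=\sum_{\{a,b\}\ni j}B_{ab}$, the $\phi(\mathbf 1)$ terms recombine into a single $\phi(\mathbf 1)$, and counting each pair $(a,b)$ by how many of its endpoints lie in $\Lambda_i$ gives $\phi(\mathbf 1)-\sum_{j\in\Lambda_i}R_j=\sum_{a<b,\,a,b\notin\Lambda_i}B_{ab}-\sum_{a<b,\,a,b\in\Lambda_i}B_{ab}$; together with $\phi(\mathbf i)=\sum_{a<b,\,a,b\notin\Lambda_i}B_{ab}$ everything collapses to $\mathcal A_{\mathbf i}=-\sum_{a<b,\,a,b\in\Lambda_i}B_{ab}$. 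An invading strain of the one-to-one network has $|\Lambda_i|\ge 2$, so this sum is nonempty and $B$ negative makes every $\mathcal A_{\mathbf i}>0$; Theorem~\ref{ssF} then gives stability of the one-to-one network whenever its equilibrium is positive. The $\le 1$ mutation case is quicker: in Proposition~\ref{omprop} the circuit pairs $\mathbf i$ with $\mathbf 0$ and the single mutants $\mathbf e_j$, and since $\phi(\mathbf 0)=\phi(\mathbf e_j)=0$ one gets directly $\mathcal A_{\mathbf i}=-\phi(\mathbf i)=-\sum_{a<b,\,a,b\notin\Lambda_i}B_{ab}$; an invading strain has at least two mutated loci, so $B$ negative again forces positivity and Proposition~\ref{omprop} finishes.

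The only genuine labor is the summation bookkeeping in the second and third steps — checking exactly which entries $B_{ab}$ survive, and with which sign, after the affine cancellation; the telescoping identity $\sum_{k\ge b}a_k=i_b$ and the endpoint-counting relative to $\Lambda_i$ are where care is needed. Everything else is immediate: sign-definiteness of $B$, plus the elementary observation that every relevant invasion circuit retains at least one $B_{ab}$ term. One should also note in passing that the hypotheses of the invoked results hold — $\mathcal R_0>\mathcal Q_1$ is assumed, compatibility of $(B,\mathbf c)$ with the fitness-cost ordering \eqref{fitnesscost} is built into the definition of \eqref{pwfit}, and for the two non-nested networks the positivity of the associated equilibrium is part of the statement.
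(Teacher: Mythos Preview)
Your proposal is correct and follows the same overall strategy as the paper: substitute the pairwise landscape \eqref{pwfit} into the explicit circuit formulas of Corollary~\ref{maincor}, Theorem~\ref{ssF}, and Proposition~\ref{omprop}, observe that the affine part cancels by the vanishing-form identities, and read off the sign of what remains from the sign of $B$.

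Where you differ is in execution, and your version is cleaner. For the nested case the paper argues by induction on the number of $(01)$ substrings in $\mathbf i$, handling the base case by a direct (and somewhat laborious) expansion; your telescoping identity $\sum_{k\ge b}a_k=i_b$ bypasses the induction entirely and yields the closed form $\mathcal A_{\mathbf i}=\sum_{a<b:\,i_a=0,\,i_b=1}B_{ab}$ in one stroke. For the one-to-one and $\le 1$-mutation networks, your endpoint-counting argument gives $\mathcal A_{\mathbf i}=-\sum_{a<b:\,a,b\in\Lambda_i}B_{ab}$ and $\mathcal A_{\mathbf i}=-\phi(\mathbf i)$ respectively; the paper's corresponding displayed expressions actually contain algebraic slips (the first term $-\phi(\mathbf i)$ is mis-expanded in the one-to-one computation, and the $\le 1$-mutation answer is written as $-\phi(\mathbf 1)$ rather than $-\phi(\mathbf i)$), though the sign conclusions survive because in every case the resulting quantity is a sum of entries of $B$ with a common sign. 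So your argument is not only an alternative packaging but in fact the correct computation.
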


Another basic example of a fitness landscape is \emph{multiplicative}, where each mutation at a fixed locus reduces the reproduction number of a strain by a fraction regardless of the of sequence background at other loci.  Thus the loci act independently, but not additively.    This \emph{multiplicative} fitness landscape has been assumed in several studies of HIV-immune evolution at multiple epitopes, e.g. \cite{Althaus,vanDeutekom}.  We prove the following proposition, generalizing a theorem in \cite{browne2018dynamics} showing multiplicative equal fitness costs evolve a nested network.
\begin{proposition} \label{propMult}
Assume that fitness costs of mutating locus $j$ come with a multiplicative reproductive loss $f_j$, i.e. $\mathcal R_{\mathbf i}=\mathcal R_0 \prod\limits_{i_j=1}  f_j$  where $0<f_j<1$, $j=1,\dots,n$.  Then the nested network is stable.
\end{proposition}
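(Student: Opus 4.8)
The plan is to reduce Proposition \ref{propMult} to an application of Theorem \ref{newthm1}: it suffices to show that a multiplicative fitness landscape has positive epistasis as a pathway along the nested network, i.e. every invasion circuit $\mathcal{C}_i$ (for a non-nested strain $y_i$) has $\mathcal{A}_i > 0$. I would use the explicit form of the linear forms given in Corollary \ref{maincor}(ii): for a non-nested sequence $\mathbf{i} = 1^{m_1}0^{p_1 - m_1}1^{m_2 - p_1}\cdots 0^{p_s - m_s}1^{n - m_{s+1}}$, one has
\[
\mathcal{A}_i = -\mathcal{R}_{\mathbf{i}} + \sum_{j=1}^{s+1}\mathcal{R}_{m_j} - \sum_{j=1}^{s}\mathcal{R}_{p_j},
\]
where $\mathcal{R}_{m_j}$ and $\mathcal{R}_{p_j}$ are the reproduction numbers of the nested strains $1^{m_j}0^{n-m_j}$ and $1^{p_j}0^{n-p_j}$ respectively.

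First I would record the values of these reproduction numbers in the multiplicative landscape. Writing $F_k := \prod_{\ell=1}^{k} f_\ell$ for the cumulative cost of the first $k$ loci (with $F_0 = 1$), the nested strain $1^k 0^{n-k}$ has $\mathcal{R}_k = \mathcal{R}_0 F_k$. For the non-nested strain $\mathbf{i}$, since multiplicativity makes the fitness depend only on the set of mutated loci, the mutated loci of $\mathbf{i}$ are exactly $[1,m_1]\cup[p_1+1, m_2]\cup\cdots\cup[p_s+1, m_{s+1}]$, so
\[
\mathcal{R}_{\mathbf{i}} = \mathcal{R}_0 \, F_{m_1} \cdot \frac{F_{m_2}}{F_{p_1}} \cdots \frac{F_{m_{s+1}}}{F_{p_s}} = \mathcal{R}_0 \, \frac{F_{m_1} F_{m_2}\cdots F_{m_{s+1}}}{F_{p_1} F_{p_2}\cdots F_{p_s}}.
\]
Dividing $\mathcal{A}_i$ by $\mathcal{R}_0 > 0$, the claim $\mathcal{A}_i > 0$ becomes the purely numerical inequality
\[
\sum_{j=1}^{s+1} F_{m_j} - \sum_{j=1}^{s} F_{p_j} \;>\; \frac{\prod_{j=1}^{s+1} F_{m_j}}{\prod_{j=1}^{s} F_{p_j}},
\]
where $0 < F_{m_{s+1}} < F_{p_s} < F_{m_s} < \cdots < F_{p_1} < F_{m_1} \le 1$ by the fitness-cost assumption \eqref{fitnesscost} and $0 < f_\ell < 1$.

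The core of the proof is then establishing this inequality for a strictly decreasing positive sequence $1 \ge a_1 > b_1 > a_2 > b_2 > \cdots > b_s > a_{s+1} > 0$ (renaming $F_{m_j} = a_j$, $F_{p_j} = b_j$), namely $\sum a_j - \sum b_j > \prod a_j / \prod b_j$. I expect the cleanest route is induction on $s$, the number of $(01)$ strings. The base case $s=0$ is $F_{m_1} > F_{m_1}$... wait — for $s=0$ the strain is nested, so there is no circuit; the genuine base case is $s=1$, where the inequality reads $a_1 + a_2 - b_1 > a_1 a_2 / b_1$, i.e. $b_1(a_1 + a_2 - b_1) > a_1 a_2$, i.e. $b_1 a_1 + b_1 a_2 - b_1^2 - a_1 a_2 > 0$, i.e. $(a_1 - b_1)(b_1 - a_2) > 0$, which holds since $a_1 > b_1 > a_2$. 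For the inductive step I would peel off the innermost pair: group the last two terms and use $b_s(a_s + a_{s+1} - b_s) > a_s a_{s+1}$ to replace the tail, reducing to a sequence with one fewer $(01)$ string but with a modified last entry; the monotonicity needed to invoke the inductive hypothesis should follow because the replacement value $a_s + a_{s+1} - b_s$ lies strictly between $a_{s+1}$ and $a_s$. The main obstacle is bookkeeping the inequalities so the induction hypothesis genuinely applies — in particular checking that after the substitution the new effective sequence is still strictly decreasing and still bounded by $1$ — but this is a finite, elementary verification with no conceptual subtlety. Finally, positivity of $\mathcal{A}_i$ for the invasion by the wild-type strain ($\mathbf{i} = \mathbf{0}$) or other degenerate index cases would be checked directly as small special cases of the same inequality. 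Once all $\mathcal{A}_i > 0$, Theorem \ref{newthm1} gives stability and persistence of the nested network, completing the proof. \qed
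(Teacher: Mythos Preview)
Your proposal is correct and follows the paper's overall route: reduce to Theorem~\ref{newthm1} via the explicit linear forms of Corollary~\ref{maincor}(ii), then verify $\mathcal A_i>0$ by induction on the number $s$ of $(01)$ blocks in the invading sequence. Your base case $s=1$ is exactly the paper's: the factorization $(a_1-b_1)(b_1-a_2)>0$ is the same identity the paper writes as $-\mathcal A_i=\mathcal R_0\,f_1\cdots f_{m_1}(1-f_{p_1+1}\cdots f_{m_2})(f_{m_1+1}\cdots f_{p_1}-1)<0$.

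The inductive steps genuinely differ, and yours is the one that works. The paper passes from $\mathbf i$ to $\tilde{\mathbf i}=1^{m_1}0^{p_2-m_1}1\,i_{p_2+2}\cdots i_n$ (deleting the first $(01)$ block) and asserts $\mathcal A_i=\mathcal A_{\tilde i}$. That equality is the identity $f_i=f_{\tilde i}$ from Proof~II of Theorem~\ref{newthm1}, which holds as an identity on \emph{binary sequences} (hence for additive landscapes) and happens to hold for equal multiplicative costs, but fails for general $f_j$: e.g.\ for $n=4$, $\mathbf i=0101$, one has $\tilde{\mathbf i}=0001$ and a direct computation gives $\mathcal A_i-\mathcal A_{\tilde i}=\mathcal R_0(1-f_2)(f_4-f_1)$, which can have either sign. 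Your induction instead replaces the innermost triple $(a_s,b_s,a_{s+1})$ by $c=a_s+a_{s+1}-b_s$, notes $c\in(a_{s+1},a_s)\subset(0,b_{s-1})$ so the shortened sequence is still strictly decreasing, applies the hypothesis to get $\sum_{j\le s-1}a_j+c-\sum_{j\le s-1}b_j>\dfrac{c\prod_{j\le s-1}a_j}{\prod_{j\le s-1}b_j}$, and then uses the $s=1$ inequality $c>a_sa_{s+1}/b_s$ to dominate the original product side. This is a clean, valid telescoping argument; the ``bounded by~$1$'' check you flag is in fact not needed, and the ``degenerate'' case $\mathbf i=\mathbf 0$ never arises since the wild type already lies in the nested network.
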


\section{Simulations \& predicting virus-immune evolution}

 In this section, we conduct simulations of model \eqref{odeS}, along with a hybrid stochastic/deterministic version, in order to illustrate our results.   The model was coded in MATLAB, where the built-in ODE solver ODE45 was utilized for simulations.   For the deterministic model, we find numerical solutions to \eqref{odeS} under the multiplicative viral fitness landscape for $n=3$ epitopes, initiating the simulation with positive concentrations of all virus and variants immune variants, $y_i, \ i=0,\dots,7$ and $z_j, \ j=1,2,3$, where we adopt the nested priority indexing from Section \ref{nestsect}.  The immunodominance hierarchy utilized in the simulation is $\mathcal I_1=6, \mathcal I_1=5.7,\mathcal I_1=5.4$.  We assume each epitope mutation imparts equal independent multiplicative fitness costs, i.e. if $(i_1\dots i_n)$ represents the epitope sequence of strain $i$ and $\mathcal R_i=\mathcal R_0 (1-\kappa)^{i_1+\dots+i_n}$ where   $\mathcal R_0=11.8$ and $\kappa=0.1$ is fitness cost in our simulation.  The scaling factors for viral and immune variant growth rates in \eqref{odeS} are set to:  $\gamma_{i}=3.5, i=0,\dots,3$ and $\gamma_{i}=18.5, i=4,\dots,7$  The corresponding calculations lead to positive epistasis in the invasion circuits of the nested equilibrium $\widetilde{\mathcal E}_3$ (Theorem \ref{newthm1} and Proposition \ref{propMult}) and, as shown in prior work \cite{browne2018dynamics}, result in a sequential nested immune escape trajectory (Fig. \ref{fig_3ep}). 

 An important question concerns if the predicted patterns from our theoretical results on \eqref{odeS} hold when random mutation is included as is in the scenario of HIV infection.  Thus we consider a stochastic extension of the model, along with parameters representative of HIV.  However, since this is a preliminary simulation effort, we choose a rather large viral wild-type (basic) reproduction number $\mathcal R_0$ and low death rate of immune response to better mimic virus-immune evolution for the stochastic model, as in \cite{magalis}.  Similar to the methods in \cite{vanDeutekom}, we simulate mutations of the $n$ loci by drawing from a binomial distribution in a hybrid ODE-stochastic algorithm.   With a mutation rate of $\epsilon=1.67\times 10^{-3}$ per site per day, we compute the number of mutations during replication as follows.  We update mutations at fixed time steps, taken as $\Delta t=1 \ day$, where we approximate the daily number of cells that become de novo infected per viral variant as $M_i = \beta_i XY_i$ cells.   To improve computation speed, we assume that only one of the $n$ loci mutates per replication, i.e. the small probability of simultaneous mutations are neglected.  Then for each viral variant $i=1,\dots,m$ and locus $\ell=1,\dots, n$, the number of mutations is given by ${\rm Bin}(M_i,\epsilon)$.  The viral populations are updated accordingly, and the ODE solver is run for $\Delta t$ time units and then the process repeats.  In the following simulations, we assume that initially there is just the wild-type virus, $y_0(0)>0$, all other strains are absent $y_i(0)=0, \ i=1,\dots, 2^n-1$, and each immune response is present, $z_j(0)>0, \ j=1,\dots, n$. Thus the extended model allows for random mutation and deterministic selection evolving from initial infection by the founder (wild-type) strain.

  \begin{figure}[t!] 
  \subfigure[][]{\label{5a} \includegraphics[width=.5\textwidth,height=.25\textwidth]{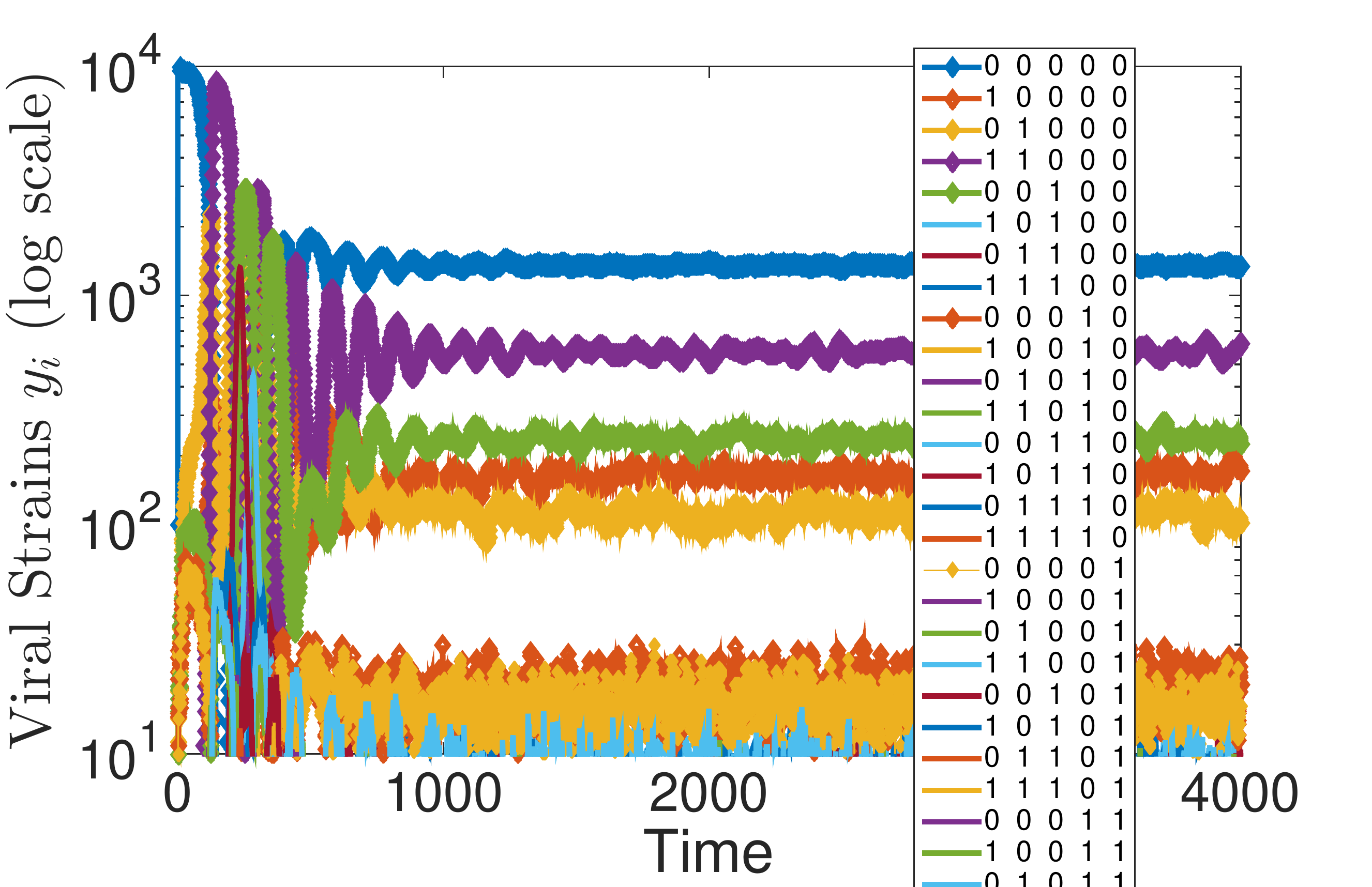}}
     \subfigure[][]{ \label{5b} \includegraphics[width=.5\textwidth,height=.25\textwidth]{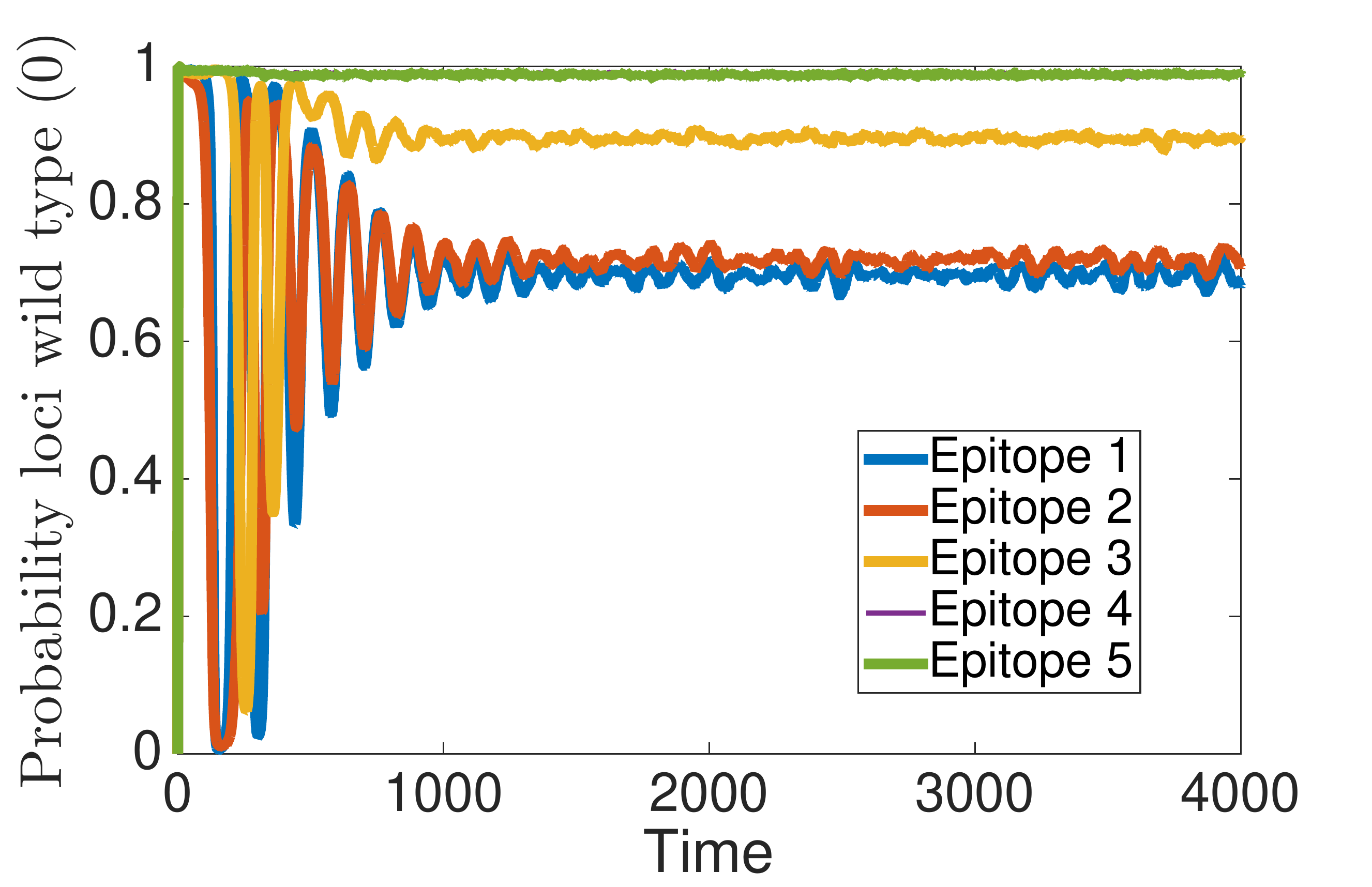}} \\
 \subfigure[][]{ \label{5c} \includegraphics[width=.5\textwidth,height=.25\textwidth]{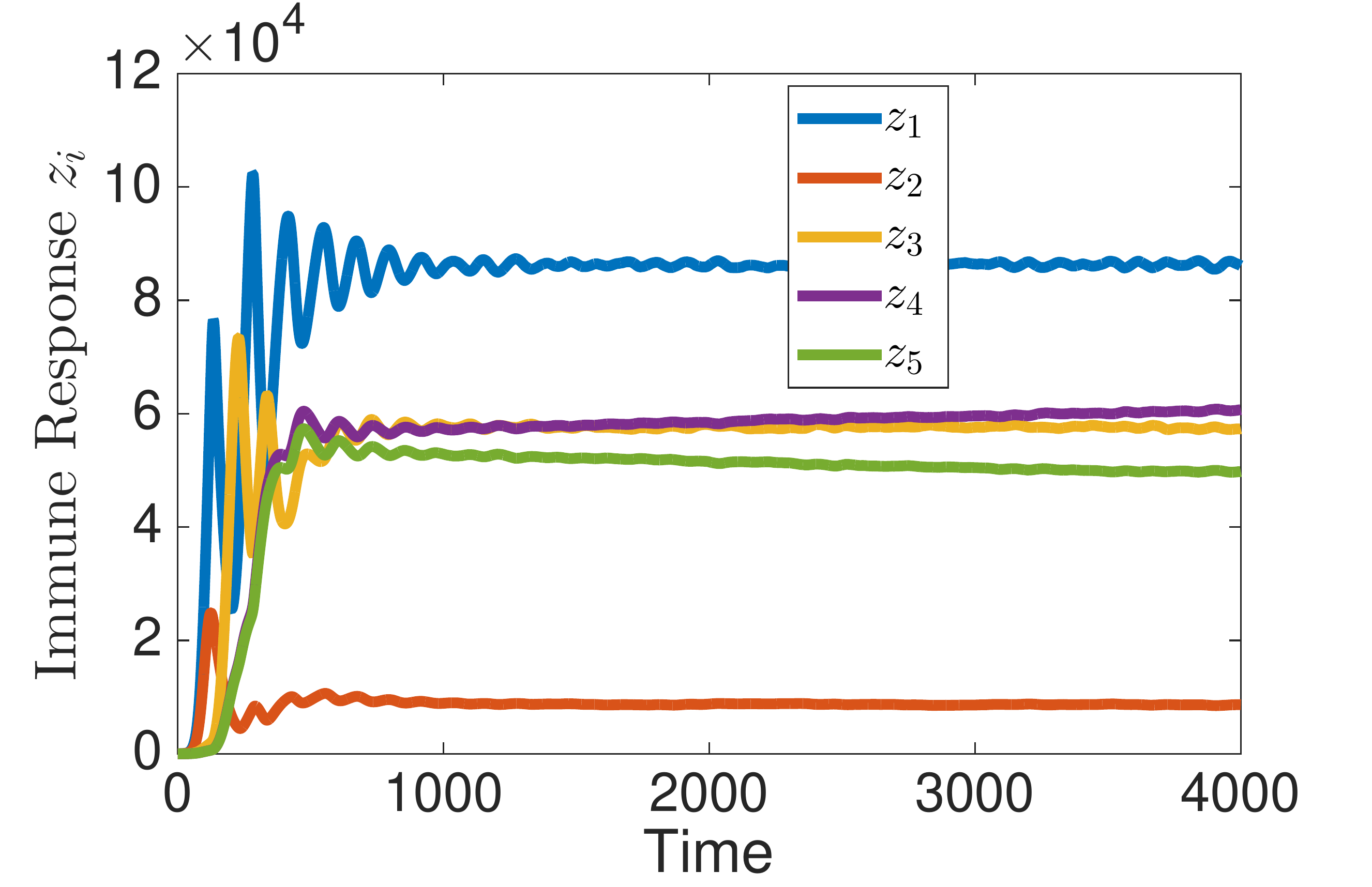}}
   \subfigure[][]{ \label{5d} \includegraphics[width=.5\textwidth,height=.25\textwidth]{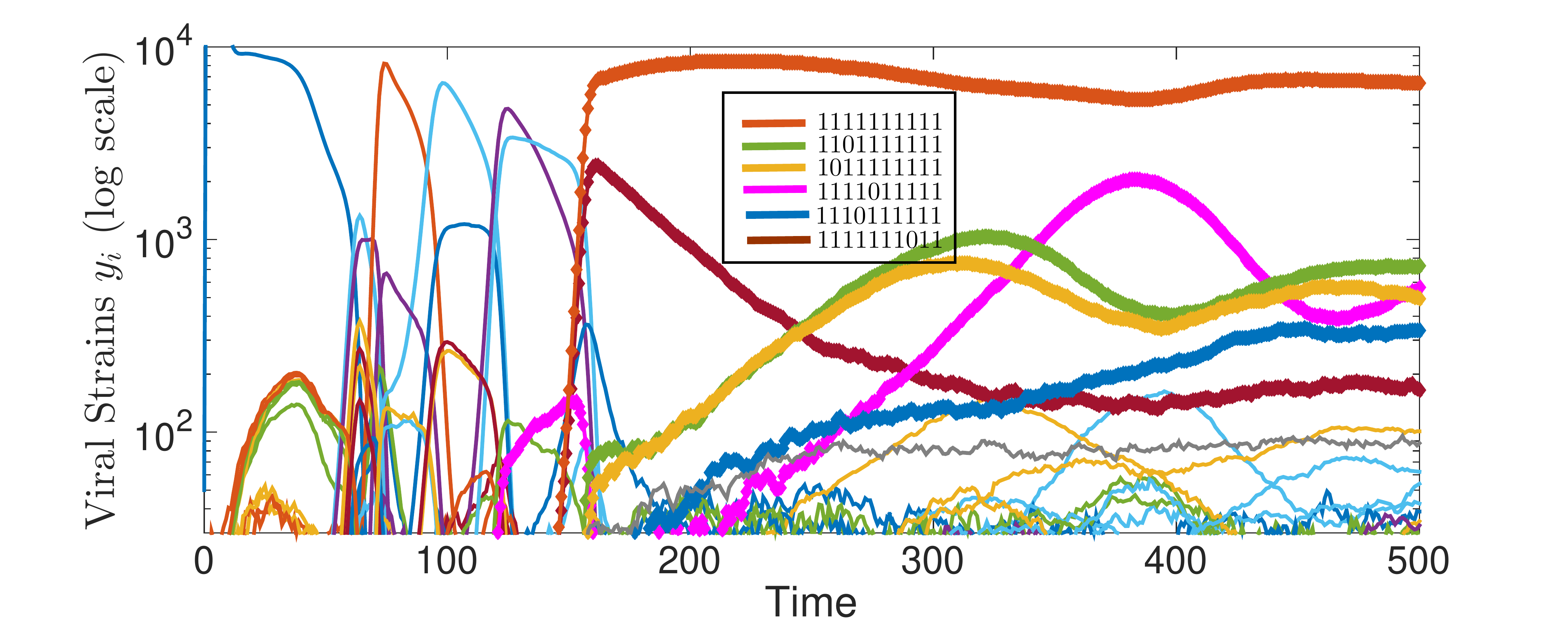}} \\
 \subfigure[][]{\label{5e} \includegraphics[width=.5\textwidth,height=.25\textwidth]{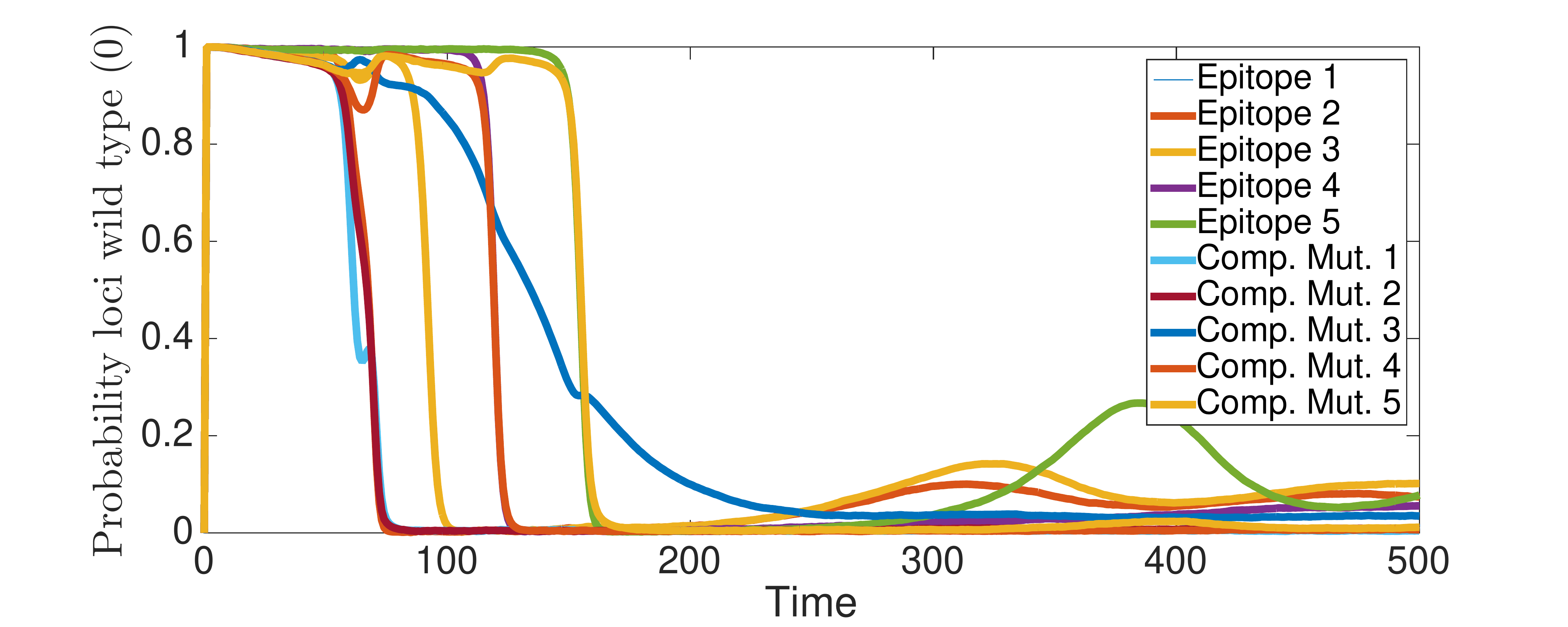}}
\subfigure[][]{ \label{5f}\includegraphics[width=.5\textwidth,height=.25\textwidth]{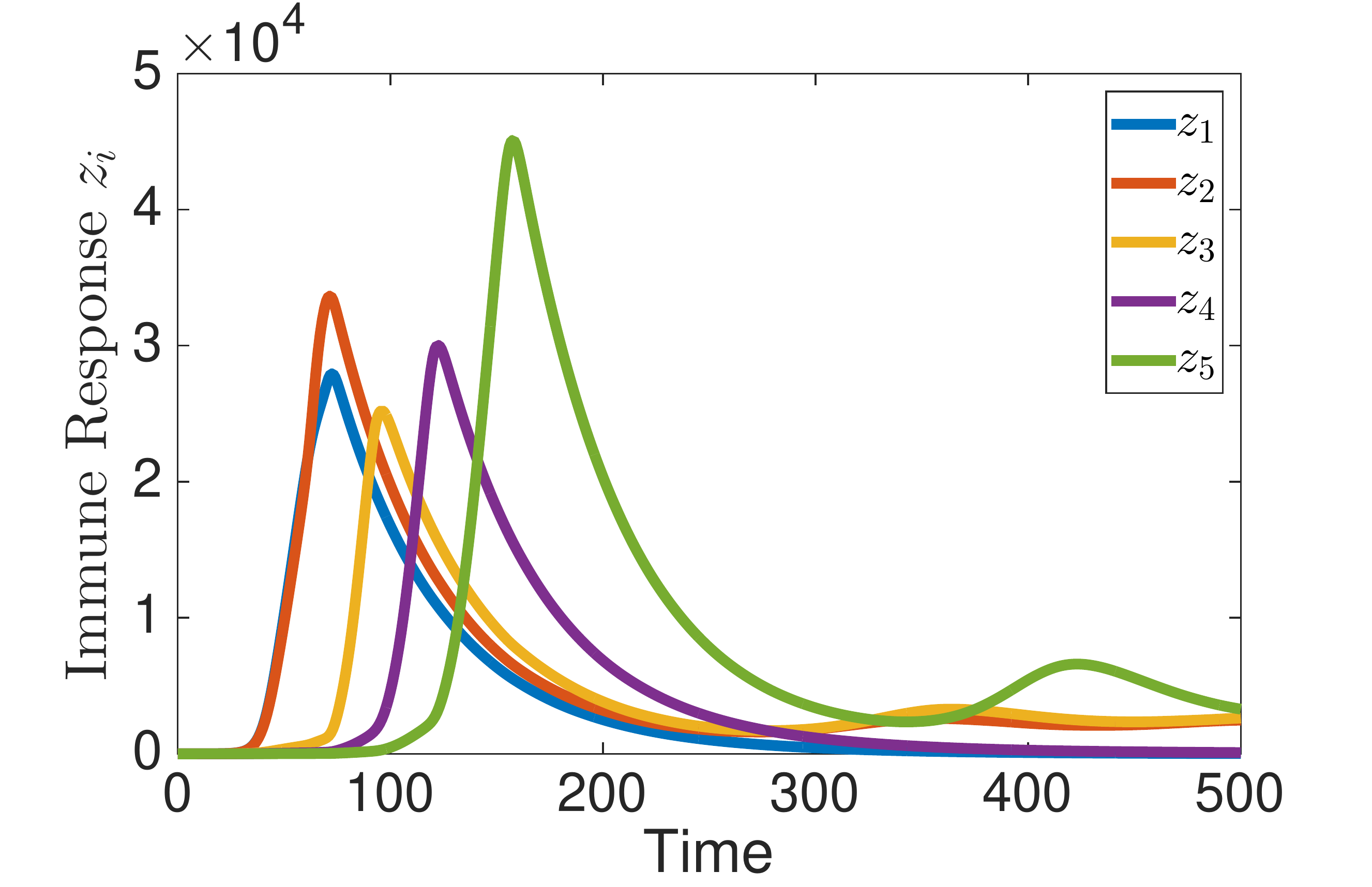}}

 \caption{ \emph{Fixation of resistant alleles and nestedness increases when incorporating compensatory mutations. }  (a,b,c) Simulations of model \eqref{odeS} with random viral mutations for $n=5$ epitopes under Gaussian distributed pairwise epitope interaction fitness cost landscape show that (a) viral strains $y_i(t)$, (b) allele frequency at each epitope and (c) immune responses $z_j(t)$ converge to steady state with large prevalence of wild-type (0) allele in viral population at each epitope.  (d,e,f) Adding complementary loci for each epitope to model which can compensate for $95\%$ of fitness cost of resistance mutations.  The compensatory mutations drive (d) viral strains $y_i(t)$ to rapidly converge to ``nearly nested'' structure as (e) sequential epitope and corresponding compensatory mutations sequentially become ascendant in population, and (f) immune responses $z_j(t)$ are escaped in order of immunodominance hierarchy. }
 \label{fig5}
 \end{figure} 
 
First for the stochastic extension of \eqref{odeS}, consider $n=3$ epitopes, which for simplicity is much less than an actual HIV genome and taken to be a representative cluster or sample of loci.  We utilize variables and parameters from the unscaled version of \eqref{odeS}, system (1) in \cite{browne2018dynamics}
$X=\frac{b}{c}x,  Y_i=\frac{b}{\delta}y_i, Z_j=\frac{\rho_j}{\mathcal I_j} z_j, \rho_j=\frac{\mu_j}{bq_j}, \sigma_j=\frac{\mu_j}{c}$ in order to represent concentrations ($ml^{-1}$) of target cells, virus and immune response, along with immune decay and scaling factor.  Let $b=5\times 10^3  \ (ml\cdot d)^{-1}$, $c=0.01\ d^{-1}$, $\delta=0.5\ d^{-1}$, $\mu_j=0.01\ d^{-1}$, $q_j=1.5$.  Furthermore, for the immunodominance hierarchy, we consider $\mathcal I_j$ uniform random variable in the range $[3.75,   7.875]$.
    First, assume that the viral fitnesses are calculated as $\mathcal R_i=\left[\sum_{i_{j}=1}(1-\kappa_j)+\sum_{i_{j}=1,i_{k}=1}B_{jk}\right]\mathcal R_0$, where additive fitness costs $\kappa$ were uniformly distributed in the range $[0,   0.5]$. and pairwise interaction $B_{jk}$ is uniformly distributed (random positive epistasis) in the range $[0,   1]$..  Then, all pairwise interactions, $B_{jk}$, are positive, along with the invasion circuits which we index $i=1,\dots,4$ in ascending order with respect to the invading binary sequence conversion to decimal representation.  The system is expected to converge to the nested network by Proposition \ref{propPair}, with asymptotic stability of equilibrium $\widetilde{\mathcal E}_3$, persistence of nested strains $y_0,\dots,y_3$ and extinction of remaining viral strains $y_4,\dots,y_7$ subject to small perturbations caused by random mutations, as displayed in Figure \ref{4a}.  Next, we increase  the reproduction number of $y_6 \ (101)$, so that the corresponding invasion circuit linear form $\mathcal A_3$ switches from positive to negative.  From our feasible bifurcations based on the circuit coefficients, we predict that $(101)$ can replace $(100)$ or $(111)$.  Observe in Fig. \ref{4b}, that equilibrium $\widetilde{\mathcal E}_3$ is altered by $(101)$ invading $(111)$, although the mutations allow to $(010)$ to be only at slightly lower levels than $(100)$ in the new strain hierarchy.  

When epistatic interactions become negative by subtracting the pairwise matrix terms, $B_{jk}$, from additive fitnesses, we project a non-nested pattern according to Proposition \ref{propPair}.  Indeed, in Fig. \ref{4c}, simulations converge to the $\leq 1$ mutation network, and hence the antagonism of negative interactions between epitopes thwarts the escape of virus at multiple epitopes.  Finally, we consider Gaussian distributed pairwise interactions, where $B_{jk}$ are random normal variables with mean zero (random signs) and variance of $0.1$ affecting magnitude of epistasis. Observe that the system may (Fig. \ref{4d}) or may not converge (Fig. \ref{4e})  to the nested network depending on the sign of the invasion circuits determining the overall epistasis encoded in the nested pathway.  Furthermore, in the latter case, simulations converge to an equilibrium structure that is not ``close'' to being nested, one-to-one, or $\leq 1$ network, indicating the presence of additional stable equilibrium structures and corresponding circuits not analyzed in this study for the $n=3$ epitope setting.  We also consider $n=5$ under epitopes with the same fitness landscape structure, although a variance of $0.05$ in the normally distributed pairwise epistasis is set to counteract accumulated fitness cost from strains with more mutated epitopes.  Simulations displayed for this case show that numerical solutions of the (deterministic) model \eqref{odeS} (Fig. \ref{4f} are consistent with the stochastic extension (Fig. \ref{4g} and Fig. \ref{4h}), supporting our argument that theoretical results in the differential equations carry over to the eco-evolutionary dynamics with random mutation.  Here, the fitness costs and non-positive epistasis circuits (with respect to nested network) prevent the dominance of strains with several mutations, and lead to the extinction of the weakest immune response $z_5$, along with persistence of only 4 strains, despite the 5 epitopes.  

  \FloatBarrier
In Fig. \ref{fig5}, we simulate eco-evolutionary dynamics again for 5 epitopes under Gaussian distributed pairwise interactions, where $B_{jk}$ are zero-mean normal random variables with variance of $0.05$, and all other parameter assumptions remaining the same.  The balance between immune response pressure selecting for resistance and the fitness costs occurring with each epitope mutation results in the virus mutant strains evolving to escape some immune responses, but the ancestral strains, including wild-type $y_0$ can still persist (Fig. \ref{5a}).  In addition, ``backward'' mutations allow mutated epitopes to revert back to wild-type ($0$) in a large proportion of viral population (Fig. \ref{5b}), even after invasion by mutant allele ($1$), as the sign of the invading circuit linear form and rise of more immune response populations (Fig. \ref{5c}) determine strain additions or replacements which result in the persistent strain structure of the equilibrium.  In HIV infection, resistance mutations often to become more dominant in viral population with several escapes persisting in the population without reversion because of compensatory mutations in linked loci which allow the virus to regain most of the fitness cost associated with an epitope mutation \cite{Althaus}.  We simulate compensatory mutations by adding a complementary loci for each epitope $j=1,\dots,5$, which is either neutral ($0$), not impacting fitness or if mutated ($1$) can result in the virus restoring $95\%$ of its original fitness value if the strain has mutated epitope $j$ from wild-type ($0$) to resistant ($1$).  Indeed, consider loci $5+j$, $j=1,\dots,5$, and viral sequence $\bm i'$ with $\bm i_{5+j}=0$ which has undergone mutation and fitness cost in epitope $j$ from neighboring strain $\bm i$ ($\bm i_j=0 \rightarrow \bm i'_j=1 \Rightarrow \mathcal R_{\bm i}>\mathcal R_{\bm i'}$).  Then assuming all other epitopes remain fixed, we suppose the strain $\bm i''$ gaining compensatory mutation has the following update in fitness: $\bm i'_{5+j}=0\rightarrow \bm i''_{5+j}=1 \Rightarrow \mathcal R_{\bm i''}=\mathcal R_{\bm i'}+.95\left(1-\frac{\mathcal R_{\bm i'}}{\mathcal R_{\bm i}}\right)$.  In contrast to the case of reproduction numbers solely dependent on epitope sequence, the addition of these complementary loci allows for sequential epitope escapes with concurrent compensatory mutations dominating the viral population (Fig. \ref{5d} and \ref{5e}) and suppressing the immune response (Fig. \ref{5f}).

\begin{figure}[t!]
\subfigure[][]{\label{fig6a}\includegraphics[width=7.5cm,height=4cm]{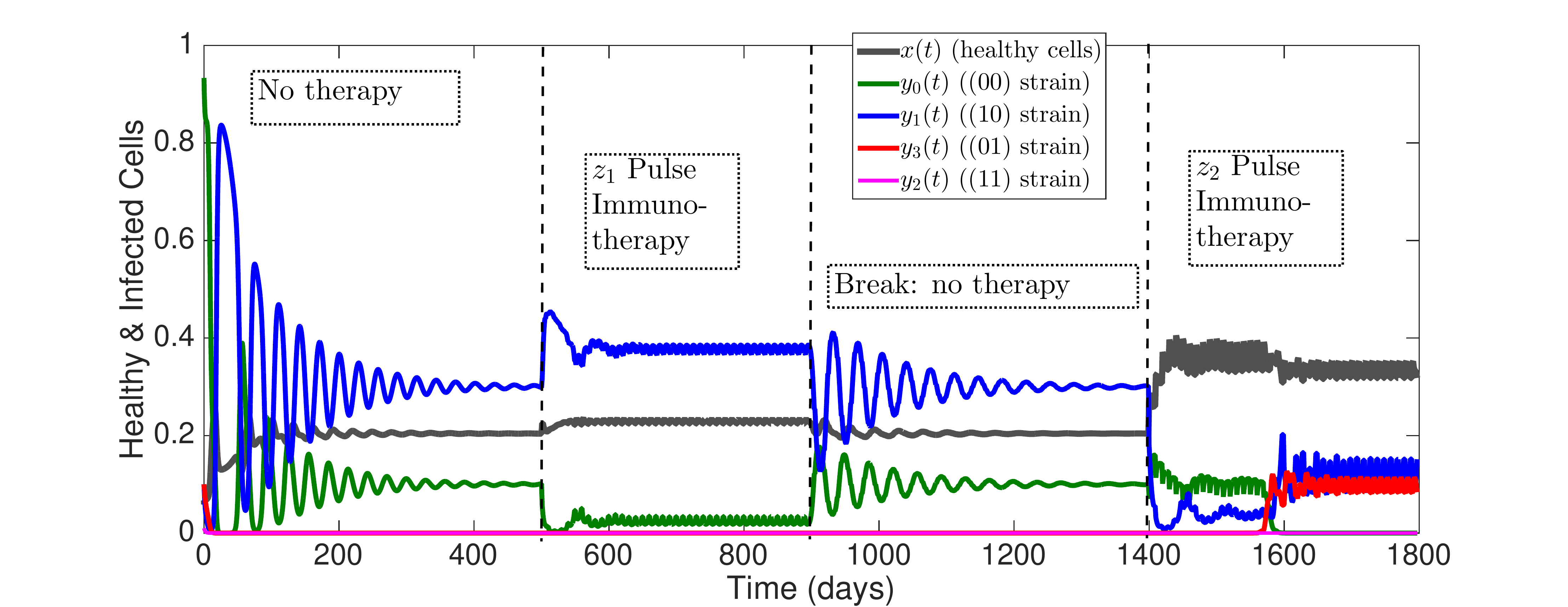}}
\subfigure[][]{\label{fig6b}\includegraphics[width=7.5cm,height=4cm]{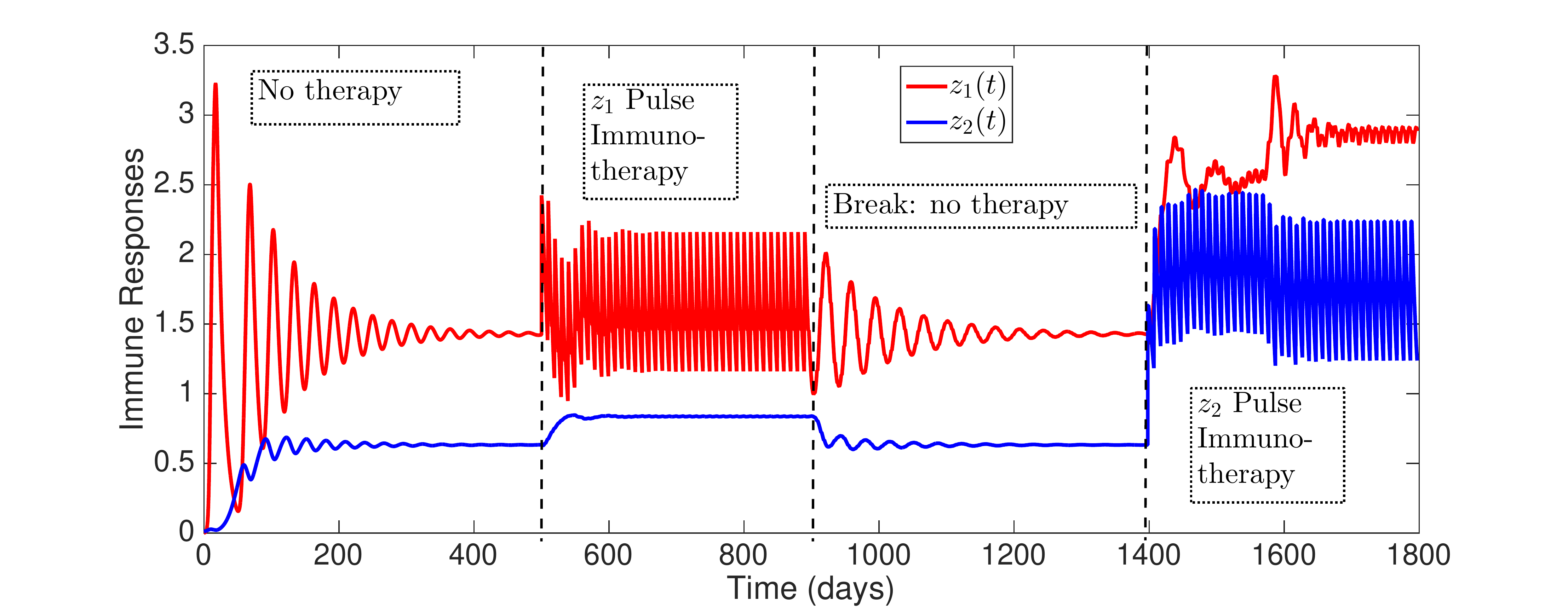}}  
\caption{\emph{Simulating pulse immunotherapies in two-epitope model shows priming subdominant response $z_2$ is more effective than therapy with dominant response $z_1$.} Viral strain $y_i(t)$ and healthy cell $x(t)$ (a), along with immune response $z_j(t)$ (b), trajectories in model \eqref{odeS} under no treatment initially, then periodic $z_1$ immune infusions, followed by treatment interruption, and finally periodic $z_2$ immune infusions.  Even though subdominant resistant strain $y_3 \ (01)$ has higher reproduction number, the $z_2$ therapy restores larger healthy cell count as system birfurcates from nested (strains $\left\{00,10\right\}$) to $\leq 1$ mutation (strains $\left\{00,10,01\right\}$) persistent network.}
  \label{fig6}
  \end{figure}

Finally, we numerically illustrate implications of our results for designing potential immunotherapy strategies against an immune escaping virus such as HIV.  We consider the deterministic ODE \eqref{odeS} with $n=2$ epitopes (diagram shown in Fig. \ref{3a}), and add periodic infusions of the immune response populations, $z_1(t)$ and $z_2(t)$.  In particular, we incorporate periodic infusion times, $t_k=t^(j)+k\tau$, $k=1,\dots,N$, of the immune population $z_j$ by applying an impulsive increase of $D$ units to the model, i.e. Dirac delta distributions ($D\delta(t-t_k)$) are added to the $\dot z_j$ component in \eqref{odeS}, and numerically solve in the cases of no treatment and distinct immunotherapies (see Fig. \ref{fig6}).  The viral fitness parameters utilized are $\mathcal R_0=15$, $\mathcal R_1=8$, $\mathcal R_2=3$, $\mathcal R_3=11.5$, $\mathcal I_1=10$, and $\mathcal I_2=2.5$ so that without therapy the system converges to nested equilibrium  $\bar{\mathcal E}_2$ with $y_0,y_1,z_1,z_2$ persisting.  Upon convergence to this rest point after perturbing the wild-type (immune-free) virus equilibrium $\widetilde{\mathcal E}_0$ by introducing mutant strains and immune responses, at $t^(1)=500 \ days$ we begin to pulse the dominant immune response $z_1$ by adding $D=1$ units of cells every $T=10 \ days$ (Fig. \ref{fig6a} and \ref{fig6b}).  The persistent variants remain in the same nested structure and the system settles into a periodically forced solution with an increase in the ``$z_1$-resistant'' viral strain ($y_1$ or $10$) prevalence, decrease in $y_0$, and modest $12.4\%$ jump in healthy cell count.  After removing the $z_1$-therapy and solutions returning to original state $\widetilde{\mathcal E}_0$, at $t^(2)=1400 \ days$ we test the periodic $z_2$-therapy with the same impulse magnitude of $D=1$ and frequency $T=10 \ days$.  Contrary to the first therapy, the periodic infusion of $z_2$ immune cells causes a bifurcation from the nested to the $\leq 1$-mutation network with addition of the subdominant $z_2$-resistant strain $y_3 \ (01)$ into the viral quasi-species.  Furthermore, both $z_1$ and $z_2$ populations are enhanced and the healthy cells increase by around $67\%$.  In each case, the stability condition given by inequality \eqref{inequ} is altered, so that even though the viral fitnesses $\mathcal R_i$ are constant, the pulsed $z_j$ levels can be thought to induce effective reproduction numbers which may change the sign of epistasis in the circuits \eqref{nestcirc1} (or \eqref{nestcirc2}) corresponding to the nested equilibria.  Here, the strategy of priming the subdominant response $z_2$ tilts this effective fitness landscape toward negative epistasis, convergence to $\leq 1$ mutation network, and, although invasion by the higher mutant fitness strain $y_3$ occurs, an improved outcome for host is obtained.

\FloatBarrier
\section{Discussion}

In this paper, we rigorously connect population dynamics thresholds with concepts from evolutionary genetics, which allows us to characterize distinct regimes of multi-strain persistence, stability, and resistance pathways in a virus-immune ecosystem in a biologically meaningful manner.  The complexity of the viral (binary sequence) genetic structure, along with dynamic virus fitness landscape and immune response populations, lead to a multitude of equilibria and general stability conditions which challenge interpretation, classification or simplification in terms of fundamental parameters such as reproduction number.  By finding equivalent sharp thresholds based on an appropriate definition of epistasis in the fitness landscape governing persistent equilibrium network structures, we are able to gain insight on eco-evolutionary dynamics.  In particular, the prediction of the virus escape pathway against immune attack on multiple epitopes is determined by epistasis in the ``invasion circuits'' controlling the bifurcations in our dynamical system.  

Our theoretical results lend support to \emph{circuits}, the minimal additive combinations of binary sequences \cite{beerenwinkel2007epistasis}, as the fundamental measure of epistasis in a fitness landscape.  Other ways to quantify epistasis may be simpler or offer other advantages, but circuits underly fitness landscape shape, and here we show that they also dictate prey-predator dynamics on top of building the phenotypic/genetic structure of the prey (virus) population.   This connection between population dynamics and genetics naturally comes from applying linear algebra to formulate the invasion rates of missing virus strains at an equilibrium as minimal combinations of virus reproduction numbers.  Moreover, the invasion circuit and corresponding linear form encode the resident strains which can be replaced by a mutant strain, and together with their equilibrium strain densities, determine the bifurcations resulting in new feasible steady states.  

The persistent network structures of virus and immune response populations analyzed in this work represent distinct patterns formed by the forces of viral resistance and fitness costs, and immunodominance.  The nested network equilibria admits a diverse ecosystem with generalist to specialist ordering in prey-predator interactions, as opposed to the modularity of the one-to-one (strain-specific) and $\leq 1$ mutation network.  In terms of viral escape from the immune response, the nested pathway offers the most efficient evolution as mutant strains sequentially gain resistance to immune populations strongest to weakest.  That the special case of positive (or synergistic) pairwise interactions between epitopes presents a nested trajectory (Proposition \ref{propPair}) highlights how convergence to this network coincides with the classical definition of positive epistasis favoring double mutants. While this proposition may be expected, both the dominant epitope escape being favored even when exacting a larger fitness cost than other epitopes and the viral (prey) fitness epistasis determining fate of the virus-immune (prey-predator) ecosystem, are less intuitive features of the result, along with our more general Theorem \ref{newthm1} on nested network equilibrium stability.  In contrast, the one-to-one and $\leq 1$ mutation network are instances of resulting dynamics for negative (antagonistic) pairwise interactions, and particularly the $\leq 1$ mutation structure is ideal from the host perspective of containing multi-epitope resistance.

 Numerical simulations of the ordinary differential equation \eqref{odeS}, along with an extended stochastic version including random mutation, demonstrate how eco-evolutionary trajectories are determined by epistasis in the viral fitness landscape, as predicted by our analytical results.  Indeed efficient viral escape in a nested fashion occurs when our necessary and sufficient conditions regarding positive epistasis are satisfied, and becomes more complex as negative epistatic interactions allow different combinations of resistance mutations to persist in the virus population. Under random epistatic pairwise interactions, any number of equilibria structures can be realized which may hinder multi-epitope resistance, but compensatory mutations may allow for sequential viral escape of immune responses, as shown in Fig. \ref{fig5}.     Furthermore, our model and results may inform upon immunotherapy for HIV.  In most clinical trials of therapeutic vaccines, potentially favorable T cell responses were of limited success due to viral escape from epitopes used in vaccine \cite{pantaleo2013vaccine}, but one possible strategy is to immunize with a set of the most conserved (associated with high fitness cost of resistance), subdominant epitopes \cite{mcmichael2006hiv,ahmed2019sub}.  Thus, it may be desirable to guide the virus-immune trajectory toward a non-nested network structure by priming subdominant immune responses.  Here, we illustrate that this strategy can work even when resistance to subdominant response comes with less fitness cost, as a bifurcation is induced to a state with viral mutant competition and optimal healthy cells compared to an immunotherapy with the dominant response (see Fig. \ref{fig6}).

Future work can build upon our results in several directions.  While the dynamics for $n=2$ epitopes is resolved for model \eqref{odeS}, the case $n\geq 3$ has not been completely classified, and our work shows that feasible stable equilibria may be discovered through analysis of relevant circuits, although even $n=3$ is challenging due to large number of strain combinations.   One way to explore how a particular ecosystem structure evolves is to follow the convergence of stepwise mutations and selection from wild-type strain in the hybrid stochastic/deterministic approach of polymorphic evolution sequences \cite{champagnat2011polymorphic}.  However,  simulations conducted (not shown here) revealed that the attracting (saturated) equilibrium was not obtained by a sequence of viral strain and immune response invasions starting from initial infection by the $\mathbf 0$ strain, thus multi-loci mutations and invasions are necessary, perhaps in the spirit of the ``adaptive walks'' jumping between equilibria of Lotka-Volterra systems developed in \cite{kraut2019adaptive}.  This approach of obtaining Lotka-Volterra dynamics from limits of stochastic models relies on strong conditions guaranteeing global stability for the ODE, and so it is an open problem for our system.   Finally, by incorporating data on the vial fitness landscape at multiple epitopes in the face of epistatic interactions and concurrent immune response attack, model parameterization with calculation of ``invasion circuits'' may verify theoretical results, predict eco-evolutionary trajectory, and inform upon potential immunotherapies.   

\section*{Acknowledgement}
CJB and FY acknowledge support by a U.S. National Science Foundation grant (DMS-1815095).  We also thank Hal Smith for insightful discussions.

\section*{Appendix}

\subsection*{Proofs of Theorems}
\begin{proof}[Proof of Proposition \ref{prop1new}]
  For any $\mathcal C$ and equilibrium of \eqref{odeS}, $\mathcal E*=(x^*,\mathbf{y}^*,\mathbf{z}^*)$, we find that
\begin{align*}
0&=\sum_{\mathbf k\in\mathcal C} a_{\mathbf k}  \frac{\dot y_{\mathbf k}}{\gamma_{\mathbf k} y_{\mathbf k}^*} \\ 
&=x^*\sum_{\mathbf k\in\mathcal C} a_{\mathbf k} \mathcal R_{\mathbf k} -\sum_{\mathbf k\in\mathcal C} a_{\mathbf k} - \sum_{\mathbf k\in\mathcal C} a_{\mathbf k} \sum_{j=1}^n (1-\mathbf k_j) z_{j}^*  \\
&=x^*\sum_{\mathbf k\in\mathcal C} a_{\mathbf k} \mathcal R_{\mathbf k} -\sum_{\mathbf k\in\mathcal C} a_{\mathbf k}- \sum_{\mathbf k\in\mathcal C} a_{\mathbf k} \left(\mathbf 1- \mathbf k\right) \cdot \mathbf{z}^*  \\
&=x^*\sum_{\mathbf k\in\mathcal C} a_{\mathbf k} \mathcal R_{\mathbf k} -\left(1+\mathbf 1 \cdot \mathbf{z}^*  \right) \sum_{\mathbf k\in\mathcal C} a_{\mathbf k} + \left( \sum_{\mathbf k\in\mathcal C} a_{\mathbf{k}} \mathbf{k} \right) \cdot \mathbf{z}^*  \\
&=x^*\sum_{\mathbf k\in\mathcal C} a_{\mathbf k} \mathcal R_{\mathbf k},
\end{align*}
because $\sum_{\mathbf k\in\mathcal C} a_{\mathbf k} \mathbf k=0$ and $\sum_{\mathbf k\in\mathcal C} a_{\mathbf k} =0$.  This proves the first statement.  The next statement follows from Proposition \ref{prop33} upon assuming $\sum_{\mathbf k\in\mathcal C} a_{\mathbf k} R_{\mathbf k}=0$.  Indeed, uniqueness of equilibrium in a certain positivity class is equivalent to ${\rm Ker}(A')^T\cap \vec{\mathcal R}'^{\perp}=\left\{\mathbf 0\right\}$, which is equivalent to the condition that the augmented matrix $C$ consisting of
adding the final row $\vec{\mathcal R}'^T$ to $(A')^T$ has trivial kernel \cite{browne2018dynamics}.  Here $A'$ is the $m'\times n'$ interaction matrix consisting of the $m'$ strains comprising the circuit and $n'$ (positive component) immune responses.  Consider the vector $\mathbf a$ consisting of the circuit weights.  Then from the previous points, we find that $C\mathbf a=\mathbf 0$.  Thus there cannot be a unique equilibrium and if there exists an equilibrium with $y^*_{\mathbf k}>0$ for all $\mathbf k\in\mathcal C$, then there are infinitely many such equilibria, with virus component vector denoted $\bar{\mathbf y}$.  Observe that since $\bar{\mathbf y}-\mathbf y^*\in {\rm Ker}(A')^T\cap \vec{\mathcal R}'^{\perp}$, then $\bar{\mathbf y}= \mathbf y^* + \alpha \mathbf a$ for $\alpha\in\mathbb R$.  
\end{proof}

\begin{proof}[Proof I of Theorem \ref{newthm1}]
If $\mathcal R_0> \mathcal Q_0:=1$, let $k$ be the largest integer in $[0,n-1]$ such that $\mathcal R_k > \mathcal Q_{k+1}$.  Without loss of generality, let $k=n-1$.
Consider a given missing viral strain $y_{i}$ ($i\in [n+1,2^n-1]$) with sequence $\mathbf i$.  Define the following linear form based on it's invasion rate:
\begin{align}
\frac{\dot y_i}{\gamma_i y_i} = -\frac{\mathcal A_{i}}{C_n}, \ \ \text{where} \ \ -\mathcal A_{i}:= \mathcal R_{i}-\mathcal R_{n} -\sum_{j=1}^n (1-i_j) \left( \mathcal R_{j-1}-  \mathcal R_{j} \right), \notag
\end{align}
and $C_n=\mathcal R_n$ when $\mathcal R_n>\mathcal Q_n$ and $C_n=\mathcal Q_n$  $\mathcal R_n\leq \mathcal Q_n$.  The telescoping sum above is determined by the following sequence: $\left(a_{j}\right), \ j=0,1,\dots,n$, where $a_{0}=1-i_{1}$, $a_{j}=i_{j}-i_{j+1}$ for $j=2,\dots,n-1$, $a_{n}=i_{n}$. In this way, $-\mathcal A_{i}:= \mathcal R_{i}-\sum_{j=0}^{n}\ a_j \mathcal R_j$.  In order to prove that this is a vanishing linear form of a circuit, we show that it is the linear form of a minimally linearly dependent collection of extended binary sequences.  Denote the binary sequences of nested network as $\mathbf k_0,\dots,\mathbf k_n$ corresponding to ordered strains $y_0,\dots, y_n$.  Let $\mathcal N\subset \left\{0,1\right\}^{n+1}$ denote the subset of nested extended binary sequences, where $\hat{\mathbf i}=\mathbf i1 \in  \left\{0,1\right\}^{n+1}\setminus \mathcal N$ and $\hat{\mathbf k}=\mathbf k1\in \mathcal N$ represent binary sequences extended by digit 1.  Notice that $\mathcal N$ forms a basis of $\mathbb R^{n+1}$ (since the $n+1\times n+1$ matrix $\left( \mathbf k_n,\mathbf k_{n-1},\dots,\mathbf k_0 \right)$ has a triangular row reduced eschelon form with values $\pm1$ on diagonal).  Thus for $\hat{\mathbf i} \in  \left\{0,1\right\}^{n+1}\setminus \mathcal N$, there is a unique set of coefficients $a_{j}$, $j=0,1,2,\dots,n$, yielding $\hat{\mathbf i}$ as a linear combination of the nested network vectors:
$$ \hat{\mathbf i}=a_{0}\hat{\mathbf k}_{0}+a_{1}\hat{\mathbf k}_{1}+\dots+a_{n}\hat{\mathbf k}_{n} .$$ 
The above linear system resolves as follows:
\begin{align*}
a_{1}+a_{2}+\dots+a_{n}&=i_{1}\\
a_{2}+a_{3}+\dots+a_{n}&=i_{2}\\
& \vdots\\
a_{k}+a_{k+1}+\dots+a_{n}&=i_{k}\\
& \vdots\\
a_{n}&=i_{n}\\
a_{0}+a_{1}+\dots+a_{n}&=1
\end{align*}

which leads us to the set of coefficients $a_{k}$ where $k=0,1,\dots,n$ defined by the following:

\[a_{0}=1-i_{1}\]
\[a_{k}=i_{k}-i_{k+1} \quad  \text{for} \quad k=1,\dots,n-1\] 
\[a_{n}=i_{n}\]
Therefore the set $\hat{\mathbf i}\cup \left\{\hat{\mathbf k}\right\}_{\hat{\mathbf k}\in\mathcal N}$ is linearly dependent.  Let $\alpha_i$ be the nonzero terms in sequence $(a_j)$, i.e. $\alpha_i:=\left\{ j\in[0,n]: a_j\neq 0\right\}$, where $a_j=\pm 1$ for $a_j\in\alpha_i$.  Since $(a_j)$ is unique linear combination with respect to basis $\mathcal N$, the set $\hat{\mathbf i}\cup \left\{\hat{\mathbf k}\right\}_{\hat{\mathbf k}\in\alpha_i}$ is a minimal linearly dependent set.  Thus we obtain the following circuit and corresponding vanishing linear form:
\begin{align*}
\mathcal C_i&=y_i \cup \left\{ y_j \right\}_{j\in\alpha_i}, \qquad \mathcal A_i= -\mathcal R_i -\sum_{j\in\alpha_i}a_j \mathcal R_{j}. 
\end{align*}
\end{proof}

\begin{proof}[Proof II of Theorem \ref{newthm1}]
 If $\mathcal R_0> \mathcal Q_0:=1$, let $k$ be the largest integer in $[0,n-1]$ such that $\mathcal R_k > \mathcal Q_{k+1}$.  WLOG let $k=n-1$.
Consider a given missing viral strain $y_{i}$ ($i\in [n+1,2^n-1]$) with sequence $\mathbf i$.  Define the following linear form based on it's invasion rate:
\begin{align}
\frac{\dot y_i}{\gamma_i y_i} = -\frac{\mathcal A_{i}}{C_n}, \ \ \text{where} \ \ -\mathcal A_{i}:= \mathcal R_{i}+\mathcal R_{n} +\sum_{j=1}^n (1-i_j) \left( \mathcal R_{j-1}-  \mathcal R_{j} \right), \notag
\end{align}
and $C_n=\mathcal R_n$ when $\mathcal R_n>\mathcal Q_n$ and $C_n=\mathcal Q_n$  $\mathcal R_n\leq \mathcal Q_n$.
We claim that $\mathcal A_{i}=0$ in additive case, and furthermore $\mathcal A_{i}\neq 0$ if any (non-zero) viral fitness is removed from $\mathcal A_{i}$ in the resulting sum.  In other words we claim that $\mathcal A_{i}$ defines a 
 circuit $\mathcal C$ containing strain $i$ and other strains on nested network.  To test additivity, it suffices to consider the linear form on the binary sequences:
 \begin{align}
-f_{i}&:= \mathbf i-1^n - \sum_{j=1}^n (1-i_j) \left( 1^{j-1}0^{n-j+1}-  1^{j}0^{n-j} \right) \notag
\end{align}
Since $\mathbf i$ is not in nested network ($i\in [n+1,2^n-1]$), there exists $p\in[1,n-1]$ such that $i_p=0,i_{p+1}=1$.  In other words, there exists a $01$ string  in the binary sequence $\mathbf i$.  We prove that $\mathcal A_i$ defines a circuit by induction on the number of $01$ strings, $s$.  First suppose that $s=1$.  Let $0\leq m_1 <p$ be maximal such that $i_{p}=1$ and $p+1\leq m_2\leq n$ be maximal such that $i_{m_2}=0$.    With these conditions, $\mathbf i=1^{m_1}0^{p-m_1}1^{m_2-p}0^{n-m_2}$.  Then
 \begin{align}
-f_{i}&:= \mathbf i- 1^n - \sum_{j=1}^n (1-i_j) \left( 1^{j-1}0^{n-j+1}-  1^{j}0^{n-j} \right) \notag \\
&=\mathbf i-1^n - \left(1^{m_1}0^{n-m_1}-1^{p}0^{n-p}\right) - \left(1^{m_2}0^{n-m_2}-1^n \right)\notag \\
\Rightarrow & f_i=  \mathbf i - 1^{m_1}0^{n-m_1}+1^{p}0^{n-p} - 1^{m_2}0^{n-m_2} \notag  \\
&=  0^p1^{m-p-1}0^{n-m_2}-0^p1^{m-p-1}0^{n-m_2} \notag  \\
&=0 .
\end{align}
Furthermore $f_i=  \mathbf i - 1^{m_1}0^{n-m_1}+1^{p}0^{n-p} - 1^{m_2}0^{n-m_2}$ contains the viral sequences corresponding the non-zero fitness quantities in $\mathcal A_i$.  Thus $\mathcal A_i$ defines a circuit since the minimal circuit size is $4$.   Now for the induction step, consider $s>1$.  Assume that $\mathcal A_{\ell}$ defines a circuit for any sequence $\mathbf{\ell}$ with $s-1$ or less $(01)$ strings, and suppose the sequence $\mathbf i$ has $s$ $(01)$ strings. Let $p_1<p_2<\dots<p_s$ be locations of the $01$ strings (with $i_{p_j}=0,i_{p_j+1}=1$).  Let $0\leq m_1 <p_1$ be maximal such that $i_{m_1}=1$ and $p_1+1\leq m_2\leq p_2$ be maximal such that $i_{m_2}=1$.    So
$\mathbf i=1^{m_1}0^{p_1-m_1}1^{m_2-p_1}i_{p_2}\dots i_n$.   Then 
 \begin{align}
f_{i}&:= \mathbf i- 1^n - \left(1^{m_1}0^{n-m_1}-1^{p_1}0^{n-p_1}\right) - \sum_{j=p_1+2}^n (1-i_j) \left( 1^{j-1}0^{n-j+1}-  1^{j}0^{n-j} \right) \notag \\
&=1^{m_1-1}0^{p_2-m_1}i_{p_2}\dots i_n-1^n - \sum_{j=p_1+2}^n (1-i_j) \left( 1^{j-1}0^{n-j+1}-  1^{j}0^{n-j} \right) \notag \\
&=\mathbf{\tilde i} - 1^n - \sum_{j=1}^n (1-\tilde i_j) \left( 1^{j-1}0^{n-j+1}-  1^{j}0^{n-j} \right) \notag \\
 & = f_{\tilde i},
\end{align}
 where $\mathbf{\tilde i}=1^{m_1}0^{p_2-m_1}1i_{p_2+2}\dots i_n$ has $s-1$ $(01)$ strings.  Thus by induction hypothesis, we obtain $f_i=f_{\tilde i}=0$.   Let $\mathcal C_i$ denote the collection of viral sequences corresponding the non-zero fitness quantities in $\mathcal A_i$.   Notice that it is not hard to ascertain from the above calculations that 
\begin{align*}
\mathcal C_i&=\mathbf i \cup \left\{1^{m_j}0^{n-m_j},1^{p_j}0^{n-p_j}\right\}_{j=1}^s \cup 1^{m_{s+1}}0^{n-m_{s+1}}, \\
\mathcal A_i&=\mathcal R_i -\sum_{j=1}^{s+1}\mathcal R_{m_j} +  \sum_{j=1}^{s}\mathcal R_{p_j}, \quad (0\leq m_1 < p_1 < m_2< \dots <p_s < m_{s+1} \leq n+1 ). 
\end{align*}
Consider an arbitrary proper subset $\mathcal B$ of  $\mathcal C_i$.  First, we claim that there can not be a circuit consisting solely of sequences in the nested network.  Suppose by way of contradiction that there exists a linear form with $g:=\sum_{j=1}^{n+1} b_j \mathbf j=0$.  Let $k=\max\left\{1\leq j\leq n+1 |  b_j\neq 0 \right\}$.  Then for the $k^{th}$ digit in the binary sequence of $g_N$, we find $(g)_k\neq 0$.  So there are no vanishing linear forms on the nested network.  Thus it suffices to consider the case where $\mathbf i\in \mathcal B$.  Motivated from calculations above, define
\begin{align*}
\mathbf{\tilde i}=\mathbf i + \sum_{\mathcal C_i\setminus \mathcal B} \left(-1^{m_j}0^{n-m_j}+1^{p_j}0^{n-p_j}\right),
\end{align*}
where $\mathbf{\tilde i}$ is not in nested network since $\mathcal B\neq \emptyset$.  Furthermore because $\mathcal C_i\setminus \mathcal B\neq \emptyset$, we obtain that $\mathbf{\tilde i}$ has less than $s$ (01) strings.  By induction hypothesis, $\mathcal A_{\tilde i}$ defines a circuit $\mathcal C_{\tilde i}$ for the sequence $\mathbf{\tilde i}$, where $\mathcal C_{\tilde i}=\left\{\mathbf{\tilde i}\right\} \cup \mathcal B\setminus\left\{\mathbf{i}\right\}$.  Denote the vanishing linear form as  $f_{\tilde i}=\sum_{\ell\in \mathcal C_{\tilde i}} a_{\ell} \mathbf{\ell}$.  Now for arbitrary coefficients $b_j$,
\begin{align*}
\sum_{j\in\mathcal B} b_j  \mathbf j&= b_i \mathbf i + \sum_{ \mathcal B\setminus\left\{\mathbf{i}\right\}} b_j  \mathbf j \\
&= b_i \left(\mathbf{\tilde i}-\sum_{\mathcal C_i\setminus \mathcal B} \left(-1^{m_j}0^{n-m_j}+1^{p_j}0^{n-p_j}\right)\right) + \sum_{ \mathcal B\setminus\left\{\mathbf{i}\right\}} b_j  \mathbf j \\
&= \sum_{ \mathcal B\setminus\left\{\mathbf{i}\right\}} (b_j-b_ia_j)  \mathbf j -b_i \sum_{j\in \mathcal C_{\tilde i}} a_{j} \mathbf j ,
\end{align*}
The above sum consists solely of sequences in the nested network and thus there are no vanishing linear forms.  This implies that the above sum is zero only if $b_i=0$, which further leads to conclusion that $b_j=0$ for $j\in\mathcal B\setminus\left\{\mathbf{i}\right\}$.  Thus the proper subset $\mathcal B$ can not be a circuit for any linear form. 

\end{proof}

\begin{proof}[Proof of Theorem \ref{ssF}]
By Proposition 6 in \cite{browne2018dynamics}, an equilibrium $\mathcal E^*$ with a strain-specific subgraph, i.e. $\Omega_y\subseteq [1,n+1]$, is stable if and only if one of the following holds:
\begin{itemize}
\item[i.]  $\mathcal R_{n+1}\leq \mathcal P_n$ and $\left( |\Lambda_{i}| - 1 \right) \mathcal P_n + \mathcal R_{i} \leq  \sum\limits_{j\in\Lambda_{i}} \mathcal R_j  \quad \forall i \in [n+2, 2^n]$, in which case $\Omega_y=\Omega_z= [1,n]$.
\item[ii.]  $\mathcal R_{n+1}> \mathcal P_n$ and $\left( |\Lambda_{i}| - 1 \right) \mathcal R_{n+1} + \mathcal R_{i} \leq  \sum\limits_{j\in\Lambda_{i}} \mathcal R_j  \quad \forall i \in [n+2, 2^n]$, in which case $\Omega_y=[1,n+1]$ and $\Omega_z= [1,n]$.
\end{itemize}
Fix an invading strain $y_i$, $i\in [n+2, 2^n]$, with binary sequence.  First note that the inequalities in cases (i) and (ii) can be re-written as $\frac{\mathcal A_{i}}{K_n}\geq 0$ where $\mathcal A_i= -\mathcal R_i -\left( |\Lambda_{i}| - 1 \right) \mathcal R_{n+1} + \sum_{j\in\Lambda_i} \mathcal R_{j}$, and $K_n=\mathcal P_n$ if $\mathcal R_{n+1}\leq \mathcal P_n$ and $K_n=\mathcal R_{n+1}$ if $\mathcal R_{n+1}> \mathcal P_n$.  To show that $\mathcal C_i=y_i \cup \left\{ y_j \right\}_{j\in\Lambda_i}$ is a circuit with linear form $\mathcal A_i$, we proceed with a similar approach to our first proof of Theorem \ref{newthm1}. Denote the binary sequences of one-to-one network as $\mathbf k_1,\dots,\mathbf k_{n+1}$ corresponding to ordered strains $y_1,\dots, y_{n+1}$. Let $\mathcal S\subset \left\{0,1\right\}^{n+1}$ denote the subset of strain-specific extended binary sequences, where $\hat{\mathbf i}=\mathbf i1 \in  \left\{0,1\right\}^{n+1}\setminus \mathcal S$ and $\hat{\mathbf k_j}=\mathbf k_j1\in \mathcal S$ represent binary sequences extended by digit 1.  Notice that $\mathcal S$ forms a basis of $\mathbb R^{n+1}$. Indeed, it is not hard to show the row reduced echelon form of $n+1 \times n+1$ matrix is triangular.   Thus for $\hat{\mathbf i} \in  \left\{0,1\right\}^{n+1}\setminus \mathcal N$, there is a unique set of coefficients $a_{j}$, $j=1,2,\dots,n+1$, yielding $\hat{\mathbf i}$ as a linear combination of the nested network vectors:
$$ \hat{\mathbf i}=a_{1}\hat{\mathbf k}_{1}+\dots+a_{n+1}\hat{\mathbf k}_{n+1} .$$ 
The above linear system resolves as follows:
\begin{align*}
a_{2}+\dots+a_{n+1}&=i_{1}\\
& \vdots\\
\sum_{j\neq k}a_{j} &=i_{k}\\
& \vdots \\
a_{1}+\dots+a_{n+1}&=1
\end{align*}

which leads us to the set of coefficients $a_{k}$ where $k=1,\dots,n+1$ defined by the following:
\[a_{k}=1-i_{k}\quad  \text{for} \quad k=1,\dots,n\] 
\[a_{n+1}=1-(n-\sum_{k=1}^n i_k)= -\left( |\Lambda_{i}| - 1 \right) \]
Thus, with analogous argument as before, we obtain the indicated circuit $\mathcal C_i$ and corresponding linear form $\mathcal A_i$.

\end{proof}

\begin{proof}[Proof of Proposition \ref{propPair}]
First assume that pairwise interaction matrix $B$ is positive and consider the stability of the nested equilibrium, $\widetilde{\mathcal E}_{n}$ (or $\overline{\mathcal E}_{n}$), as characterized by circuits in Corollary \ref{maincor}.
We proceed by induction on the number of $(01)$ strings denoted by $s$ for the invading strain.  Suppose $s=1$ and the invading strain is written as in prior proof as $\mathbf i=1^{m_1}0^{p-m_1}1^{m_2-p}0^{n-m_2}$ and the collection of strains in the circuit is given by $\mathcal C_i=\mathbf i \cup \left\{1^{m_1}0^{n-m_1},1^{p_1}0^{n-p_1}\right\} \cup 1^{m_{2}}0^{n-m_{2}}$.  Then since the additive elements will sum to zero in the linear form $\mathcal A_i$, the only remain terms come from pairwise interactions in $B$ and can be calculated as:
\begin{align*}
-\mathcal A_i&=\sum_{j=1}^{m_1}\sum_{k>j}^{m_1}B_{jk}+\sum_{j=1}^{m_1}\sum_{k=p_1+1}^{m_2}B_{jk}+\sum_{j=p_1+1}^{m_2}\sum_{k>j}^{m_2}B_{jk} \\
& \qquad -\sum_{j=1}^{m_1}\sum_{k>j}^{m_1}B_{jk}+\sum_{j=1}^{p_1}\sum_{k>j}^{p_1}B_{jk}-\sum_{j=1}^{m_2}\sum_{k>j}^{m_2}B_{jk} \\
&= \sum_{j=1}^{p_1}\sum_{k>j}^{p_1}B_{jk}- \sum_{j=1}^{m_1}\sum_{k>j}^{p_1}B_{jk}-\sum_{j=m_1+1}^{p_1}\sum_{k>j}^{m_2}B_{jk} \\
&= -\sum_{j=m_1+1}^{p_1}\sum_{k=p_1+1}^{m_2}B_{jk}
&<0
\end{align*}
Now for the induction step, suppose that $\mathbf i$ has $s$ (01) strings.  It is not hard to see that $\mathcal A_i=\mathcal A_{\tilde i}$, for invading strain $\mathbf{\tilde i}$, where $\mathbf{\tilde i}=1^{m_1}0^{p_2-m_1}1i_{p_2+2}\dots i_n$ has $s-1$ $(01)$ strings.  Thus by induction hypothesis $-\mathcal A_i<0$, or $\mathcal A_i>0$ giving positive epistasis and stability of nested network.  

Next suppose that matrix $B$ is negative and consider the stability of $\mathcal E^{\dagger}_n$ and $\mathcal E^{\ddagger}_{n+1}$ consisting of strains $y_1,\dots,y_n,y_{n+1}$ with binary sequences $\mathbf k_1,\dots,\mathbf k_{n+1}$, where $\Lambda_j=\left\{j\right\}$ for $j=1,\dots,n$ and $\Lambda_{n+1}=\emptyset$.  We inspect the invasion circuit of a strain with sequence $\mathbf i$ outside the one-to-one network.  Let $s$ be the number of $1s$ in sequence $\mathbf i$, located at loci $\ell_1,\dots,\ell_s$, where $0\leq s\leq n-2$. Again the additive terms in $\mathcal A_{\mathbf i}$ are zero and thus we have:
 \begin{align*}
\mathcal A_i&= -\mathcal R_i -\left( |\Lambda_{i}| - 1 \right) \mathcal R_{n+1} + \sum_{j\in\Lambda_i} \mathcal R_{j} \\
&=-\sum_{j=1}^{s}\sum_{k>\ell_j} B_{\ell_j k}-(n-1-s)\sum_{j=1}^{n}\sum_{k>j}B_{jk}+(n-s)\sum_{j=1}^{n}\sum_{k>j}B_{jk}\\ & \qquad -\sum_{m=1}^{n}\left[\sum_{k>m}B_{mk}+\sum_{k=1}^{m-1}B_{km}\right] + \sum_{j=1}^{s}\left[\sum_{k>\ell_j}B_{\ell_j k}+\sum_{k=1}^{\ell_j-1}B_{k \ell_j}\right] \\
&= -\sum_{j=1}^{n}\sum_{k=1}^{j-1}B_{kj} + \sum_{j=1}^{s}\sum_{k=1}^{\ell_j-1}B_{k \ell_j} \\
&>0 \quad \text{since} \quad B<0, \ s\leq n-2.
\end{align*}
Finally, for the $\leq 1$-mutation network, only the invading strain $\mathbf i$ will have $\geq 2$ mutations, so 
 \begin{align*}
\mathcal A_{\mathbf i}&= -\mathcal R_{\mathbf i} -\left( n-|\Lambda_{i}| - 1 \right) \mathcal R_{0} + \sum_{j\notin\Lambda_i} \mathcal R_{j} \\
&=-\sum_{j=1}^{n}\sum_{k>j} B_{j k} >0 \quad \text{since} \quad B<0.  
\end{align*}
\end{proof}

\begin{proof}[Proof of Proposition \ref{propMult}]
Let $0<f_j<1, j=1,\dots,n$ represent the multiplicative fitness costs for each epitope.  We prove by induction on the number of $(01)$ strings denoted by $s$ for the invading strain.  Suppose $s=1$ and the invading strain is written as in prior proof as $\mathbf i=1^{m_1}0^{p-m_1}1^{m_2-p}0^{n-m_2}$ and the collection of strains in the circuit is given by $\mathcal C_i=\mathbf i \cup \left\{1^{m_1}0^{n-m_1},1^{p_1}0^{n-p_1}\right\} \cup 1^{m_{2}}0^{n-m_{2}}$.  Then the linear form $\mathcal A_i$ can be calculated as:
\begin{align*}
-\mathcal A_i&=\mathcal R_0\left(f_1\cdots f_{m_1}f_{p_1+1}\cdots f_{m_2}-f_1\cdots f_{m_1}+f_1\cdots f_{p_1}-f_1\cdots f_{m_2}\right) \\
& =\mathcal R_0f_1\cdots f_{m_1}(1-f_{p_1+1}\cdots f_{m_2})(f_{m_1+1}\cdots f_{p_1}-1) \\
&<0
\end{align*}
Now for the induction step, suppose that $\mathbf i$ has $s$ (01) strings.  It is not hard to see that $\mathcal A_i=\mathcal A_{\tilde i}$, for invading strain $\mathbf{\tilde i}$, where $\mathbf{\tilde i}=1^{m_1}0^{p_2-m_1}1i_{p_2+2}\dots i_n$ has $s-1$ $(01)$ strings.  Thus by induction hypothesis $-\mathcal A_i<0$.  
\end{proof}

\begin{proof}[Proof of Proposition \ref{genprop}]
Let $\mathbf i \in \left\{0,1\right\}^n \setminus \mathcal S$ with integer coordinates $\left(a_{\mathbf k}\right)_{\mathbf k\in\mathcal S}$.  Clearly $\mathcal C \times \left\{1\right\}= \mathcal S\times \left\{1\right\} \cup \left\{\mathbf i 1\right\}$ is a linearly dependent set $\mathbb R^{n+1}$ with linear form on fitnesses given by $\mathcal A_{\mathbf i}=-\mathcal R_{\mathbf i}-\sum_{\mathbf k\in\mathcal S} a_{\mathbf k} \mathcal R_{\mathbf k}$.  Furthermore any proper subset is linearly independent since $\mathcal S\times \left\{1\right\}$ is a basis of $\mathbb R^{n+1}$.  Thus $\mathcal C$ is a circuit with linear form $\mathcal A_{\mathbf i}$.  By proof of Prop \ref{prop1new}, \begin{align*} \frac{\dot y_i}{\gamma_i y_i} &= \frac{\dot y_i}{\gamma_i y_i} + \sum_{\ell=1}^{n+1} a_{\ell} \frac{\dot y_{\ell}}{\gamma_{\ell} y_{\ell}} \\ & =  \sum_{\mathbf k\in\mathcal C} a_{\mathbf k}  \frac{\dot y_{\mathbf k}}{\gamma_{\mathbf k} y_{\mathbf k}}  \\ &=x^*\sum_{\mathbf k\in\mathcal C} a_{\mathbf k} \mathcal R_{\mathbf k} \\  &= x^*\left[ \mathcal R_i +  \sum_{\ell=1}^{n+1} a_{\ell}\mathcal R_{\ell}\right]  \\ 
&= -x^*\mathcal A_{\mathbf i} . \end{align*}
\end{proof}

\subsection*{The ``$\leq 1$-mutation'' network equilibria} \label{A4}
Consider the $\leq 1$ mutation network, $\tilde{\mathcal S}_1$, consisting of wild-type and 1-mutation viral strains $y_0,y_1,\dots,y_n$ where the sequence of $y_j$ is $\mathbf j=\left(\delta_{\ell j}\right)_{\ell=1}^{n}$ for $j=1,\dots,n$.  First it is simpler to look at the $n$ strain equilibrium $\mathcal E^{1*}$ containing positive components for $y_1^*,\dots,y_n^*$, where $y_0^*=0$, i.e. leaving out the wild-type strain.
By Proposition \ref{prop33} and \eqref{genEqcon}, such a positive equilibrium $\mathcal E^{1*}=(x^*,y^{1*},z^*)$ of system (\ref{odeS}) satisfies
\begin{align*}
 x^*&=\frac{1}{\sum_{j=1}^n\mathcal R_j - (n-1)\mathcal R_0}, \quad A y^{1*} = \vec s,  \quad Az^*=\vec{\mathcal R}^1x^*-\vec 1,  \quad
\text{where} \quad  A=\vec 1 \left(\vec 1\right)^T - I_n, \\ A^{-1}&=\frac{1}{n-1}\vec 1 \left(\vec 1\right)^T - I_n, \quad y^{1}=\left(y_1,y_2,\dots,y_n\right)^T, \quad \vec{\mathcal R}^1=\left(\mathcal R_1,\mathcal R_2,\dots,\mathcal R_n\right)^T
\end{align*}
with $I_n$ is the $n\times n$ identity matrix.  Here we find that:
\begin{align*}
y_i^{*} &= \frac{1}{n-1}\left(-(n-2)s_i+\sum_{j\neq i}s_j\right), \quad x^*=\frac{n-1}{n-1+ \sum_i \mathcal R_is_i}, \quad z^*_i=\frac{1}{n-1}(\mathcal R_ix^*-1)
\end{align*}
With the immunodominance hierarchy $s_i\leq s_{i+1}$, then $y_i^{*}>0$ if $s_1>\sum_{i>1}(s_n-s_i)$ and $z_i^*>0$ if $\mathcal R_i\left(n-1-\sum_i s_i\right)>n-1$.  If these conditions are satisfied, then the equilibrium $\mathcal E^{1*}$ is saturated in the subsystem restricted to $\mathcal S_1$.  In \cite{browne2018dynamics} we showed that in the larger network of viral strains, the equilibrium $\mathcal E^{1*}$ is always unstable in the case with equal reproduction numbers $\mathcal R_1=\mathcal R_2=\dots=\mathcal R_n$.  

 Now consider invasion by the wild strain $y_0$, which can result in an $n+1$ strain equilibrium $\widetilde{\mathcal E}^{1*}$ consisting of the viral strain network $\tilde{\mathcal S}_1$.  By Proposition \ref{prop33}, the positive components $x^*, \widetilde{y}^{1*},z^*$ of  $\widetilde{\mathcal E}^{1*}$ satisfies:
 \begin{align*}
x^*&=\vec 1^{\,T} C^{-1}_{(n+1)}, \quad \text{here} \quad C=\begin{pmatrix} A & \vec{\mathcal R} \\ \vec 1^{\,T} & \mathcal R_0 \end{pmatrix}, \\
\Rightarrow x^*&=\frac{1}{\sum_{j=1}^n \mathcal R_j - (n-1) \mathcal R_0}, \\
\widetilde{y}^{1*}&=\begin{pmatrix} \vec s^{\,T} & \frac{1}{x^*}-1 \end{pmatrix} C^{-1} \\
A z^*&= \vec{\mathcal R} x^* -1, \quad \sum_{i=1}^n z^*_i = \mathcal R_0 x^*-1 . 
\end{align*}
The above equations are difficult to analyze in general, but when $x^*>0, \widetilde{y}^{1*}> \mathbf 0, z^*> \mathbf 0$, the $n+1$ strain $\leq 1$ mutation equilibrium will be positive. Furthermore, if the linear forms of invasion circuits \eqref{om_circuit} are positive, then by Proposition \ref{omprop},  $\widetilde{\mathcal E}^{1*}$ will be stable.

\bibliography{hivctlNetwork_References.bib}
\bibliographystyle{spmpsci}
\end{document}